    \newcommand{\Rmnum}[1]{\expandafter\@slowromancap\romannumeral #1@}
\def\({\left(}
\def\){\right)}
\def\[{\left[}
\def\]{\right]}
\newtheorem{thm}{Theorem}[section]
\newtheorem{prop}[thm]{Proposition}
\newtheorem{lem}[thm]{Lemma}
\newtheorem{cor}[thm]{Corollary}
\newtheorem{defn}[thm]{Definition}
\newtheorem{rem}[thm]{Remark}
\begin{document}

\title{The Continuum Limit of Toda Lattices for Random Matrices with Odd Weights
}
\author{Nicholas M. Ercolani
\thanks {Department of Mathematics, The University of Arizona, Tucson, AZ 85721--0089, ({\tt ercolani@math.arizona.edu}). Supported by NSF grant DMS-0808059.}
\and Virgil U. Pierce \thanks {Department of Mathematics, The University of Texas -- Pan American, Edinburg, TX 78539, ({\tt piercevu@utpa.edu}).  Supported by NSF grant DMS-0806219.}}



\pagestyle{myheadings} \markboth{CONT. LIMIT OF TODA LATTICES FOR RANDOM MATRICES}{
N. M. ERCOLANI AND V. U. PIERCE}\maketitle

\begin{abstract}
This paper is concerned with the asymptotic behavior of the free energy for a class of Hermitean random matrix models, with odd degree polynomial potential,  in the large $N$ limit. It continues an investigation initiated and developed in a sequence of prior works  whose ultimate aim is to reveal and understand, in a rigorous way,  the deep connections between correlation functions for eigenvalues of these random matrix ensembles on the one hand and the enumerative interpretations of their matrix moments in terms of map combinatorics (a branch of graph theory) on the other.  In doing this we make essential use of the link between the asymptotics of the random matrix partition function and orthogonal polynomials with exponential weight equal to the random matrix potential. Along the way we develop and analyze the continuum limits of both the Toda lattice equations and the difference string equations associated to these orthogonal polynomials. The former are found to have the structure of a hierarchy of  near-conservation laws universal in the potential; the latter are a novel semi-classical extension of the traditional string equations.   Our methods apply to regular maps of both even and odd valence, however we focus on the latter since that is the relevant case for this paper. These methods enable us to rigorously determine closed form expressions for the generating functions that enumerate trivalent maps, in general implicitly, but also explictly in a number of cases.
\end{abstract}

\begin{keywords}
random matrices, Toda lattice, Motzkin paths, string equations, conservation law hierarchies, map enumeration

\smallskip

{\bf subject classifications.} 
05C30, 34M55, 60B20
\end{keywords}

\section{Introduction} \label{sec:1}
The general class of matrix ensembles we want to analyze has probability measures of the form
\begin{eqnarray} \label{RMT}
d\mu_{t_{j}} &=& \frac{1}{{Z}^{(n)}_N(t_{j})}\exp\left\{-N \mbox{ Tr} [V_j(M, t_{j})]\right\} dM,\,\, \mbox{where}\\
\label{I.001b} V_j(\lambda; \ t_{j} ) &=&  \frac{1}{2} \lambda^{2} +  t_{j} \lambda^{j}
\end{eqnarray}
defined on the space $\mathcal{H}_n$ of $n \times n$ Hermitean matrices, $M$, and with $N$ a positive parameter. The normalization factor  ${Z}^{(n)}_N(t_{j})$, which serves to make $\mu_t$  a probability measure, is called the {\it partition function} of this unitary ensemble. Previous works, \cite{EM03, EMP08, KP09, Er09}, have focussed on the case of {\it even} $j$ for which the measure (\ref{RMT}) is indeed normalizable for $t_j > 0$. The case of {\it odd} $j$ is more complicated; it is not clear prima facie how to initiate a rigorous analysis in this setting. 

Very recently, however, a generalization of the {\it equilibrium measure} (which governs the leading order behavior of the free energy associated to (\ref{RMT}))  was developed and applied to this problem, \cite{BD10}. It is based on a complex contour  deformation of the variational problem for the leading order of the free energy that was motivated by new ideas in approximation theory related to complex Gaussian quadrature of integrals with high order stationary points \cite{DHK}. 

This analysis shows that an equilibrium measure associated to the weight $\exp\left\{-N\left(\frac{1}{2}\lambda^2 + t_{2\nu + 1} \lambda^{2\nu+1} \right)\right\}$, with dominant exponent odd, will exist. It is constructed explicitly for the case of a cubic weight, $\nu = 1$, in \cite{BD10}.  A detailed study of the explicit construction for general odd-dominant weights, as opposed to just the existence argument which may be deduced from \cite{DHK}, will be taken up elsewhere.
The boundaries of the support of this equilibrium measure are determined by the simultaneous solutions of the two equations:
\begin{align}
\int_{A}^B \frac{V'(\lambda)}{\sqrt{(\lambda-A)(\lambda-B)}} d\lambda &= 0 \\
\int_{A}^B \frac{\lambda V'(\lambda)}{\sqrt{(\lambda-A)(\lambda-B)}} d\lambda &= 2 \pi i .
\end{align}

One can compute these integrals which, in the cubic case, $V(\lambda) = \frac{1}{2} \lambda^2 + t_3 \lambda^3 $, leads to a pair of equivalent algebraic equations determining $A$ and $B$.
\begin{align}
\frac{1}{2} (A + B) + 3 t_3 ( \frac{3}{8} A^2 + \frac{1}{4} AB + \frac{3}{8} B^2 ) &= 0 \\
( \frac{3}{8} A^2 + \frac{1}{4} AB + \frac{3}{8} B^2 ) + 3 t_3 \left( \frac{5}{16} A^3 + \frac{3}{16} A^2 B + \frac{3}{16} A B^2 + \frac{5}{16} B^3 \right) &= 2 \,.
\end{align}

It is natural to make the following change of variables:  $z_0 = \frac{1}{16} \left(B-A\right)^2$ and $u_0 =\frac{1}{2}\left( A + B\right)$.  
The corresponding algebraic equations for $z_0$, $u_0$ are
\begin{align} \label{alg.1}
u_0 + 3 t_3 ( u_0^2 + 2 z_0) &= 0 \\ \label{alg.2}
u_0^2 + 2z_0 + 3 t_3 ( u_0^3 + 6 u_0 z_0) &= 2 \,.
\end{align}

With the above notations, the interval of support of the equilibrium measure in the cubic case may be written as
\begin{eqnarray*}
[A,B] &=& \left[ u_0 -2\sqrt{z_0}, u_0 + 2\sqrt{z_0}\, \right].
\end{eqnarray*}
The equilibrium measure has a variational characterization \cite{DHK} and from the variational equations the measure can be explicitly determined to be
\begin{eqnarray}\label{eqmeas}
\frac{1}{2\pi i} \left(1 + 3t_3(\lambda +  u_0)\right) \chi_{[A,B]}(\lambda)\sqrt{(\lambda - A)(\lambda - B)}.
\end{eqnarray}
A minimal basis for the ideal of relations given by (\ref{alg.1}) and (\ref{alg.2}) is 
\begin{eqnarray}
\label{ideal1} 3t_3 u_0^2 + u_0 +6t_3z_0 &=& 0\\
\label{ideal2} - 6t_3z_0 u_0+ (1- z_0) &=& 0\,.
\end{eqnarray}
It is straightforward to use (\ref{ideal1}) to eliminate $z_0$ in (\ref{ideal2}) and get
\begin{eqnarray}
\label{f_eqn} 18 t_3^2 u_0^3 + 9t_3u_0^2 + u_0  + 6 t_3 = 0\,.
\end{eqnarray}
The resultant of  (\ref{alg.1}) and (\ref{alg.2}) eliminating $u_0$ is given by 
\begin{eqnarray}
R(z_0) &=& 
\left| 
\begin{array}{ccc}
3t_3  & 1  & 6t_3 z_0  \\
-6 t_3 z_0  &  1-z_0  & 0   \\
0  &  -6t_3 z_0   & 1-z_0   
\end{array}
\right|\\
\label{g_eqn}&=& 3 t_3 \left(72 t_3^2 z_0^3 - z_0^2 + 1 \right) = 0\,.
\end{eqnarray}
We note that (\ref{g_eqn}) has a form that is reminiscent of the implicit equation for $z_0$ that one has in the case of even weights \cite{EMP08, Er09}.

For general polynomial potentials $V$ with even dominant power, it is possible to establish the following fundamental asymptotic expansion \cite{EM03}, \cite{EMP08} of the free energy associated to the partition function.  More precisely, those papers consider potentials of the form 
\begin{eqnarray}\label{genpot}
V(\lambda) &=& \frac{1}{2} \lambda^{2} + \sum_{j=1}^{J} t_{j} \lambda^j,
\end{eqnarray}
with $J=2\nu$.  Introducing a renormalized partition function, which we refer to as a {\it tau function},
\begin{equation} \label{tausquare}
\tau^2_{n,N}(\vec{t}\;) =  \frac{Z^{(n)}_{N}(\vec{t}\;)}{Z^{(n)}_{N}(0)},
\end{equation}
where $\vec{t} = (t_1, \dots t_{J}) \in \mathbb{R}^J$, this expansion has the form

\begin{eqnarray}
\label{I.002} \ \ \ \log \tau^2_{n,N}(\vec{t}\;) =
n^{2} e_{0}(x, \vec{t}\;) + e_{1}(x, \vec{t}\;) + \frac{1}{n^{2}} e_{2}(x, \vec{t}\;) + \cdots +  \frac{1}{n^{2g-2}} e_{g}(x, \vec{t}\;) + \dots
\end{eqnarray}
as $n, N \to \infty$ with $x = \frac{n}{N}$ held fixed. Moreover, for $\mathcal{T} =  (1-\epsilon, 1 + \epsilon) 
\times \left(\{|\vec{t}\;| < \delta\} \cap \{t_{J}> 0\}\right) $ for some $\epsilon > 0, \delta > 0$,
\begin{enumerate}[(i)]

\item \label{unif} the expansion is uniformly valid on a compact subsets of $\mathcal{T}$;

\item \label{analyt} $e_g(x, \vec{t}\;)$ extends to be complex analytic in 
$\mathcal{T}^{\mathbb{C}} =\left\{(x,\vec{t}\;) \in \mathbb{C}^{J+1} \big| |x - 1| < \epsilon, |\vec{t}| < \delta\right\}$;

\item \label{diff} the expansion may be differentiated term by term in $(x,\vec{t}\;)$ with uniform error estimates as in (\ref{unif})

\end{enumerate}
The meaning of (\ref{unif}) is that for each $g$ there is a constant, $K_g$, depending only on $\mathcal{T}$  and $g$ such that 
\begin{equation*}
\left| \log \tau^2_{n,N} \left(\vec{t}\;\right) - n^2 e_0(x, \vec{t}\;)  -  \dots  - \frac{1}{n^{2g-2}} e_{g}(x, \vec{t}\;) \right|  \leq \frac{K_g}{n^{2g}}
\end{equation*}
for $(x, \vec{t}\;)$ in a compact subset of $\mathcal{T}$. The estimates referred to in (\ref{diff}) have a similar form with $\tau^2_{n,N}$ and $e_{j}(x, \vec{t}\;)$ replaced by their mixed derivatives (the same derivatives in each term) and with a possibly different constant.
\smallskip

This result is based on the analysis of a Riemann-Hilbert problem (RHP) for orthogonal polynomials on $\mathbb{R}$ whose exponential weight is associated to the weight of the random matrix measure. This  RHP was first introduced in \cite{FIKII} for studying the asymptotic behavior of random matrix partition functions. The relevant analysis of this RHP for the above result was carried out in \cite{EM03} by the method of {\it nonlinear steepest descent} \cite{Deift}. In particular, in \cite{EM03} it is shown that the constants $K_g$ are explicitly determinable in terms of Airy asymptotics stemming from the Airy parametrix that is used in the vicinity of the endpoints of the support of the equilibrium measure to explicitly solve the RHP.
 
This result extends directly to the case of $V$ with odd dominant power  (i.e. with $J = 2\nu+1$) once one has the existence of the equilibrium measure (which is  explicitly given by (\ref{eqmeas}) in the cubic case). This involves studying the asymptotic behavior of the appropriate non-Hermitean orthogonal polynomials for the given odd weight (see section \ref{sec:2}). More precisely, the Riemann-Hilbert analysis of \cite{EM03} carries over mutatits mutandis to a Riemann-Hilbert problem for the non-Hermitean orthogonal polynomials with the principal difference being that the contour along which the jump matrices are originally defined is no longer the real axis but rather a deformed contour \cite{DHK, BD10}. 
\bigskip

Our principal interest in this paper is to better understand the analytical structure of the coefficients $e_g$ for potentials of the form (\ref{genpot}) when $J$ there is odd. For $J$ even these coefficients provide a wealth of information about problems in combinatorial enumeration as well as about eigenvalue correlations for random matrices \cite{Er09}. One expects to see similar connections in the case of odd $J$ but this is much less developed.

 Despite the fact that we are focussed on potentials of the form (\ref{I.001b}) that only depend on a single $t_j$, we will need to appeal to properties (\ref{unif} - \ref{diff}) for other parameters as well, specifically $t_1$ and $x$. That is because in characterizing the $e_g$ we will want to make use of differential relations in 
$t_1$ and $t_j$ between these coefficients as well as certain rescalings of these variables in terms of $x$: 
\begin{eqnarray}
\label{t1-scaing} s_1 &=& x^{-\frac12}t_1\\
\label{tj-scaling} s_j  &=& x^{\frac j2 -1}t_j.
\end{eqnarray}

We will also be studying the tau-functions (\ref{tausquare}) as functions of lattice variables on the non-negative integers, indexed by $n$, which depend analytically on $(x, \vec{t}\;)$ as parameters.  Certain logarithmic derivatives of the tau functions with respect to these parameters satisfy difference equations which, in this context, we refer to as {\it difference string equations}. Furthermore they satisfy differential (in  $t_j$) - difference (in $n$) equations classically known as the {\it Toda lattice equations}. Of particular relevance for describing and analyzing the $e_g$ will be the continuum limits of the difference parts of all these equations. These involve, as independent variables, $s_j$ and $s_1$ as well as a continuous "spatial" variable $w$ in terms of which the differencing in our string and Toda equations may be regarded as a discretization. At the final stage, after one has recursively solved the continuum limit hierarchies for $e_g$ (and various of their derivatives) as functions of $(x, s_1, s_j, w)$, the auxiliary variables are set to $(x, s_1, w) = (1, 0, 1)$ to arrive at the desired closed formulae for $e_g(s_j) = e_g(t_j)$. (Note that when $x=1, s_j = t_j$.) 

The continuum limit of the difference string equations is a hierarchy of nonlinear odes in $w$ while that of the Toda equations is a hierarchy of quasi-linear pdes in $s_j$ and $w$ where in both cases $g$ indexes the respective hierarchy.  One will have these hierarchies for each value of  $j$. 
\bigskip

We take a moment here to briefly explain the connection of the expansion (\ref{I.002}) to combinatorial enumeration that was alluded to earlier.   The $e_g(t_j) = e_g\left(x = 1 , \vec{t} = (0, \dots, 0, t_j) \right)$ (we have set $J = j$ here) are generating functions for the enumeration of {\it $j$-regular maps}.  A map is an embedding of a labelled graph into a compact, oriented and connected  surface $X$ with the requirement that the complement of the graph in $X$ should be a disjoint union of simply connected open sets. More specifically the asymptotic expansion coefficient $e_g$ is a generating function for enumerating  (topological) equivalence classes of maps on a Riemann surface of genus $g$ ($g$-maps) whose embedded graphs are $j$-regular: 
\begin{equation} \label{genusexpA}
e_g(t_j) =  \sum_{m \geq 1}\frac{1}{m!} (-t_j)^{m}\kappa^{(j)}_g(m)
\end{equation}
in which each of the Taylor expansion coefficients $\kappa^{(j)}_g(m)$ is the number of $g$-maps with $m$  $j$-valent vertices. Consequently the Taylor coefficients of $e_g(t_j),  \kappa^{(j)}_g(m)$,  are non-negative integers.  

The notion of $g$-maps was introduced by Tutte and his collaborators in the '60s \cite{Tu} as a means to study the four color conjecture. However,
this subject soon took on a life of its own as a sub-topic of combinatorial graph theory. In the early '80s a group of physicists \cite{BIZ80} discovered a profound connection between the enumerative problem for labelled $g$-maps and diagrammatic expansions of random matrix theory. That seminal work was the basis for bringing asymptotic analytical methods into the study of maps and other related combinatorial problems. 

The {\it trivalent} case of map enumeration (which corresponds to the random matrix ensemble with cubic weight) is of particular relevance for problems in discrete geometry since the corresponding maps are dual to triangulations which are stable discretizations of the associated Riemann surfaces. In particular, in Section \ref{sec:55}, in order to explicitly solve the continuum difference string equations we will use the fact that the coefficient $z_0$ appearing in the equilibrium measure is the generating function for enumerating labelled trivalent maps on a sphere connected to two marked univalent vertices. This is dual to the generating function that enumerates ordered triangulations of the sphere with two marked loops. One also expects $z_0$ to have a combinatorial interpretation analogous to the one it has in the cases of even valence which is as a generating function for the Catalan numbers (in the case of valence 4) and generalized Catalan numbers in the cases of higher even valence \cite{EMP08, Er09}. That interesting topic will be taken up elsewhere.

Recently, in \cite{Er09}, closed form expressions for all of the $e_g(t_j)$ were derived for the cases of even $j$. This was based on the continuum limit of Toda lattice equations, developed in \cite{EMP08}, that are closely related to the random matrix ensembles (\ref{RMT}). In this paper we will derive the analogous continuum $j$-Toda equations for odd $j$ as well as the related hierarchy of continuum difference string equations. 
From these we will derive explicit closed form expressions for the $e_g(t_3)$ for some low values of the genus $g$.
For example we will find that 
\begin{align} 
\nonumber e_0(t_3) &= \frac{1}{2} \log(z_0) + \frac{1}{12} \frac{ (z_0-1)(z_0^2-6 z_0 - 3)}{(z_0+1)}\,, \\
\label{egs} e_1(t_3) &= -\frac{1}{24} \log\left( \frac{3}{2} - \frac{z_0^2}{2} \right)\,, \\
\nonumber e_2(t_3) &= \frac{1}{960} \frac{ (z_0^2-1)^3 (4 z_0^4 - 93 z_0^2 -261)}{(z_0^2 - 3)^5}\,,
\end{align}
where $z_0$ is given by the boundary of the equilibrium measure discussed above and is implicitly related to $t_3$ by the polynomial equation (\ref{g_eqn})
\begin{equation*}
1 = z_0^2 - 72 t_3^2 z_0^3 \,.
\end{equation*}
We expect that our methods will ultimately enable one to derive closed form expressions for the $e_g(t_j)$ for all odd $j$ and all genus $g$.  In \cite{BD10} the Taylor coefficients of $e_0$ and $e_1$ are calculated for the cubic case ($j=3$). These derivations were based on a different approach using the classical string equations. 
\bigskip

The outline of this paper is as follows. In section \ref{sec:2} we summarize the necessary background on non-Hermitean orthogonal polynomials that is the basis for the validity of the asymptotic expansions that we study as well as their continuum limits. This section also presents a {\it path formulation} for both the Toda lattice equations and the difference string equations associated to these orthogonal polynomials. The latter in particular represent a novel method for the study of random matrix continuum limits. The continuum limits themselves are derived in Section \ref{sec:44},  for general odd valence, at least for the leading order and higher order homogeneous terms. To help make this paper more self-contained, a preliminary subsection of Section \ref{sec:44} is included that summarizes prior results on which the work in this paper is based. This part also contains a new result: a description of the asymptotic structure of the diagonal recursion coefficients for the orthogonal polynomials. This result was not needed previously because these diagonal coefficients vanish in the case of even potentials. Although this result has a similar character to what had previously been found for the off-diagonal recursion coefficients, the derivation is technically more complicated due to the fact that the Hirota expression (\ref{anN-preexpansion}) for the diagonal coefficients is given in terms of a leading order differential-difference operator rather than the pure second derivative (\ref{bnN-preexpansion}) for the off-diagonal coefficients. In Section \ref{sec:55} we specialize to the trivalent case and derive the full Toda and difference string equations up to order $g=1$. Moreover, we illustrate the use of these methods by explicitly solving the $g=1$ difference string equations.  Finally in Section \ref{sec:4} we derive a recursive method for expressing each generating function $e_g(t_3)$ in terms of $z_0$ and use this  to establish the explicit formulae (\ref{egs}).

\section{The Role of Orthogonal Polynomials and their Asymptotics} \label{sec:2}

Let us recall the classical relation between orthogonal polynomials and the space of square-integrable functions on the real line, $\mathbb{R}$, with respect to exponentially weighted measures. In particular, we want to focus attention on weights that correspond to the random matrix potentials, $V(\lambda)$, (\ref{I.001b}), that interest us here.  To that end we consider the Hilbert   space $H = L^2\left(\mathbb{R}, e^{-NV(\lambda)}\right)$ of weighted square integrable functions. This space has a natural polynomial basis,
$\{\pi_n(\lambda)\}$, determined by the conditions that
\begin{eqnarray*}
\pi_n(\lambda) &=& \lambda^n + \,\,\mbox{lower order terms}\\
\int \pi_n(\lambda) \pi_m(\lambda) e^{-NV(\lambda)} d\lambda &=& 0\,\, \mbox{for}\,\, n \ne m.
\end{eqnarray*}
For the construction of this basis and related details we refer the reader to \cite{Deift}.

With respect to this basis, the operator of multiplication by $\lambda$ is representable as a semi-infinite tri-diagonal matrix, 
\begin{equation} \label{multop}
\mathcal{L} = \begin{pmatrix} a_0 & 1 &  \\
                              b^2_1 & a_1 & 1 \\
			          & b^2_2 & a_2  & \ddots \\
                                  &      & \ddots & \ddots 
\end{pmatrix}\,.
\end{equation}
$\mathcal{L}$ is commonly referred as the {\it recursion operator} for the orthogonal polynomials and its entries as {\it recursion coefficients}. We remark that often a basis of orthonormal, rather than monic orthogonal, polynomials is used to make this representation.  In that case the 
analogue of (\ref{multop}) is a symmetric tri-diagonal matrix. As long as the coefficients $\{b_n\}$ do not vanish, these two matrix representations can be related through conjugation by a semi-infinite diagonal matrix of the form $\mbox{diag}\,(1, b^{-1}_1, \left(b_1 b_2\right)^{-1}, \left(b_1 b_2 b_3 \right)^{-1}, \dots)$. 

Similarly, the operator of differentiation with respect to $\lambda$, which is densely defined on ${H}$, has a semi-infinite matrix  representation, 
$\mathcal{D}$, which we now determine. Observe that
\begin{eqnarray}
\nonumber \int \pi_n^\prime(\lambda) \pi_m(\lambda) e^{-NV(\lambda)} d\lambda &=& 0 \,\, \mbox{for}\,\, n \leq m;\\
\nonumber \int \pi_n^\prime(\lambda) \pi_m(\lambda) e^{-NV(\lambda)} d\lambda &=& N \int \pi_n(\lambda) V^\prime(\lambda) \pi_m(\lambda) e^{-NV(\lambda)} d\lambda\,\, \mbox{for}\,\, n > m\\
\label{diffrep1} &=&  N \int \pi_n(\lambda) \left\{\lambda + jt \lambda^{j-1} \right\} \pi_m(\lambda) e^{-NV(\lambda)} d\lambda;
\end{eqnarray}
hence,
\begin{eqnarray}
\label{diffrep2}\mathcal{D} &=&  N  \left(\mathcal{L} + j t \mathcal{L}^{j-1}\right)_-
\end{eqnarray}
where the "minus" subscript denotes projection onto the strictly lower part of the matrix. 

From the canonical (Heisenberg) relation on $H$, one sees that
\begin{eqnarray*}
\left[\partial_\lambda, \lambda\right] &=& 1,
\end{eqnarray*}
where here $\lambda$ in the bracket  and $1$ on the right hand side are regarded as multiplication operators. Using this and orthogonality one has
\begin{eqnarray*}
\int \left\{\left[\partial_\lambda, \lambda\right]  \pi_n(\lambda)\right\} \pi_m(\lambda) e^{-NV(\lambda)} d\lambda &=& \kappa_n \delta_{nm},
\end{eqnarray*}
with $\kappa_n > 0$;
\begin{eqnarray*}
&=& \int  \pi_n(\lambda) \left\{\left[\lambda, -\partial_\lambda \right] \pi_m(\lambda) e^{-NV(\lambda)} \right\} d\lambda,
\end{eqnarray*}
where we note that under this {\it transposition} of the bracket within the inner product the order of composition of the operators has interchanged and the minus sign on the derivative comes from integrating by parts;
\begin{eqnarray*}
&=& \int  \pi_n(\lambda) \sum_\ell \left[  \mathcal{L}, \mathcal{D}\right]_{\ell,m} \pi_\ell(\lambda) e^{-NV(\lambda)} d\lambda,\\
\end{eqnarray*}
by (\ref{diffrep1}) and (\ref{diffrep2});
\begin{eqnarray*}
&=& \kappa_n \left[ \mathcal{L}, \mathcal{D} \right]_{n,m}.
\end{eqnarray*}
It follows that 
\begin{eqnarray*}
\left[  \mathcal{L}, \mathcal{D}\right] = I.
\end{eqnarray*}

From this observation one deduces a {\it fundamental relation} among the recurrence coefficients,
\begin{eqnarray} \label{fl1} 
\left[ \mathcal{L}, \left(\mathcal{L} + j t \mathcal{L}^{j-1}\right)_-  \right] &=& \frac{1}{N} I .
\end{eqnarray}
The relations implicit in (\ref{fl1}) have been referred to as {\it string equations} in the physics literature, but their origins go further back to the classical literature in approximation theory \cite{Mag}. In fact the relations that one has, row by row, in (\ref{fl1}) are actually successive differences of consecutive string equations in the usual sense. However, by continuing back to the first row one may recursively de-couple these differences to get the usual equations. To make this distinction clear we will refer to the row by row equations that one has directly from (\ref{fl1}) as {\it difference string equations}.

$\mathcal{L}$ depends smoothly on the coupling parameter $t_j$ in the potential $V(\lambda)$ (see \ref{I.001b}). The explicit dependence can be determined from the fact that multiplication by $\lambda$ commutes with differentiation by $t_j$ and the following consequence of the orthogonality relations:
\begin{align*}
  \int  \frac{\partial}{\partial t_j}\left( \pi_n(\lambda)\right) \pi_m(\lambda) e^{-NV(\lambda)} d\lambda 
&= N \int  \lambda^{j} \pi_n(\lambda)  \pi_m(\lambda) e^{-NV(\lambda)} d\lambda\,,\,\, \mbox{for}\,\, n > m.
\end{align*}
This yields our {\it second fundamental relation} on the recurrence coefficients,
\begin{eqnarray} \label{fl2}
\frac{1}{N}  (\mathcal{L})_{t_{j}} &=& \left[\left(\mathcal{L}^{j}\right)_- , \mathcal{L}\right]\,,
\end{eqnarray}
which is equivalent to  the $j^{th}$ equation of the semi-infinite Toda Lattice hierarchy.

\subsubsection{Odd Weights and Non-Hermitean Orthogonal Polynomials} When $j$ is odd, $H$ as defined above ceases to be a finite measure space; however, by deforming the real axis to an appropriate complex contour  $\Gamma$  one can define a non-Hermitean analogue of orthogonal polynomials with respect to this contour and weight, \cite{BD10, DHK}. These polynomials may not be defined for all values of $n$ but asymptotically they exist (i.e., for $n \geq n_0 - 1$ for some sufficiently large integer $n_0$) \cite{DHK}. Thus one can work on the space   $H = L^2\left(\Gamma, e^{-NV(\lambda)}\right)$ of weighted square integrable functions on the deformed contour  $\Gamma$.  Of course in doing this deformation one can no longer relate the construction of the orthogonal polynomials to an inner product on $H$ as was done before (hence the nomenclature {\it non-Hermitean}). Instead one works with the non-degenerate complex-valued bilinear form that integration naturally gives us. One can then, as we will shortly see, define a basis of polynomials whose pairwise product integrates to zero if they are of different degree. One can still use this basis to represent the recurrence operators and related operators through the bilinear form. To that end let
\begin{equation*}
H = \left\{\mbox{span of}\,\, 1, \lambda, \dots \lambda^{n_0 - 1 }, \pi_{n_0}(\lambda), \pi_{n_0 + 1}(\lambda), \dots \right\}
\end{equation*} 
where $\pi_n(\lambda)$ is a monic polynomial of degree $n$ such that
\begin{equation*}
0 = \int_\Gamma \pi_n(\lambda) \lambda^k e^{-NV(\lambda)} d \lambda \,\, \mbox{for}\,\, k = 0, \dots, n-1.
\end{equation*}
With respect to this basis, multiplication by $\lambda$ is represented as
\begin{equation}
\mathcal{L} = \left( \begin{array}{c|c}
 \mbox{\resizebox{0.5cm}{!}{$\star$}} &
\mbox{\resizebox{0.25cm}{!}{0}} \\ \hline
\begin{matrix}
\begin{matrix} \alpha_0 & \alpha_1 & \dots & \alpha_{n_0-1} \end{matrix} \\
\vspace{0.18cm} \\ \mbox{\resizebox{0.25cm}{!}{0}} \\ \vspace{0.18cm}
\end{matrix}
&
\begin{matrix}
 a_{n_0 } & 1 &  \\
 b^2_{n_0 + 1} & a_{n_0+1} & 1 \\
           & b^2_{n_0 + 2} & a_{n_0+2}  & \ddots \\
           &      & \ddots & \ddots
\end{matrix}
\end{array}\right),
\end{equation}
 where
$b^2_{n_0} \pi_{n_0-1}(\lambda) = \alpha_{n_0-1} \lambda^{n_0-1} + \dots + \alpha_{1} \lambda + \alpha_{0}.$

One may apply standard methods of orthogonal polynomial theory \cite{Szego} to the lower right semi-infinite block of this matrix
\begin{equation} \label{nHmultop}
\hat{\mathcal{L}} = \begin{pmatrix} a_{n_0 } & 1 &  \\
                              b^2_{n_0 + 1} & a_{n_0 + 1} & 1 \\
			          & b^2_{n_0 + 2} & a_{n_0 + 2}  & \ddots \\
                                  &      & \ddots & \ddots 
\end{pmatrix}\,.
\end{equation}
In particular there is a unique semi-infinite lower unipotent matrix $A$ such that
 \begin{eqnarray*}
\hat{\mathcal{L}} &=& A^{-1}\epsilon A
\end{eqnarray*}
where
\begin{eqnarray*}
\epsilon &=& \begin{pmatrix} 0 & 1 &  \\
                              0 & 0 & 1 \\
			          & 0 & 0  & \ddots \\
                                  &      & \ddots & \ddots 
\end{pmatrix}\,.
\end{eqnarray*}
(For a description of the construction of such a unipotent matrix we refer the reader to Proposition 1 of \cite{EM01}.) 

This is related to the Hankel matrix 
\begin{eqnarray*}
\mathcal{H} &=&  \begin{pmatrix} c_{0} & c_1 &  c_2 & \dots\\
                              c_1 & c_2 & c_3 & \dots \\
			      c_2    & c_3 & c_4  & \dots \\
                                \vdots &  \vdots   & \vdots & \ddots 
\end{pmatrix},
\end{eqnarray*}
where
\begin{eqnarray*}
c_k &=& \int_\Gamma \lambda^k e^{-NV(\lambda)} d \lambda
\end{eqnarray*}
is the $k^{th}$ moment of the measure, by
\begin{eqnarray*}
A D A^{\dagger} &=& \begin{pmatrix} c_{n_0 + 1} & c_{n_0 + 2} &  c_{n_0 + 3} & \dots\\
                              c_{n_0 + 2} & c_{n_0 + 3} & c_{n_0 + 4} & \dots \\
			      c_{n_0 + 3}    & c_{n_0 + 4} & c_{n_0 + 5}  & \dots \\
                                \vdots &  \vdots   & \vdots & \ddots 
\end{pmatrix}, \\
D &=& \mbox{diag}\,\left\{ d_{n_0 + 1}, d_{n_0 + 2} \dots \right\}
\end{eqnarray*}
with
\begin{eqnarray*}
d_n &=& \frac{\det \mathcal{H}_n}{\det \mathcal{H}_{n-1}}
\end{eqnarray*}
where $\mathcal{H}_n$ denotes the $n \times n$ principal sub-matrix of $\mathcal{H}$ whose determinant may be expressed as (see Szeg\"o's classical text \cite{Szego}),
\begin{eqnarray}
\nonumber \det \mathcal{H}_n &=& n! \hat{Z}_N^{(n)} \left(t_1,  t_{2\nu+1} \right)\\
\label{szego} \hat{Z}_N^{(n)} \left(t_{2\nu+1}\right) &=& \int_\Gamma \cdots \int_\Gamma \exp\left\{
-N^{2}\left[\frac{1}{N} \sum_{m=1}^{n} V(\lambda_{m}; t_1, \ t_{2\nu + 1})  - \right.\right.
\nonumber 
\\
&& \phantom{\int_\Gamma \cdots \int_\Gamma \exp}\hspace{2cm}
\left. \left.
\frac{1}{N^{2}} \sum_{m\neq \ell} \log{|
\lambda_{m} -
\lambda_{\ell} | } \right] \right\}  d^{n} \lambda,
\end{eqnarray}
where $V(\lambda; t_1, \ t_{2\nu + 1}) = \frac12 \lambda^2 + t_1 \lambda +  t_{2\nu+1} \lambda^{2\nu+1}$.
\medskip

\begin{rem} As mentioned in the introduction, we will sometimes need to extend the domain of the tau 
functions to include other parameters, such as $t_1$, as we have done here. Doing this presents no difficulties in the prior constructions. 
\end{rem}\smallskip

The diagonal elements may in fact be expressed as
\begin{eqnarray*}
d_n &=& \frac{\tau^2_{n, N}}{\tau^2_{n-1, N}} d_n(0)
\end{eqnarray*}
where 
\begin{eqnarray}
\label{szego1}\tau^2_{n, N} &=& \frac{\hat{Z}_N^{(n)}\left(t_1, t_{2\nu+1}\right)}{\hat{Z}_N^{(n)}\left(0,0\right)}\\ 
\label{szego2} &=& \frac{{Z}_N^{(n)}\left(t_1, t_{2\nu+1}\right)}{{Z}_N^{(n)} \left(0,0\right)}
\end{eqnarray}
which agrees with the definition of the tau function given in (\ref{tausquare}). The second equality follows by reducing the unitarily invariant  matrix integrals in (\ref{szego2}) to their diagonalizations which yields (\ref{szego1}) \cite{EM03}. This also provides the connection to eigenvalue correlations alluded to in the introduction.
Tracing through these connections, from $\hat{\mathcal{L}}$ to $D$, one may derive the basic identity relating the random matrix partition function to the recurrence coefficients,
\begin{eqnarray}\label{Hirota}
b^2_{n,N} &=& \frac{\tau^2_{n+1, N}\tau^2_{n-1, N}}{\tau^4_{n, N}} b^2_{n,N}(0)
\end{eqnarray}
which is the basis for our analysis of continuum limits in the next section. With this, the fundamental relations (\ref{fl1}) and (\ref{fl2}) continue to hold in the non-Hermitean case for $n$ sufficiently large.
\bigskip

\begin{rem} It needs to be noted that the bilinear form used to define orthogonal polynomials and recurrence coefficients in this section depends on the choice of the contour $\Gamma$ and therefore so do these polynomials and coefficients. However, it does not affect the asymptotics of these objects. This is a consequence of the fact that outside the locus of support of the equilibrium measure, one has exponential decay of the asymptotics. The deformation of $\Gamma$ away from $\mathbb{R}$ is taken in these exponentially decaying regimes. We refer the reader to \cite{FIKII, DK, Er09} where similar issues concerning non-Hermitean orthogonal polynomials and their asymptotics are discussed but for a different problem.
\end{rem}
\medskip

\begin{rem} The fact that  the lower degree recurrence coefficients may not exist in the non-Hermitean case creates technical difficulties in deriving the usual string equations since, as was pointed out earlier,  this derivation requires that one be able to recursively relate higher degree recurrence coefficients all the way back to degree 0.  However, this issue poses no problems for the asymptotic difference string equations nor for the  asymptotic Toda equations which are what will be used in this paper. 
\end{rem} \medskip

\subsubsection{Path Weights and Recurrence Coefficients} In order to effectively utilize the relations (\ref{fl1}, \ref{fl2}) it will be essential to keep track of how the matrix entries of powers of the recurrence operator, $\mathcal{L}^j$, depend on the original recurrence coefficients. That is best done via the combinatorics of weighted walks on the index lattice of the orthogonal polynomials. The relevant walks here are {\it Motzkin paths} which are walks, $P$, on $\mathbb{Z}$ which, at each step, can increase by 1, decrease by 1 or stay the same. Set 
\begin{equation}  \label{motzkin}
\mathcal{P}^j(m_1, m_2) = \,\, \mbox{the set of all Motzkin paths of length $j$ from $m_1$ to $m_2$}.
\end{equation}
Then step weights, path weights and the $(m_1, m_2)$-entry of  $\mathcal{L}^j$ are, respectively, given by 
\begin{eqnarray} \nonumber
\omega(p) &=& \left\{ 
\begin{array}{cc}
 1 & \mbox{if the $p^{th}$ step moves from $n$ to $n+1$ on the lattice}\\
  a_n & \mbox{if the $p^{th}$ step stays at $n$}\\
 b^2_n & \mbox{if the $p^{th}$ step moves from $n$ to $n-1$}  
\end{array}
\right.\\
\nonumber \omega(P) &=& \prod_{\mbox{steps}\,\, p \in P} \omega(p)\\
\label{weights} \mathcal{L}^{j}_{m_1, m_2} &=& \sum_{P \in \mathcal{P}^j(m_1, m_2)} \omega(P).
\end{eqnarray}

\subsection{Motzkin Representation of the Difference String equations} \label{motzstring}

The {\em difference string equations}  are given (for the $j$-valent case) by  (\ref{fl1}):
\begin{equation}
\left[ \mathcal{L}, \left(\mathcal{L} + j t \mathcal{L}^{j-1} \right)_- \right] = \frac{1}{N} I \,.
\end{equation}
This leads to a pair of equations:

\begin{itemize}

\item
{\em the $(n+1, n)$ entry} gives
\begin{align} \label{string-star1}
0 &= ( a_{n+1} - a_n ) \left( \mathcal{L} + j t \mathcal{L}^{j-1} \right)_{n+1, n} + 
\left( \mathcal{L} + j t \mathcal{L}^{j-1} \right)_{n+2, n} \nonumber \\
& \hspace{2cm}
 - \left( \mathcal{L} + j t \mathcal{L}^{j-1} \right)_{n+1, n-1}\,, 
\end{align}

\item
and {\em the $(n, n)$ entry} gives
\begin{equation} \label{string-star2}
\frac{1}{N} = \left(\mathcal{L} + j t \mathcal{L}^{j-1} \right)_{n+1, n} - 
\left(\mathcal{L} + j t \mathcal{L}^{j-1} \right)_{n, n-1} \,.
\end{equation}

\end{itemize}
Let us work this out in terms of Motzkin paths for the particular case of $j=3$.  The equations for the diagonal and subdiagonal  equations reduce respectively  to 
\begin{eqnarray*}
\frac{x}{n} &=& \left(\mathcal{L}_{n+1,n} - \mathcal{L}_{n,n-1}\right) + 3 t  \left(\mathcal{L}^{2}_{n+1,n}  -  \mathcal{L}^{2}_{n, n-1} \right)\\
0 &=& (a_{n+1} - a_n)  \left(\mathcal{L}_{n+1,n} + 3 t \mathcal{L}^{2}_{n+1,n}\right) + \left(\mathcal{L}_{n+2,n} - \mathcal{L}_{n+1,n-1}\right) 
+ 3 t \left( \mathcal{L}^{2}_{n+2,n}  -  \mathcal{L}^{2}_{n+1, n-1} \right)
\end{eqnarray*}
where we have used the relation $x = \frac{n}N$.

Referring to (\ref{motzkin}), we see that the relevant path classes here are
\begin{eqnarray*}
\mathcal{P}^1(n+1, n) &=& \mbox{a descent by one step }\\
\mathcal{P}^1(n+2, n) &=& \mbox{the empty set}\\
\mathcal{P}^2(n+1, n) &=& \mbox{a horizontal step followed by a single descent} 
\\ & & \mbox{\qquad or a single descent followed by a horizontal step}\\
\mathcal{P}^2(n+2, n) &=& \mbox{two successive descent steps}.
\end{eqnarray*}
Note that the structure of the path classes does not actually depend upon $n$. This is a reflection of the underlying spatial homogeneity of these equations. Thus, for the purpose of describing the path classes, one can translate $n$ to $0$. 

Now applying (\ref{weights}) the difference string equations become
\begin{eqnarray*}
\frac1n &=& \left( b^2_{n+1} - b^2_n\right) + 3 t \left( a_{n+1} b^2_{n+1} + a_n b^2_{n+1} - a_n b^2_n - a_{n-1} b^2_n\right)\\
0 &=& (a_{n+1} - a_n) \left( b^2_{n+1} + 3 t  (a_{n+1} + a_n)  b^2_{n+1}\right) + 3 t \left( b^2_{n+2} b^2_{n+1} - b^2_{n+1} b^2_n\right)
\end{eqnarray*}
where, for this example, we have set the parameter $x$ equal to $1$. The coefficient $b^2_{n+1}$ is non-vanishing by (\ref{Hirota}) and the fact that the partition functions are non-vanishing for $n$ sufficiently large. Hence we may divide it out of the second equation to arrive at the slightly simpler system
\begin{eqnarray*}
\frac1n &=& \left( b^2_{n+1} - b^2_n\right) + 3 t \left(  b^2_{n+1}( a_{n+1} +  a_n) -  b^2_n (a_n + a_{n-1})\right)\\
0 &=& (a_{n+1} - a_n) \left( 1 + 3 t  (a_{n+1} + a_n) \right) + 3 t \left( b^2_{n+2} -  b^2_n\right).
\end{eqnarray*}

\subsection{Motzkin Representation of the Toda equations} \label{motztoda}

We now pass to a more explicit form of of the {\it Toda equations} (\ref{fl2}) in the case $j = 2\nu + 1$:
\begin{align}
-\frac{1}{N} \label{an} \frac{d a_n}{dt_{2\nu+1}} &= \left( \mathcal{L}^{2\nu+1}\right)_{n+1, n} -  \left( \mathcal{L}^{2\nu+1}\right)_{n, n-1}\\
\label{bn} 
-\frac{1}{N}  \frac{d b^2_n}{dt_{2\nu+1}} &= \left(a_n- a_{n-1}\right) \left( \mathcal{L}^{2\nu+1}\right)_{n, n-1} + 
\left( \mathcal{L}^{2\nu+1}\right)_{n+1, n-1} -  \left( \mathcal{L}^{2\nu+1}\right)_{n, n-2}.
\end{align}
To describe this in more detail we will once again specialize to the trivalent case ($\nu = 1$). There are two relevant path classes here:
\begin{eqnarray*}
\mathcal{P}^3(n+1, n) && \mbox{described in Figures \ref{2h} and \ref{0h} for the case $n=0$}\\
\mathcal{P}^3(n+2, n) && \mbox{described in Figure \ref{1h} for the case $n=0$}.
\end{eqnarray*} 
The latter case corresponds to what was used in \cite{EMP08} but for Dyck paths (Motzkin paths without any horizontal steps) of length $2\nu$.

\begin{figure}[h] 
\begin{center}
\resizebox{5in}{!}{\includegraphics{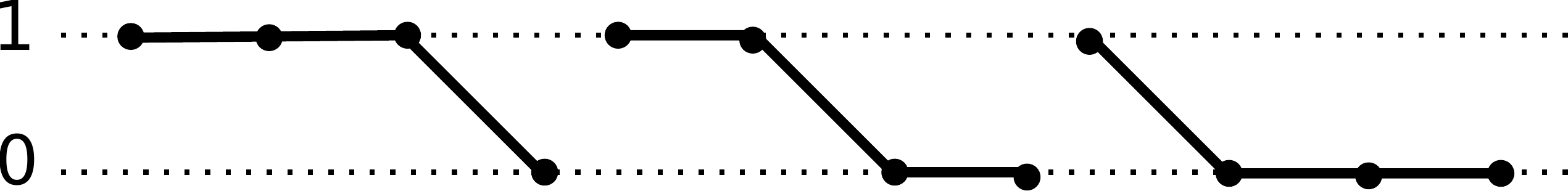}}
\end{center}
\caption{\label{2h} Elements of  $\mathcal{P}^3(1, 0)$ with two horizontal steps}
\end{figure}

\begin{figure}[h] 
\begin{center}
\resizebox{5in}{!}{\includegraphics{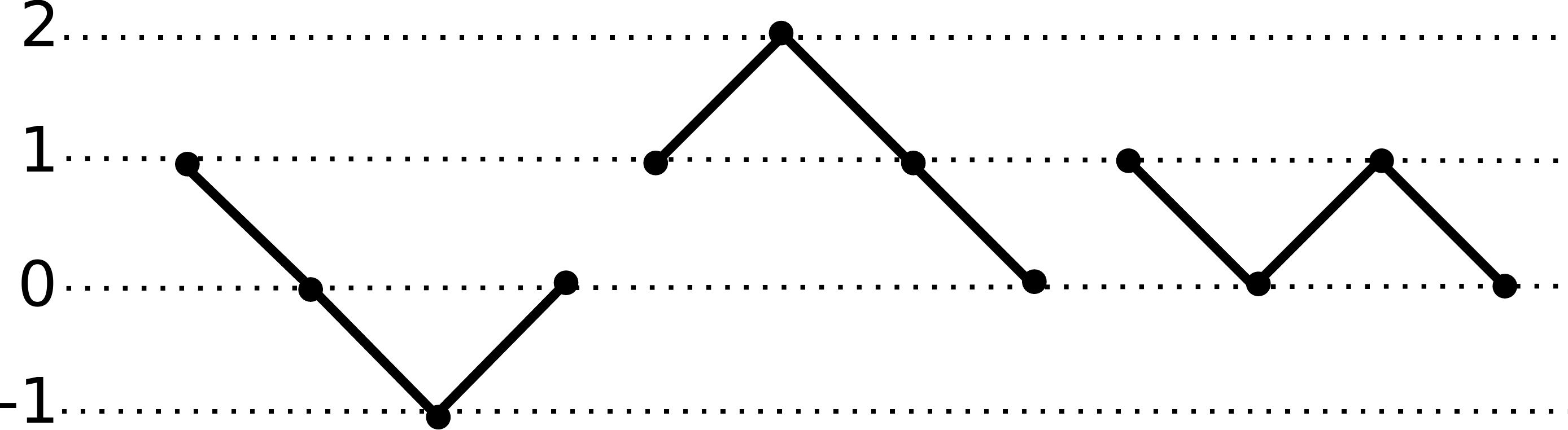}}
\end{center}
\caption{\label{0h} Elements of $\mathcal{P}^3(1, 0)$ with no horizontal steps (Dyck paths)}
\end{figure}

Applying (\ref{weights}),  the trivalent Toda equations become
\begin{align*}
- \frac1n \frac{da_n}{dt} &= \left(a^2_{n+1}b^2_{n+1} - a^2_n b^2_n\right) + \left(a_{n+1} a_n b^2_{n+1} - a_n a_{n-1} b^2_n\right) 
+ \left(a^2_{n}b^2_{n+1} - a^2_{n-1} b^2_n\right)\\
&\hspace{1cm} +  \left(b^2_{n+1} b^2_n - b^2_n b^2_{n-1}\right) +  \left(b^2_{n+2} b^2_{n+1} - b^2_{n+1} b^2_{n}\right) + 
\left(b^2_{n+1} b^2_{n+1} - b^2_n b^2_{n}\right)\\
- \frac1n \frac{db^2_n}{dt} &=  \left( a_{n} - a_{n-1} \right) \left[ a^2_{n}b^2_{n} + a_{n} a_{n-1} b^2_{n} + a^2_{n-1}b^2_{n} + b^2_{n} b^2_{n-1} + b^2_{n+1} b^2_{n} + b^2_{n} b^2_{n} \right]\\
&\hspace{1cm} + \left(a_{n+1} b^2_{n+1} b^2_{n} - a_{n} b^2_{n} b^2_{n-1}\right) 
+ \left(a_{n} b^2_{n+1} b^2_{n} - a_{n-1} b^2_{n} b^2_{n-1}\right) \\
&\hspace{1cm} + \left(a_{n-1} b^2_{n+1} b^2_{n} - a_{n-2} b^2_{n} b^2_{n-1}\right) 
\end{align*}
where we have again used the relation $x = \frac{n}N$ and then set the parameter $x = 1$.

\begin{figure}[h] 
\begin{center}
\resizebox{5in}{!}{\includegraphics{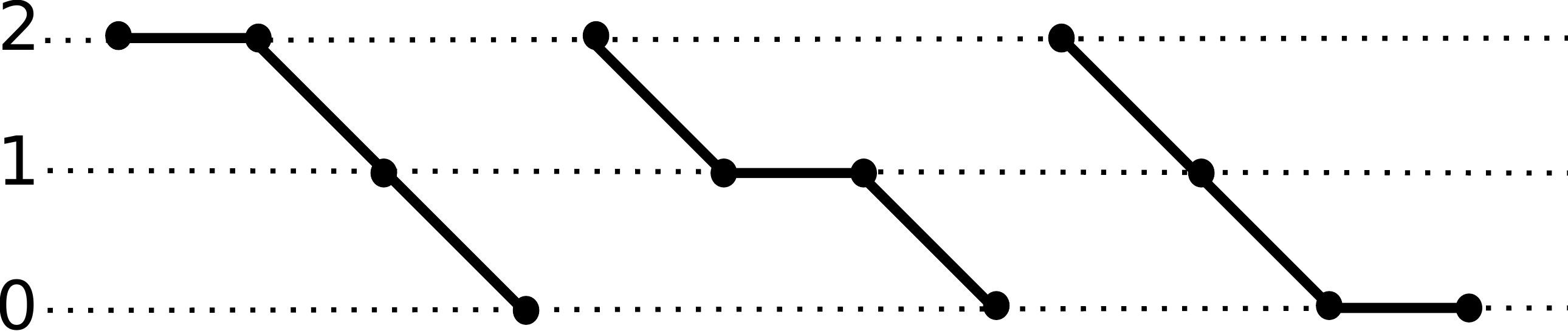}}
\end{center}
\caption{ \label{1h} Elements of $\mathcal{P}^3(2, 0)$}
\end{figure}

\section{Continuum Limits} \label{sec:44}
A number of discrete variables will appear in the following discussion as we prepare to make the transition to the continuum limit. Not all of these discrete variables will be directly involved in the description of that limit. However, in order to avoid confusion, it is perhaps best that we start by briefly describing all of these variables and their interrelations as well as their relations with other variables and parameters. As indicated at the outset, the positive parameter $N$ sets the scale for the potential in the random matrix partition function and, throughout this paper, it is taken to be large. The discrete variable $n$ labels the lattice {\it position} on $\mathbb{Z}^{\geq 0}$ that marks, for instance, the $n^{th}$ orthogonal polynomial and recurrence coefficients, diagonal or sub-diagonal.  We also always take $n$ to be large and in fact to be of the same order as $N$. As stated in the Introduction, it is to be understood that as $n$ and $N$ tend to $\infty$, they do so in such a way that their ratio
\begin{equation}\label{xdef} 
x \doteq \frac nN
\end{equation}
remains fixed at a value close to $1$. In fact within all subsequent proofs and derivations $x$ itself will be held fixed.

In addition to the {\it global} or {\it absolute} lattice variable $n$, we also introduce a {\it local} or {\it relative} lattice variable which we will denote by $k$. It varies over integers but will always be taken to be small in comparison to $n$ and independent of $n$. We will frequently study expressions involving $n + k$ which we will think of as small discrete variations around a large value of  $n$. We have already encountered this in the form of the difference string or Toda equations. The spatial homogeneity of those equations manifests itself in their all having the same form, independent of what $n$ is, while $k$ in those equations varies over $\{-\nu - 1, \dots, -1,0,1, \dots, \nu + 1\}$, as explicitly displayed for the trivalent case ($\nu = 1$) in subsections \ref{motzstring} and \ref{motztoda}.  Indeed in what follows it will suffice to take $\nu + 1 << n$ in order to insure the necessary separation of scales between $k$ and $n$.

We introduce a number of other scalings of variables that we will be using.
\begin{eqnarray}
\label{nuscaling1} s_1 &\doteq& x^{-\frac12} t_1
\end{eqnarray}
or more generally for $\nu \in \mathbb{Z}^+$,
\begin{eqnarray}
\label{nuscaling2} s_{2\nu + 1} &\doteq& x^{\nu - \frac12} t_{2\nu+1}\\
\label{wdef}\tilde{w} &\doteq& 1 + \frac kn.
\end{eqnarray} 
The variable $\tilde{w}$ is introduced for two reasons. First, to help make some of the subsequent expressions less cumbersome. Which value of $k$ is intended will be clear from the context or it will be made explicit. The second reason for introducing this variable is that it provides the transition to the continuum equations. As we shall see, at a certain point in the subsequent arguments $\tilde{w}$ will appear within the arguments of coefficient functions of the large $n$ asymptotic expansions. Since these coefficient functions are in fact analytic in their arguments, we will take advantage of this fact to regard these functions as analytic functions of 
$\tilde{w}$ regarded as a continuous variable.

In Theorem \ref{priors} we will present a more precise statement and extension of prior results on the free energy expansion (\ref{I.002}) as they will relate to what we do in the remainder of this paper. However, before getting to that we need to recall a preliminary prior result: 
\begin{prop}\label{tauscale} \cite{EMP08}
\begin{eqnarray*}
\tau^2_{n+k,N}( t_1, t_{2\nu+1}) &=& \tau^2_{n+k,n+k} \left( \left( \frac{n+k}{N}\right)^{-1/2} t_1,  \left( \frac{n+k}{N}\right)^{\nu - 1/2} t_{2\nu+1}\right)\\
&=& \tau^2_{n+k,n+k}  \left(\left( 1 + \frac{k}n\right)^{-1/2} s_1,  \left( 1 + \frac{k}n\right)^{\nu - 1/2} s_{2\nu+1}\right)\\
&=& \tau^2_{n+k,n+k}  \left(\tilde{w}^{-1/2} s_1,  \tilde{w}^{\nu - 1/2} s_{2\nu+1}\right).
\end{eqnarray*}
\end{prop}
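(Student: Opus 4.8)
Proposition \ref{tauscale} asserts a rescaling identity for the tau-functions:
\[
\tau^2_{n+k,N}(t_1, t_{2\nu+1}) = \tau^2_{n+k,n+k}\!\left(\tilde{w}^{-1/2} s_1,\; \tilde{w}^{\nu - 1/2} s_{2\nu+1}\right),
\]
with the two intermediate equalities being merely substitutions of the definitions \eqref{nuscaling1}, \eqref{nuscaling2}, \eqref{wdef}. So the mathematical content is entirely in the first equality: replacing the scale parameter $N$ by $n+k$ at the cost of rescaling the coupling constants.

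**The plan.** I would work directly from the integral representation \eqref{szego} of $\hat{Z}_N^{(n)}$ together with the definition \eqref{szego1}--\eqref{szego2} of $\tau^2$. The key is a scaling change of variables in the $(n+k)$-fold contour integral defining the partition function. Writing $M = n+k$ for brevity, I would begin with $Z_M^{(M)}$ evaluated at scale parameter $N$ and substitute $\lambda_m = c\,\mu_m$ for a suitable dilation factor $c$ to be determined. The Vandermonde/log-interaction term $\sum_{m\neq\ell}\log|\lambda_m-\lambda_\ell|$ shifts by a constant (the number of ordered pairs times $\log c$), which cancels in the ratio \eqref{szego1}, so I would focus on how the potential term $N\sum_m V(\lambda_m; t_1, t_{2\nu+1})$ transforms. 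Under $\lambda = c\mu$ the quadratic term $\tfrac12 N c^2 \mu^2$, the linear term $N t_1 c\,\mu$, and the odd term $N t_{2\nu+1} c^{2\nu+1}\mu^{2\nu+1}$ appear; I want to choose $c$ so that the quadratic term acquires the coefficient $M$ in front of $\tfrac12\mu^2$, i.e. $N c^2 = M$, giving $c = (M/N)^{1/2} = \tilde{w}^{1/2}$ (using $M/N = (n+k)/N = 1 + k/n = \tilde{w}$ since $x = n/N$ gives $N = n/x$ and $M/N = (n+k)x/n$; I would track the $x$-dependence carefully here).

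**Matching the couplings.** With $c = \tilde{w}^{1/2}$, the linear term becomes $N t_1 c\,\mu = M\,(N/M)\,t_1\,\tilde{w}^{1/2}\mu = M\,(t_1\,\tilde{w}^{-1/2})\,\mu$, identifying the new $t_1$-coupling as $\tilde{w}^{-1/2} s_1$ after absorbing the $x$-scaling from \eqref{nuscaling1}; similarly the odd term $N t_{2\nu+1} c^{2\nu+1}\mu^{2\nu+1} = M\,(N/M)\,t_{2\nu+1}\,\tilde{w}^{(2\nu+1)/2}\mu^{2\nu+1} = M\,(t_{2\nu+1}\,\tilde{w}^{\nu-1/2})\,\mu^{2\nu+1}$, yielding the coupling $\tilde{w}^{\nu-1/2} s_{2\nu+1}$. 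The crucial arithmetic is that the exponent on $\tilde{w}$ in each coupling is exactly $(\text{degree}/2) - 1$, which is what makes the quadratic term the fixed point of the rescaling and reproduces precisely the powers in \eqref{nuscaling1}--\eqref{nuscaling2} with the roles of $x$ and $\tilde{w}$ aligned. After this substitution the integrand is exactly that of $\hat{Z}_M^{(M)}$ at scale $M$ with the rescaled couplings, up to the overall constant from the log-term and a Jacobian factor $c^M$; I would then verify that all such constant prefactors cancel identically in the normalized ratio $\hat{Z}_M^{(M)}(t)/\hat{Z}_M^{(M)}(0)$ defining $\tau^2$.

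**The main obstacle.** The computation itself is routine; the delicate point is bookkeeping the interplay between the two distinct scalings — the definitional $x$-scaling in \eqref{nuscaling1}--\eqref{nuscaling2} (which converts $t_j$ to $s_j$) and the dynamical $\tilde{w}$-scaling produced by the change of variables (which converts scale $N$ to scale $n+k$). I expect the main care to be in confirming that composing these two gives exactly $\tilde{w}^{-1/2}s_1$ and $\tilde{w}^{\nu-1/2}s_{2\nu+1}$ rather than some other combined power, since $\tilde{w} = (n+k)/n$ while the $x$ in $s_j$ involves $n/N$; the identity $(n+k)/N = x\,\tilde{w}$ is what reconciles them. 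A secondary subtlety is the contour: since $V$ has an odd term, the integration is over the deformed contour $\Gamma$, and one must check that the dilation $\lambda = \tilde{w}^{1/2}\mu$ (with $\tilde{w}$ real and positive) maps $\Gamma$ to an admissible contour for the rescaled weight, so that the steepest-descent asymptotics and the very existence of $\tau^2$ are preserved. This is guaranteed by the homogeneity of the admissibility conditions under positive real dilation, but it is the one place where the odd-weight setting requires a remark beyond the even case treated in \cite{EMP08}.
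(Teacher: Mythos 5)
Your proposal is correct and follows essentially the same route as the paper: a dilation $\lambda = \big((n+k)/N\big)^{1/2}\hat\lambda$ in the Szeg\"o multiple-integral representation (\ref{szego}), normalizing the quadratic term to scale $n+k$, identifying the rescaled couplings via $(n+k)/N = x\tilde w$, and noting that the Vandermonde/Jacobian prefactors cancel in the ratio defining $\tau^2$. Your additional remark about checking that the dilation preserves admissibility of the contour $\Gamma$ is a sensible point the paper does not dwell on, but it does not change the argument.
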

\begin{proof}
The first equality follows from an appropriate change of variables in the Szeg\"o representation (\ref{szego}, \ref{szego1}), the second just applies the definitions (\ref{nuscaling1}) and (\ref{nuscaling2}) and the third applies (\ref{wdef}). The change of variables in the Szeg\"o representation amounts to introducing the re-scaling
${\lambda}_j = \sqrt{x}\hat{\lambda}_j$, from which we then have
\begin{eqnarray}  \nonumber
 \hat{Z}_N^{(n)}({t_1}, {t_{2\nu+1}}) &=&
x^{n^2/2}\int \cdots \int
  \exp\bigg(-n \bigg\{\sum_{j=1}^{n} \bigg(\frac{1}{2}\hat{\lambda}_j^2
+ {t}_{2\nu+1} x^{\nu-1/2}\hat{\lambda}_j^{2\nu+1}
 \\ \nonumber 
&& \hspace{5cm}
 + \frac{{t_1}}{\sqrt{x}}
\hat{\lambda}_j\bigg) \bigg\}\bigg)
  {\mathcal{V}} (\hat{\lambda}) \,\, d^{n} \hat{\lambda}\\
&=& x^{n^2/2} \hat{Z}_n^{(n)}\left(s_1, s_{2\nu+1}
\right),
\end{eqnarray}
where $ {\mathcal{V}} ({\lambda}) = \prod_{j<\ell}\left| \lambda_j  - \lambda_\ell \right|^2$. Shifting from $n$ to $n+k$ this becomes
\begin{eqnarray*}
\hat{Z}_N^{(n+k)}({t_1}, {t_{2\nu+1}}) &=& x^{(n+k)^2/2} \hat{Z}_{n+k}^{(n+k)}\left(\left( 1 + \frac{k}n\right)^{-1/2}s_1, \left( 1 + \frac{k}n\right)^{\nu - 1/2} s_{2\nu+1}\right)
\end{eqnarray*}
The proposition follows immediately from this.
\end{proof}
\medskip

\begin{rem} Starting in subsection \ref{ssec:dse} we will in general be setting $x=1$. That is because the main focus in this paper is on the structure of the partition function coefficients as generating functions for map enumeration. However, for (possible, future) applications to the statistics of random matrix eigenvalues, the ability to asymptotically ``detune" the matrix size away from the scale of the potential is important. Therefore, we have chosen to keep this parameter free up through this preliminary subsection.
\end{rem}\smallskip

We introduce one more notational definition:
\begin{equation} \label{Delta}
\Delta_k \log \tau^2_{n,n}(s_1, s_{2\nu+1}) \doteq \log \tau^2_{n+k,n+k}(\tilde{w}^{-1/2} s_1,  \tilde{w}^{\nu - 1/2} s_{2\nu+1}) - \log \tau^2_{n,n}(s_1, s_{2\nu+1}).
\end{equation}

\begin{thm} \cite{EM03, EMP08, BD10} \label{priors}
\begin{align} \nonumber
\log \tau^2_{n+k,n+k}\left(s_1, s_{2\nu+1}\right) &=  n^2 \tilde{w}^2  e_0(\tilde{w}^{-1/2}{s}_1, \tilde{w}^{\nu - 1/2}{s}_{2\nu+1})  +  \\ \nonumber
&\phantom{=} \qquad\qquad
+   e_1(\tilde{w}^{-1/2}{s}_1, \tilde{w}^{\nu - 1/2} s_{2\nu+1}) +  \dots  \\ \label{I.003}
 &\phantom{=} \quad
+  \frac{1}{n^{2g-2}} \tilde{w}^{2-2g} e_g(\tilde{w}^{-1/2}{s}_1, \tilde{w}^{\nu - 1/2}{s}_{2\nu+1}) + \dots 
\end{align}
is an asymptotic expansion in $n^{-2}$, uniformly valid for $(s_1, s_{2\nu+1}) \in \mathcal{K} = $ any compact subset of $(-\delta, \delta) \times [0, s^{(\nu, g)}_c)$ and $|k| \leq \nu + 1$ with $\frac{\nu + 1}{n} < \epsilon$ for $\epsilon$ and $\delta$ sufficiently small. 
(Here $s^{(\nu,g)}_c$ is a fixed positive constant depending only on $\nu$ and $g$.) Explicitly, this means that for each $g$ there is a constant, $C_g$, depending only on $\nu$ and  
$\mathcal{K}$ such that 
\begin{align*}
\left| \log \tau^2_{n+k,n+k}\left(s_1, s_{2\nu+1}\right) - n^2 \tilde{w}^2  e_0(\tilde{w}^{-1/2}{s}_1, \tilde{w}^{\nu - 1/2}{s}_{2\nu+1})  -  \dots \right. & \\ \left. 
 - \frac{1}{n^{2g-2}} \tilde{w}^{2-2g} e_g(\tilde{w}^{-1/2}{s}_1, \tilde{w}^{\nu - 1/2}{s}_{2\nu+1}) \right|  &\leq \frac{C_g}{n^{2g}}
\end{align*}
for all $(s_1, s_{2\nu+1}) \in \mathcal{K}$ and $|k| \leq \nu + 1$. 
\begin{itemize}
\item[a)] This expansion may be differentiated term by term in $s_1, s_{2\nu+1}$ with the same type of uniformity except that the constant $C_g$ will now also depend on the multi-index of the derivatives.
Moreover,  the coefficient $e_g(w^{-1/2}{s}_1, w^{\nu - 1/2}{s}_{2\nu+1})$ and its mixed derivatives in $(w, s_1, s_{2\nu+1})$, after being evaluated at $(w, s_1) = (1,0)$, are all complex analytic in a disc of radius $s^{(\nu, g)}_c$ centered at $0$ in the complex $s_{2\nu+1}$ plane. (Here we introduce $w$ as a continuous complex variable replacing $\tilde{w}$; hence, one may differentiate $e_g$ with respect to it.) One expects this radius of convergence to be independent of $g$ as is known to be true in the case of even weights \cite{Er09}.
\item[b)] One also has an asymptotic expansion for differences $\Delta_k \log \tau^2_{n,n}$:
\begin{eqnarray} \label{I.004}
\Delta_k \log \tau^2_{n,n}\left(s_1, s_{2\nu+1}\right) &=&  \sum_{g=0}^{\infty} \frac{1}{n^{2g-2}}  
\\ && \nonumber
\sum_{j=1}^\infty \frac1{j!} \frac{\partial^j}{\partial w^j} w^{2-2g} e_g( w^{-1/2} s_1, w^{\nu - 1/2} s_{2\nu+1} )|_{w=1}  \left(  \frac {k} n \right)^j
\end{eqnarray}
where we regard $w$ as a continuous complex variable. Once again this expansion is uniformly valid for $(s_1, s_{2\nu+1}) \in \mathcal{K}$ and $|k| \leq \nu + 1$, by which we mean that  for each $g$ there is a constant, $D_g$, depending only on $\nu$ and $\mathcal{K}$  such that 
\begin{align}
\label{Delest}
 \left| \Delta_k \log \tau^2_{n,n}\left(s_1, s_{2\nu+1}\right) 
- \sum_{m=0}^{g} \frac{1}{n^{2m-2}}
\right. \hspace{5cm} & \\ \nonumber 
 \left. 
 \sum_{j=1}^{2(g-m)+1} \frac1{j!} \frac{\partial^j}{\partial w^j} w^{2-2m} e_m\left(w^{-1/2} s_1, w^{\nu - 1/2} s_{2\nu+1} \right)|_{w=1} \left( \frac kn\right)^j\right| &\leq \frac{D_g}{n^{2g}}. 
\end{align}
This expansion may be differentiated term by term in $s_1, s_{2\nu+1}$ preserving uniformity. 
\end{itemize}
\end{thm}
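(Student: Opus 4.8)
The plan is to read off the master expansion (\ref{I.003}) from the already-established expansion (\ref{I.002}) via Proposition \ref{tauscale}, and then to obtain the difference expansion of part b) by Taylor expanding the resulting $\tilde w$-dependence about $\tilde w = 1$. First I would use Proposition \ref{tauscale} to recognize the left-hand side as $\log\tau^2_{n+k,n+k}$ evaluated at the rescaled couplings $(\tilde w^{-1/2}s_1,\tilde w^{\nu-1/2}s_{2\nu+1})$, i.e.\ a tau function whose two indices coincide. I would then apply (\ref{I.002}) with both $n$ and $N$ replaced by $m = n+k$, so that the ratio $x$ there equals $1$ and $\log\tau^2_{m,m}(\vec t) = \sum_g m^{2-2g} e_g(1,\vec t)$ with uniform remainder $K_g/m^{2g}$. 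Substituting $\vec t = (\tilde w^{-1/2}s_1,\tilde w^{\nu-1/2}s_{2\nu+1})$ and using $m = n+k = n\tilde w$ from (\ref{wdef}) turns each $m^{2-2g}$ into $n^{2-2g}\tilde w^{2-2g}$, which is exactly (\ref{I.003}).

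For the uniformity I would note that when $|k|\le\nu+1$ and $n$ is large one has $\tilde w\in(1-\epsilon,1+\epsilon)$, so the rescaled couplings stay in a fixed compact enlargement of $\mathcal K$ on which (\ref{I.002}) holds uniformly; since $\tilde w$ is bounded away from $0$, $K_g/m^{2g}\le C_g/n^{2g}$, and as the partial sums agree term by term after the substitution the stated bound transfers with $C_g$ depending only on $\nu$ and $\mathcal K$. Term-by-term differentiability in $s_1,s_{2\nu+1}$ follows from property (\ref{diff}) and the chain rule, the scaling factors being bounded. Promoting $\tilde w$ to a complex variable $w$, the analyticity in part a) of $F_g(w):=w^{2-2g}e_g(w^{-1/2}s_1,w^{\nu-1/2}s_{2\nu+1})$ and its $(w,s_1,s_{2\nu+1})$-derivatives near $(w,s_1)=(1,0)$ is immediate from property (\ref{analyt}), since $w\mapsto(1,w^{-1/2}s_1,w^{\nu-1/2}s_{2\nu+1})$ is holomorphic for $w$ near $1$. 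The sharper assertion that the radius of the $s_{2\nu+1}$-disc equals the critical value $s^{(\nu,g)}_c$, rather than merely the $\delta$ of (\ref{analyt}), is the genuinely analytic input imported from \cite{EM03,EMP08,BD10}; it records the extension of the domain of analyticity of $e_g$ up to the degeneration of the equilibrium measure.

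Finally, for part b) I would write, using (\ref{Delta}), $\Delta_k\log\tau^2_{n,n}=\sum_g n^{2-2g}\big(F_g(\tilde w)-F_g(1)\big)$, the $k=0$ term $F_g(1)=e_g(s_1,s_{2\nu+1})$ cancelling correctly. Taylor expanding each analytic $F_g$ about $w=1$ with $\tilde w-1=k/n$ produces precisely the double series (\ref{I.004}). To prove the estimate (\ref{Delest}) I would truncate the outer genus series at $m=g$ (remainder $O(n^{-2g})$ by the first step) and each inner Taylor series at order $j=2(g-m)+1$, whose remainder contributes $n^{2-2m}\cdot O((k/n)^{2(g-m)+2})=O(n^{-2g})$; summing the finitely many $m\le g$ gives the bound $D_g/n^{2g}$. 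I expect this double-truncation bookkeeping to be the main obstacle: one must balance the two independent small quantities, $n^{-2}$ from the genus expansion and $k/n$ from the Taylor expansion, so that exactly the terms of $n$-order down to $-(2g-1)$ survive, and one must bound the Taylor remainders uniformly over $\mathcal K$ — which is precisely where the derivative bounds on $F_g$ supplied by part a) are needed. Term-by-term differentiability of (\ref{I.004}) in $s_1,s_{2\nu+1}$ is then inherited from part a) by the same chain-rule argument.
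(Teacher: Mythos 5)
Your proposal is correct and follows essentially the same route as the paper: both deduce (\ref{I.003}) from the prior expansion (\ref{I.002}) via the rescaling of Proposition \ref{tauscale} together with the substitution $n+k = n\tilde w$, and both obtain (\ref{Delest}) by differencing two truncated expansions and Taylor expanding $F_m(w) = w^{2-2m}e_m(w^{-1/2}s_1, w^{\nu-1/2}s_{2\nu+1})$ about $w=1$ with the inner truncation at order $2(g-m)+1$ so that both remainders are $O(n^{-2g})$. The only cosmetic difference is that the paper bounds the Taylor remainders via Cauchy's remainder theorem on the circle $|w-1|=1/2$ rather than via explicit derivative bounds, which is an equivalent device.
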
 
\begin{proof}
The basic result is that of \cite{EM03} extended, in \cite{BD10}, to the case of weights with odd dominant power  (see (\ref{I.002}) \ref{unif} - \ref{diff}). From this it follows that one has constants 
$\hat{C}_g$ depending only on $\nu, \mathcal{K}$ such that 
 \begin{eqnarray*}
 \left| \log \tau^2_{n+k,n+k}\left(s_1, s_{2\nu+1}\right) - (n+k)^2   e_0(\tilde{w}^{-1/2}{s}_1, \tilde{w}^{\nu - 1/2}{s}_{2\nu+1})  -  \dots  \right. \\
\left. \dots 
 - \frac{1}{(n+k)^{2g-2}}  e_g(\tilde{w}^{-1/2}{s}_1, \tilde{w}^{\nu - 1/2}{s}_{2\nu+1}) \right|  &\leq & \frac{\hat{C}_g}{(n+k)^{2g}}
 \end{eqnarray*}
 We then rewrite the above equation using $n+k = n\tilde{w}$,
 \begin{eqnarray*}
  \left| \log \tau^2_{n+k,n+k}\left(s_1, s_{2\nu+1}\right) - n^2 \tilde{w}^2   e_0(\tilde{w}^{-1/2}{s}_1, \tilde{w}^{\nu - 1/2}{s}_{2\nu+1})  -  \dots \right. \\ \left. \dots  
 -  \frac{1}{n^{2g-2}} \tilde{w}^{2-2g} e_g(\tilde{w}^{-1/2}{s}_1, \tilde{w}^{\nu - 1/2}{s}_{2\nu+1}) \right|  &\leq & \frac{\hat{C}_g}{n^{2g}\tilde{w}^{2g}}.
\end{eqnarray*}
The desired estimate is realized by taking $C_g = \frac{\hat{C}_g}{(1 - \epsilon)^{2g}} \geq \frac{\hat{C}_g}{\tilde{w}^{2g}}$.

The ensuing statements of the theorem, in (a), also follow directly from these prior results. In particular, the analyticity of $e_g$ in its arguments follows from (\ref{I.002}) \ref{analyt}; mixed derivatives then yield linear combinations of derivatives of $e_g$ with respect to its arguments whose coefficients are polynomials in $s_1, s_{2\nu+1}$ and fractional powers of $w$ (which is bounded away from zero). Evaluating at  $(w, s_1) = (1,0)$ then yields a linear combination of derivatives of $e_g$ with coefficients that are polynomial is $s_{2\nu+1}$. By (\ref{analyt}) this linear combination is analytic in a disc as stated in the theorem.

For (b) observe that a straightforward estimate of the difference of the asymptotic expansions for $\log \tau^2_{n+k,n+k}$ and $\log \tau^2_{n,n}$ yields
\begin{eqnarray*}
&& \left| \log \tau^2_{n+k,n+k}\left(s_1, s_{2\nu+1}\right) - \log \tau^2_{n,n} \left(s_1, s_{2\nu+1}\right) - 
n^2 w^2  e_0(w^{-1/2}s_1, w^{\nu - 1/2}s_{2\nu+1}) \right. \\
&& \hspace{2cm} +    e_0(s_1, s_{2\nu+1})  -  \dots 
\\ && \hspace{1cm} \left.
- \frac{1}{n^{2g-2}} w^{2-2g} e_g(w^{-1/2}{s}_1, w^{\nu - 1/2}{s}_{2\nu+1}) +  e_g(s_1, s_{2\nu+1})\right|_{w=1 + \frac kn}  \leq \frac{2 C_g}{n^{2g}}\,.
\end{eqnarray*}
We rewrite this as 
\begin{align*}
 \bigg| \Delta_k \log \tau^2_{n,n} \left(s_1, s_{2\nu+1}\right) 
\hspace{8cm} & \\    
-  \sum_{m=0}^g \frac{1}{n^{2m-2}} \left(w^{2-2m} e_m(w^{-1/2}{s}_1, w^{\nu - 1/2}{s}_{2\nu+1}) -  e_g(s_1, s_{2\nu+1})\right)\bigg|_{w=1 + \frac kn}  &\leq \frac{2 C_g}{n^{2g}}\,,
\end{align*}
which is
\begin{align*}
  \bigg| \Delta_k \log \tau^2_{n,n} \left(s_1, s_{2\nu+1}\right)  \hspace{8cm} &
\\     - \sum_{m=0}^g \frac{1}{n^{2m-2}} \bigg((1+\frac kn)^{2-2m} e_m((1+\frac kn)^{-1/2}{s}_1, (1+\frac kn)^{\nu - 1/2}{s}_{2\nu+1}) & 
 \\ 
-  e_g(s_1, s_{2\nu+1})\bigg)\bigg|  &\leq \frac{2 C_g}{n^{2g}}\,.
\end{align*}
One now Taylor expands the $e_m$ terms centered at $w=1$ and evaluated at $1 + \frac kn$. For notational convenience set
\begin{equation*}
F_m\left(w,s_1, s_{2\nu+1}\right) = w^{2-2m} e_m(w^{-1/2}{s}_1, w^{\nu - 1/2}{s}_{2\nu+1}).
\end{equation*}
Then this expansion has the form
\begin{align*}
 \left| \Delta_k \log \tau^2_{n,n}\left(s_1, s_{2\nu+1}\right) - \sum_{m=0}^{g} \frac{1}{n^{2m-2}} \sum_{j=1}^{2(g-m)+1} \frac1{j!} \frac{\partial^j}{\partial w^j} F_m\left(w,s_1, s_{2\nu+1}\right)|_{w=1} \left( \frac kn\right)^j 
\right. \\ \left.
+ R^{(m)}\left(w,s_1, s_{2\nu+1}\right) \left(\frac kn\right)^{2(g-m+1)}\right| \leq \frac{2 C_g}{n^{2g}} 
\end{align*}
where $R^{(m)}\left(w,s_1, s_{2\nu+1}\right)$ denotes the remainder term, of order $2(g-m+1)$ in $w$, for $F_m$. By elementary inequalities one then has
\begin{eqnarray*}
&& \bigg| \Delta_k \log \tau^2_{n,n}\left(s_1, s_{2\nu+1}\right)  
\\
&&
- \sum_{m=0}^{g} \frac{1}{n^{2m-2}} \sum_{j=1}^{2(g-m)+1} \frac1{j!} \frac{\partial^j}{\partial w^j} w^{2-2m} e_m\left(w^{-1/2} s_1, w^{\nu - 1/2} s_{2\nu+1} \right)|_{w=1} \left( \frac kn\right)^j\bigg|\\
& \leq& \frac{2 C_g}{n^{2g}} + \frac1{n^{2g}}\sum_{m=0}^g \left| R^{(m)}\left(w,s_1, s_{2\nu+1}\right)\right| |k|^{2(g-m+1)}. 
\end{eqnarray*} 
By Cauchy's remainder theorem for analytic functions one has
\begin{eqnarray*}
\left| R^{(m)}\left(w,s_1, s_{2\nu+1}\right)\right| &\leq& 2^{2(g-m+1)} d_m\\
d_m &=& \max_{(s_1, s_{2\nu+1}) \in \mathcal{K}} \max_{|w-1| = 1/2} F_m\left(w,s_1, s_{2\nu+1}\right). 
\end{eqnarray*}
(We assume here, as we may, that $\epsilon < 1/2$.)
Thus statement (b) is established by taking
\begin{eqnarray*}
D_g &=& 2 C_g + \sum_{m=0}^g d_m |2\nu + 2|^{2(g-m+1)}.
\end{eqnarray*}
\end{proof}

\begin{rem}  We mention here that the variables $s_j$ as defined above differ slightly from their usage in related works \cite{EMP08, Er09} where $s_j = -c_j t_j$ for appropriate constants $c_j > 0$. Also for comparison with \cite{BD10}, $s_3 = - u$ where $u$ is the weight parameter in that work. We further observe that because of the combinatorial interpretation of these generating functions, one can show that $e_g(s_{2\nu+1})$ is even in $s_{2\nu+1}$. Thus odd derivatives of $e_g$ are odd functions and even derivatives are even. 
\end{rem} \medskip

\begin{rem} By collecting terms in (\ref{Delest}) of the same order in $n^{-1}$ one sees that the asymptotic series represented by (\ref{I.004}) does indeed have a uniformly valid, well ordered expansion in inverse powers of $n$. The precise form of the coefficients in this {\it re-summed} expansion is not prima facie obvious; however, the results in the remainder of this paper will show precisely how these coefficients and those of other similarly derived asymptotic series  can in fact be determined. A key point for this process may already be observed in (\ref{Delest}); namely, all terms in the re-summation will be in the form  of differential expressions in the continuous variable $w$ which are then uniformly evaluated at $w=1$. In the relevant settings, the coefficients of the inverse powers of $n$ may be regarded as hierarchies of differential equations in $w$ which are to be solved and whose solutions are then evaluated at $w=1$ in order to yield explicit expressions for generating functions (in $s_{2\nu+1}$) and similar functions of combinatorial or statistical interest. In the remainder of this section we build upon these ideas to derive the general form of the various hierarchies of differential equations. In sections \ref{sec:55} and \ref{sec:4} we carry out this process in complete detail for the trivalent case ($\nu = 1$). To get an idea of how the whole strategy comes together the reader might find it useful to browse these last two sections before proceeding systematically through the general derivations which begin in subsection \ref{ssec:dse}.   
\end{rem}\medskip

We will make essential use of the Hirota formulas for the Toda variables in their original scaling.
\begin{lem} (Hirota)
\begin{align}
\label{a} a_{n, N} &= -\frac{1}{N} \frac{\partial}{\partial t_1} \log \left[ \frac{\tau^2_{n+1, N}}{\tau^2_{n, N}} \right]  
                                  =  -\frac{1}{N} \frac{\partial}{\partial t_1} \log \left[ \frac{Z_N^{(n+1)}(t_1, t_{2\nu+1})}{Z_N^{(n)}(t_1, t_{2\nu+1})} \right]\\
\label{b}  b_{n, N}^2 &= \frac{1}{N^2} \frac{\partial^2}{\partial t_1^2} \log \tau^2_{n, N}
                                       = \frac{1}{N^2} \frac{\partial^2}{\partial t_1^2} \log \frac{1}{N^2} Z_N^{(n)}(t_1, t_{2\nu+1})\,,
\end{align}
where the factors of $1/N$ are consistent with the energy scaling chosen in the definition of $\mu$ (\ref{RMT}).
\end{lem}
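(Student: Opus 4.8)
The plan is to derive both identities directly from the Szegő representation of the tau function given in (\ref{szego})--(\ref{szego1}), differentiating the log-partition function with respect to the auxiliary parameter $t_1$ and then recognizing the resulting moment expressions as the recurrence coefficients $a_{n,N}$ and $b^2_{n,N}$. The key structural fact I would exploit is that $t_1$ enters the potential $V(\lambda; t_1, t_{2\nu+1}) = \frac12\lambda^2 + t_1\lambda + t_{2\nu+1}\lambda^{2\nu+1}$ \emph{linearly}, so that $\partial_{t_1}$ acting inside the matrix integral simply brings down a factor of $-N\sum_m \lambda_m$ (equivalently, inserts $-N\,\mbox{Tr}\,M$). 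Consequently $\frac1N\partial_{t_1}\log Z^{(n)}_N$ computes, up to sign, the expected trace, which in the orthogonal-polynomial picture is a sum of diagonal recursion coefficients.

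First I would establish (\ref{a}). Writing $\log\!\big[Z^{(n+1)}_N/Z^{(n)}_N\big]$ and applying $-\frac1N\partial_{t_1}$, the linearity of the potential gives the difference of expected traces for the size-$(n+1)$ and size-$n$ ensembles. Via the standard reduction of the unitarily invariant integral to its eigenvalue form (the same reduction invoked after (\ref{szego2})) and the identification $\det\mathcal{H}_n = n!\,\hat Z^{(n)}_N$, this difference telescopes to the single diagonal entry $a_{n,N}$, i.e. $(\hat{\mathcal{L}})_{n,n}$. The cleanest route is to use $d_n = \tfrac{\tau^2_{n,N}}{\tau^2_{n-1,N}} d_n(0)$ together with the fact that traces of $\hat{\mathcal{L}}$ are encoded in ratios of Hankel determinants; differentiating $\log d_n$ in $t_1$ and summing in $n$ recovers $a_{n,N}$ as a logarithmic $t_1$-derivative of $\tau^2_{n+1,N}/\tau^2_{n,N}$. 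The second equality in (\ref{a}) is then immediate from the definition (\ref{tausquare}) of $\tau^2_{n,N}$ as a ratio of partition functions, since the $t_1$-independent normalization $Z^{(n)}_N(0)$ drops out under $\partial_{t_1}\log$.

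Next I would establish (\ref{b}). The natural approach is to take one further $t_1$-derivative of (\ref{a}): since $b^2_{n,N}$ is the off-diagonal entry $(\hat{\mathcal{L}})_{n,n-1}$, and the orthogonal-polynomial norms $h_n = d_n$ satisfy $b^2_n = h_n/h_{n-1}$, one has $b^2_{n,N} = d_n/d_{n-1} = \tau^2_{n+1,N}\tau^2_{n-1,N}/\tau^4_{n,N}\cdot b^2_{n,N}(0)$, which is exactly the Hirota identity (\ref{Hirota}) already proved. Taking $\frac{1}{N^2}\partial^2_{t_1}\log$ of $\tau^2_{n,N}$ and using the second-difference structure $\log\tau^2_{n+1,N} - 2\log\tau^2_{n,N} + \log\tau^2_{n-1,N}$ produces precisely $\log(b^2_{n,N}/b^2_{n,N}(0))$, and since $b^2_{n,N}(0)$ is $t_1$-independent the derivative isolates $b^2_{n,N}$. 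The factors of $1/N$ and $1/N^2$ track the energy scaling $e^{-N\,\mbox{Tr}\,V}$ in (\ref{RMT}), as noted in the statement. The second equality in (\ref{b}) again follows from (\ref{tausquare}), the additive constant $-\log Z^{(n)}_N(0)$ (and the harmless $\tfrac{1}{N^2}$ prefactor) being annihilated by $\partial^2_{t_1}$.

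The main obstacle I anticipate is \emph{bookkeeping rather than conceptual}: carefully justifying the interchange of $\partial_{t_1}$ with the contour integrals in the non-Hermitean ($j$ odd) setting, where the integrals run over the deformed contour $\Gamma$ and the recurrence coefficients exist only for $n \geq n_0 - 1$. Here I would appeal to the established analyticity and uniform-convergence properties (Remark following (\ref{Delta}) permits extending the domain to include $t_1$, and the asymptotic existence of the non-Hermitean orthogonal polynomials is guaranteed for $n$ large), so that all manipulations valid in the Hermitean case carry over for $n$ sufficiently large. The telescoping identification of the summed $t_1$-derivatives with a single recursion coefficient is the step most prone to sign and indexing errors, so I would verify it by checking the leading $t_1 = t_{2\nu+1} = 0$ Gaussian case, where $a_{n,N} = 0$ and $b^2_{n,N}(0)$ is explicitly known, as a consistency test.
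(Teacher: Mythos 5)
Your derivation of (\ref{a}) is sound but follows a genuinely different route from the paper's. The paper does not differentiate the Szeg\"o representation directly: it takes the $\nu=0$ (i.e.\ $t_1$-flow) Toda equations from (\ref{an})--(\ref{bn}), substitutes the already-established identity (\ref{Hirota}) into the equation for $b^2_{n,N}$ to obtain the first difference in $n$ of the claimed formula for $a_{n,N}$, and then decouples that difference by continuing $(t_1,t_{2\nu+1})$ back to $(0,0)$, where the recurrence coefficients are those of the Hermite polynomials. Your route --- exploiting the linearity of $t_1$ in $V$ so that $\partial_{t_1}\log Z^{(n)}_N$ is $-N$ times an expected trace, together with the Heine identification $Z^{(n+1)}_N/Z^{(n)}_N\propto h_n=\int_\Gamma\pi_n^2e^{-NV}\,d\lambda$ and the computation $\partial_{t_1}\log h_n=-Na_{n,N}$ (the cross term $\int\pi_n\,\partial_{t_1}\pi_n\,e^{-NV}d\lambda$ vanishing by orthogonality since $\partial_{t_1}\pi_n$ has degree $<n$) --- is more self-contained in that it needs no Toda equations at all for (\ref{a}), at the price of having to justify the moment identities on the deformed contour $\Gamma$. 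Both are legitimate.

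The argument you give for (\ref{b}), however, has a genuine gap. You assert that taking $\frac{1}{N^2}\partial^2_{t_1}\log\tau^2_{n,N}$ ``using the second-difference structure $\log\tau^2_{n+1,N}-2\log\tau^2_{n,N}+\log\tau^2_{n-1,N}$ produces precisely $\log(b^2_{n,N}/b^2_{n,N}(0))$.'' This conflates two different operations: the second \emph{difference in $n$} of $\log\tau^2$ equals $\log(b^2_{n,N}/b^2_{n,N}(0))$ by (\ref{Hirota}), but that says nothing about the second \emph{$t_1$-derivative} of the single quantity $\log\tau^2_{n,N}$, which is what (\ref{b}) asserts; and even if it did, it would yield $\log b^2_{n,N}$ rather than $b^2_{n,N}$ itself. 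The missing bridge is exactly the first $t_1$-Toda equation $-\frac1N\partial_{t_1}a_{n,N}=b^2_{n+1,N}-b^2_{n,N}$: substituting (\ref{a}) into it gives $\frac{1}{N^2}\partial^2_{t_1}\bigl(\log\tau^2_{n+1,N}-\log\tau^2_{n,N}\bigr)=b^2_{n+1,N}-b^2_{n,N}$, which is the first difference in $n$ of the desired identity and must then be decoupled using the Gaussian initial data, exactly as the paper does. (Alternatively one can obtain (\ref{b}) directly from a Jacobi determinant identity on the Hankel matrix, using $\partial_{t_1}c_k=-Nc_{k+1}$, but that too has to be carried out; your sketch as written supplies neither bridge.)
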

\begin{proof}
 From (\ref{an}) and (\ref{bn}) one deduces that the Toda equations for $\nu = 0$ are
\begin{eqnarray*}
- \frac1N \frac{da_{n,N}}{dt_1} &=& b^2_{n+1,N} - b^2_{n,N}\\
- \frac1N \frac{db^2_{n,N}}{dt_1} &=& b^2_{n,N} \left(a_{n,N} - a_{n-1, N}\right).
\end{eqnarray*}
Substituting the fundamental identity (\ref{Hirota}) into the second of these equations one has
\begin{eqnarray*}
a_{n,N}  - a_{n-1, N} &=& - \frac1N \frac{d}{dt_1} \left( \log \tau^2_{n+1, N}   - 2 \log \tau^2_{n, N} - \log \tau^2_{n-1, N}\right).
\end{eqnarray*}
From the Szeg\"o representation one has that the $\tau^2_{n, N}$ are simultaneously analytic in $(t_1, t_{2\nu+1})$.
By continuation of $(t_1, t_{2\nu+1})$ back to $(0, 0)$, the recurrence coefficients become those of the Hermite polynomials. From these initial values and the previous line one may deduce that, in fact, 
\begin{eqnarray*}
a_{n,N} &=& - \frac1N \frac{d}{dt_1}  \log \frac{\tau^2_{n+1, N}}{\tau^2_{n, N}}
\end{eqnarray*}
which is the first Hirota relation. Substituting this into the first Toda equation above one may similarly derive the second Hirota relation. 
The expression in terms of the partition function in each case follows directly from (\ref{tausquare}).
\end{proof}
\begin{cor} \label{thm:rec-asymp}
\begin{eqnarray}
\nonumber a_{n+k, N}(t_1, t_{2\nu+1})  & = &  -\frac{1}{N} \frac{\partial}{\partial t_1} \left[ \log \tau^2_{n+k+1,N}(t_1, t_{2\nu+1}) - 
\log \tau^2_{n+k,N}( t_1, t_{2\nu+1}) \right]  
\\
\label{anN-preexpansion} &=&  -\frac{x^{1/2} }{n} \frac{\partial}{\partial s_1} \Delta_{1} \log \tau^2_{n+k,n+k}\left(s_1, s_{2\nu+1}\right)\\
\nonumber b_{n+k, N}^2(t_1, t_{2\nu+1}) &=& \frac{1}{N^2} \frac{\partial^2}{\partial t_1^2} \log \tau^2_{n+k, N}\\
\label{bnN-preexpansion}  &=& \frac{x}{n^2} \frac{\partial^2}{\partial s_1^2} \left[\log \tau^2_{n,n}(s_1, s_{2\nu+1}) + \Delta_k \log \tau^2_{n,n}(s_1, s_{2\nu+1})\right].
\end{eqnarray}
\end{cor}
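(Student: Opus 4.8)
The plan is to read Corollary~\ref{thm:rec-asymp} as a direct unpacking of the Hirota Lemma, formulas (\ref{a}) and (\ref{b}), once the lattice index is shifted and the physical variables $(t_1, t_{2\nu+1}, N)$ are traded for the scaled variables $(s_1, s_{2\nu+1})$ and the ratio $x = n/N$. The first equality in each of the two displays is nothing but (\ref{a}), respectively (\ref{b}), with $n$ replaced throughout by $n+k$; this substitution is legitimate because, as noted right after (\ref{Hirota}), the fundamental identities and hence the Hirota relations persist in the non-Hermitean setting for all sufficiently large indices, and $n+k$ is large once $n$ is. So the genuine content of the corollary lies entirely in passing to the scaled form in the second equality of each line.

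To carry this out I would first record the elementary substitutions that follow from $s_1 = x^{-1/2} t_1$ and $x = n/N$ held fixed (see (\ref{xdef}), (\ref{nuscaling1})): differentiation obeys $\partial/\partial t_1 = x^{-1/2}\,\partial/\partial s_1$, while the overall prefactors convert as $N^{-1} = x/n$ and $N^{-2} = x^2/n^2$. Combining these gives $N^{-1}\partial/\partial t_1 = (x^{1/2}/n)\,\partial/\partial s_1$ for the $a$-relation and $N^{-2}\partial^2/\partial t_1^2 = (x/n^2)\,\partial^2/\partial s_1^2$ for the $b^2$-relation, which already produces the scalar prefactors appearing on the right-hand sides. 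The remaining task is to rewrite the tau-functions. Here I invoke Proposition~\ref{tauscale}, which turns $\log\tau^2_{n+k,N}(t_1,t_{2\nu+1})$ into $\log\tau^2_{n+k,n+k}(\tilde{w}^{-1/2} s_1, \tilde{w}^{\nu-1/2} s_{2\nu+1})$ with $\tilde{w} = 1 + k/n$, and then the definition (\ref{Delta}) of $\Delta_k$, which identifies the latter with $\log\tau^2_{n,n}(s_1, s_{2\nu+1}) + \Delta_k\log\tau^2_{n,n}(s_1, s_{2\nu+1})$. For the $b^2$-relation this identity is applied at the single index $n+k$ and one is finished immediately.

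For the $a$-relation one must instead handle the consecutive-index difference $\log\tau^2_{n+k+1,N} - \log\tau^2_{n+k,N}$. Applying the previous identity at both indices $n+k+1$ and $n+k$ and subtracting expresses this difference as $\Delta_{k+1}\log\tau^2_{n,n} - \Delta_k\log\tau^2_{n,n}$, that is, as the single forward difference in the lattice variable of the family $\log\tau^2_{m,N}$ re-expressed in scaled variables, which is precisely the quantity written as $\Delta_1\log\tau^2_{n+k,n+k}$ in the statement. The one point that demands care — and the only real obstacle — is keeping the bookkeeping of the rescaling factor $\tilde{w}$ consistent: the $\tilde{w}$ governing $\Delta_1$ here must be referred to the fixed global scale $n$ (through $x = n/N$) rather than to the shifted index $n+k$, so that the two invocations of Proposition~\ref{tauscale} use compatible scalings and the single difference collapses correctly onto $\Delta_1$. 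Since $\tilde{w}$ is constant in $t_1$, it passes freely through the $s_1$-differentiation, and assembling the prefactor $-x^{1/2}/n$ with this difference yields the asserted second equality for $a_{n+k,N}$.
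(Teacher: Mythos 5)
Your proposal is correct and follows essentially the same route as the paper: the paper's proof simply states that the representations follow from the Hirota relations (\ref{a}), (\ref{b}) together with the scalings (\ref{nuscaling1}), (\ref{nuscaling2}) and the definition (\ref{Delta}) of $\Delta_k$, which is exactly the chain of substitutions you carry out (your explicit appeal to Proposition \ref{tauscale} and your remark that the scaling in $\Delta_1$ must be referred to the global index $n$ are just the details the paper leaves implicit).
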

\begin{proof}
These representations follow directly from the Hirota relations (\ref{a}), (\ref{b}), along with (\ref{nuscaling1}), (\ref{nuscaling2}) and the definition (\ref{Delta}) of $\Delta_k$.
\end{proof}
\medskip

We have the following asymptotic expansions for  $a_{n+k,N}$, $b_{n+k,N}$.  Moreover, expanding the $\partial/\partial w$ and $\partial/\partial s_1$ derivatives within the coefficients in these asymptotic expansions one can see that these coefficients acquire a self-similar scaling.
\begin{thm} \label{thm:hf} The following are asymptotic series in $1/n$:
\begin{align} 
\label{h1}&\\
\nonumber  a_{n+k, N} &= h(s_1, s_{2\nu+1},\tilde{w}) =  x^{1/2} \sum_{g \geq 0} h_g(s_1, s_{2\nu+1},\tilde{w}) n^{-g}\\
\nonumber  h_g(s_1, s_{2\nu+1}, \tilde{w}) &= -  \tilde{w}^{1-g} \times
\\ \nonumber & \hspace{-0.5cm}
\sum_{\begin{matrix} 2 g_1 + j = g+1 \\  g_1 \geq 0\,, j>0\end{matrix}}  \frac1{j!} \frac{\partial^{j+1}}{\partial s_1 \partial w^j} \left[ w^{2-2g_1} e_{g_1}\left((w\tilde{w})^{-1/2} s_1, (w\tilde{w})^{\nu-1/2} s_{2\nu+1}\right)\right]_{w = 1}\\
 \label{f1} &\\
\nonumber b^2_{n+k,N} &= f(s_1, s_{2\nu+1},\tilde{w}) = x \sum_{g \geq 0} f_g(s_1, s_{2\nu+1},\tilde{w})  n^{-2g}\\
\nonumber f_g(s_1, s_{2\nu+1}, \tilde{w}) &= \tilde{w}^{2-2g} \frac{\partial^2}{\partial s_1^2}  e_g(\tilde{w}^{-1/2} s_1, \tilde{w}^{\nu-1/2} s_{2\nu+1}). 
\end{align}
Moreover
\begin{eqnarray}
\label{hw}h_g(s_1, s_{2\nu+1},w) &=& w^{\frac{1}{2} -g}  u_g(s_1 w^{-1/2}, s_{2\nu+1} w^{\nu - \frac{1}{2}})     \\
\label{fw} f_g(s_1, s_{2\nu+1},w)  &=& w^{1 -2g}  z_g(s_1 w^{-1/2}, s_{2\nu+1} w^{\nu - \frac{1}{2}})  
 \end{eqnarray}
 where $u_g$ and $z_g$ are analytic functions of their arguments in a neighborhood of $(0,0)$ and $w$ is a continuous variable in terms of which the general form of these coefficient functions is described..
\end{thm}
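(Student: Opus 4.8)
The plan is to establish the asymptotic expansions (\ref{h1}) and (\ref{f1}) directly from the Hirota representations of Corollary \ref{thm:rec-asymp}, and then to extract the self-similar scalings (\ref{hw}) and (\ref{fw}) by exploiting the specific way the rescaled arguments $\tilde{w}^{-1/2}s_1$ and $\tilde{w}^{\nu-1/2}s_{2\nu+1}$ enter the coefficient functions $e_g$. I begin with the off-diagonal coefficients $b^2_{n+k,N}$, since these are the simpler case: here the Hirota formula (\ref{bnN-preexpansion}) expresses $b^2_{n+k,N}$ as $\frac{x}{n^2}\,\partial^2_{s_1}$ applied to $\log\tau^2_{n,n} + \Delta_k\log\tau^2_{n,n}$. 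The key move is to substitute the asymptotic expansion (\ref{I.003}) of Theorem \ref{priors} (with $k=0$ for the $\log\tau^2_{n,n}$ term, whose $\tilde{w}$ equals $1$) and to observe that the $\Delta_k$ contribution, controlled by the estimate (\ref{Delest}), contributes at subleading order. Applying $\partial^2_{s_1}$ term by term — which is permitted by the differentiability clause (a) of Theorem \ref{priors} — and collecting the $n^{-2g}$ coefficient yields exactly $f_g(s_1,s_{2\nu+1},\tilde{w}) = \tilde{w}^{2-2g}\,\partial^2_{s_1} e_g(\tilde{w}^{-1/2}s_1, \tilde{w}^{\nu-1/2}s_{2\nu+1})$, where the power $\tilde{w}^{2-2g}$ is inherited directly from the genus-$g$ term in (\ref{I.003}).

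For the diagonal coefficients $a_{n+k,N}$ the derivation is structurally similar but genuinely more involved, and I expect this to be the main obstacle. Here (\ref{anN-preexpansion}) gives $a_{n+k,N} = -\frac{x^{1/2}}{n}\,\partial_{s_1}\Delta_1\log\tau^2_{n+k,n+k}$, so one must expand $\Delta_1$ using part (b) of Theorem \ref{priors}, equation (\ref{I.004}). Unlike the $b^2$ case, the leading behaviour is governed by a differential-difference operator: the $\Delta_1$ difference manifests as the Taylor coefficients $\frac{1}{j!}\partial^j_w[w^{2-2g_1}e_{g_1}]_{w=1}$ appearing in (\ref{I.004}), so the single power $n^{-g}$ in $a_{n+k,N}$ receives contributions from pairs $(g_1,j)$ satisfying the resonance condition $2g_1 + j = g+1$. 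The bookkeeping to match the overall power of $n$ — tracking the $\frac1n$ prefactor from (\ref{anN-preexpansion}), the $n^{2-2g_1}$ from the genus expansion, and the $(k/n)^j$ factors from the difference expansion — is what produces the constrained sum displayed in the statement of $h_g$. I would carry this out carefully, using the uniform error bound (\ref{Delest}) to justify that the tail beyond order $n^{-g}$ is controlled by $C_{g}/n^{g+1}$ type estimates, and the term-by-term differentiability to legitimize applying $\partial_{s_1}$.

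Finally, to obtain the self-similar scalings (\ref{hw}) and (\ref{fw}), I would expand the $\partial/\partial w$ and $\partial/\partial s_1$ derivatives acting on $w^{2-2g}e_g(w^{-1/2}s_1, w^{\nu-1/2}s_{2\nu+1})$ and observe that, by the chain rule, each such derivative in $w$ brings down either an explicit power of $w$ or a derivative of $e_g$ evaluated at the scaled arguments, always in the combination $s_1 w^{-1/2}$ and $s_{2\nu+1}w^{\nu-1/2}$. Because the arguments of $e_g$ depend on $s_1,s_{2\nu+1},w$ only through these two scaling-invariant products, the entire resulting expression can be reorganized as an overall power of $w$ times a function of $(s_1 w^{-1/2}, s_{2\nu+1}w^{\nu-1/2})$ alone; matching the homogeneity degree gives the prefactors $w^{1/2-g}$ and $w^{1-2g}$ respectively, and defines the functions $u_g$ and $z_g$. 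Their analyticity near $(0,0)$ follows from the analyticity of $e_g$ asserted in clause (a) of Theorem \ref{priors}, since $u_g$ and $z_g$ are finite linear combinations of derivatives of $e_g$ evaluated at the origin.
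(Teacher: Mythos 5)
Your proposal follows essentially the same route as the paper: both expansions are read off from the Hirota representations \eqref{anN-preexpansion}--\eqref{bnN-preexpansion} by substituting the expansion of Theorem \ref{priors}, and the self-similar scalings \eqref{hw}--\eqref{fw} are obtained by the same chain-rule/induction observation that a $w$-derivative of a power of $w$ times a function of $(w\tilde w)^{-1/2}s_1$, $(w\tilde w)^{\nu-1/2}s_{2\nu+1}$ again has that form with the exponent lowered by one. Two points in your write-up would lead you astray if executed literally, though neither reflects a wrong idea. First, for $b^2_{n+k,N}$ the term $\Delta_k\log\tau^2_{n,n}$ must not be dismissed as ``subleading'': by the definition \eqref{Delta} one has $\log\tau^2_{n,n}+\Delta_k\log\tau^2_{n,n}=\log\tau^2_{n+k,n+k}$ exactly, and it is precisely this $\Delta_k$ piece that carries the entire $\tilde w$-dependence of every coefficient $f_g$; the correct (and simpler) move, which your final formula presupposes, is to apply \eqref{I.003} to the sum as a whole. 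Second, for $a_{n+k,N}$ the operator $\Delta_1$ in \eqref{anN-preexpansion} is based at $n+k$, so the Taylor increments are $1/(n+k)$, not $k/n$; the paper sets $\hat w = 1+\tfrac{1}{n+k}$, collects coefficients of $(n+k)^{1-g}$ under the constraint $2g_1+j=g+1$, and only then converts via $n+k=n\tilde w$ --- this is exactly where the clean prefactor $\tilde w^{1-g}$ comes from, whereas tracking $(k/n)^j$ factors directly would scatter the powers of $\tilde w$ through the sum. With those two adjustments your bookkeeping lands on the paper's proof.
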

\begin{proof}
We first consider (\ref{f1}).  By (\ref{bnN-preexpansion}), (\ref{I.003}) and the fact that these asymptotic series may be differentiated term by term, one has
\begin{align}
\nonumber b^2_{n+k,N} = f(s_1, s_{2\nu+1},\tilde{w}) &= x \sum_{g=0}^{\infty} \frac{1}{n^{2g}} \tilde{w}^{2-2g}  \frac{\partial^2}{\partial s_1^2}e_g( \tilde{w}^{-1/2} s_1, \tilde{w}^{\nu - 1/2} s_{2\nu+1} ) \\
\label{bnN-expansion}  &= x \sum_{g=0}^{\infty} \frac{1}{n^{2g}} \tilde{w}^{1-2g}  \frac{\partial^2}{\partial q^2}e_g( q, \tilde{w}^{\nu - 1/2} s_{2\nu+1} )|_{q=\tilde{w}^{-1/2} s_1}.
\end{align}
Thus we define 
\begin{equation*}
z_g(s_1 w^{-1/2}, s_{2\nu+1} w^{\nu - \frac{1}{2}}) \doteq  \frac{\partial^2}{\partial q^2}e_g( q, w^{\nu - 1/2} s_{2\nu+1} )|_{q= w^{-1/2} s_1}
\end{equation*}
which, in light of Theorem \ref{priors}, establishes all claims concerning $f$ and $f_g$.  The case for $h$ and $h_g$ proceeds in essentially the same manner but is a bit more complicated. By (\ref{anN-preexpansion}) and (\ref{I.003}) one has 
\begin{eqnarray*}
a_{n+k,N} &=& -\frac{x^{1/2} }{n} \frac{\partial}{\partial s_1} \sum_{g \geq 0} \left[ (n+k+1)^{2-2g} e_g\left( (\tilde{w} + 1/n)^{-1/2} s_1, (\tilde{w} + 1/n)^{\nu-1/2} s_{2\nu+1} \right) \right.
\\ && \phantom{-\frac{x^{1/2} }{n} \frac{\partial}{\partial s_1} \sum_{g \geq 0}}
\left. - (n+k)^{2-2g} e_g\left( \tilde{w}^{-1/2} s_1, \tilde{w}^{\nu-1/2} s_{2\nu+1} \right) \right]. 
\end{eqnarray*} 
Setting $\hat{w} = 1 + \frac{1}{n+k}$ this may be rewritten as
\begin{eqnarray*}
 h(s_1, s_{2\nu+1},\tilde{w}) &=&  -\frac{x^{1/2}  }{n} \frac{\partial}{\partial s_1}\sum_{g \geq 0} (n+k)^{2-2g} \left[ \hat{w} ^{2-2g} e_g\left((\hat{w} \tilde{w})^{-1/2} s_1, (\hat{w} \tilde{w})^{\nu-1/2} s_{2\nu+1}\right) 
\right. \\ && \phantom{-\frac{x^{1/2}  }{n} \frac{\partial}{\partial s_1}\sum_{g \geq 0}} \left.
- e_g\left( \tilde{w}^{-1/2} s_1, \tilde{w}^{\nu-1/2} s_{2\nu+1} \right)\right].
 \end{eqnarray*}
We next expand the summands in terms of Taylor series in the continuous variable $w$ centered at $w=1$ and evaluated at 
$w = 1 + \frac{1}{n+k}$: 
\begin{align*} 
h(s_1, s_{2\nu+1},\tilde{w}) &= -\frac{x^{1/2}  }{n} \frac{\partial}{\partial s_1} \sum_{g \geq 0} (n+k)^{2-2g} \times
\\ &
\sum_{j\geq 1} \frac1{j!} \frac{\partial^j}{\partial w^j} \left[ w^{2-2g} e_g\left((w \tilde{w})^{-1/2} s_1, (w \tilde{w})^{\nu-1/2} s_{2\nu+1}\right)\right]_{w = 1} \frac{1}{(n+k)^j}\\
&\hspace{-0.25in}
= -\frac{x^{1/2} }{n} \frac{\partial}{\partial s_1} \sum_{g \geq 0} (n+k)^{1-g} 
\times \\ &
\sum_{\begin{matrix} 2 g_1 + j = g+1 \\  g_1 \geq 0\,, j>0\end{matrix}}  \frac1{j!} \frac{\partial^j}{\partial w^j} \left[  w^{2-2g_1} e_{g_1}\left((w \tilde{w})^{-1/2} s_1, (w\tilde{w})^{\nu-1/2} s_{2\nu+1}\right)\right]_{w = 1} \\
&\hspace{-0.25in} 
= x^{1/2}  \sum_{g \geq 0} n^{-g} \tilde{w}^{1-g} 
\sum_{\begin{matrix} 2 g_1 + j = g+1 \\  g_1 \geq 0\,, j>0\end{matrix}}  - \frac1{j!} \frac{\partial^{j+1}}{\partial s_1\partial w^j} \bigg[ 
\\ & \hspace{3cm} 
w^{2-2g_1} e_{g_1}\left((w\tilde{w})^{-1/2} s_1, (w\tilde{w})^{\nu-1/2} s_{2\nu+1}\right)\bigg]_{w = 1}. 
\end{align*}
In the second equality above, we have collected the coefficients of $(n+k)^{1-g}$; and in the last equality, we make use of the relation $n+k =  n\tilde{w}$. 
To see the self-similar structure of the internal sum in this last line, observe that
\begin{align*}
\frac{\partial}{\partial w} w^m E\left( (w\tilde{w})^{-1/2} s_1, (w\tilde{w})^{\nu-1/2} s_{2\nu+1} \right) &= m w^{m-1} E\left((w\tilde{w})^{-1/2} s_1, (w\tilde{w})^{\nu-1/2} s_{2\nu+1} \right) \\
& \hspace{-2in} 
+ (\nu-\frac12) w^{m -1} \left( (w\tilde{w})^{\nu-1/2} s_{2\nu+1} \right) \frac{\partial}{\partial q_2} E(w\tilde{w})^{-1/2} s_1, q_2 )_{q_2 = (w\tilde{w})^{\nu-1/2} s_{2\nu+1}} \\
&\hspace{-1.75in} 
- \frac12 w^{m -1} \left( (w\tilde{w})^{-1/2} s_{1} \right) \frac{\partial}{\partial q_1} E(q_1, (w\tilde{w})^{\nu-1/2} s_{2\nu+1} )_{q_1 = (w\tilde{w})^{-1/2} s_{1}}\, ,
\end{align*}
where $E(q_1, q_2)$ is an arbitrary analytic function of $(q_1, q_2)$. 
We see from this equation that a $w$ derivative of a power of $w$ times a function of the self-similar variables $q_1 = (w\tilde{w})^{-1/2} s_1$, $q_2 = (w\tilde{w})^{\nu-1/2} s_{2\nu+1}$ has the same form, with the power of the pre-factor reduced by $1$.
Thus, by induction, the summands of the expression for $h_g(s_1, s_{2\nu+1},\tilde{w})$ are of the form 
\begin{eqnarray*}
&& \tilde{w}^{1-g} \frac{\partial^{j+1}}{\partial s_1\partial w^j} \left[ w^{2- 2g_1} e_{g_1}\left( (w\tilde{w})^{-1/2} s_1, (w\tilde{w})^{\nu-1/2} s_{2\nu+1}\right)\right]_{w=1}\\
&=& \tilde{w}^{1/2-g} \frac{\partial^{j}}{\partial w^j} \left[ w^{3/2- 2g_1} \frac{\partial}{\partial q_1}e_{g_1}\left( (q_1, (w\tilde{w})^{\nu-1/2} s_{2\nu+1}\right)_{q_1 = (w\tilde{w})^{-1/2} s_1}\right]_{w=1}\\
 &=& \tilde{w}^{1/2 - g} E_j( \tilde{w}^{-1/2} s_1, \tilde{w}^{\nu-1/2} s_{2\nu+1}) \,,
\end{eqnarray*}
for a function $E_j$ of the self-similar variables $w^{-1/2} s_1$ and $w^{\nu-1/2} s_{2\nu+1}$. The claims concerning $h$ and $h_g$ now follow from these observations.
\end{proof}

\noindent {\bf Example.} The terms of order less than $1/n^2$ in these series have the coefficients 
\begin{eqnarray} \nonumber 
h_0(s_1, s_{2\nu+1},\tilde{w}) & = & - \tilde{w}\frac{\partial^2}{\partial s_1 \partial w } w^2 e_0( (w\tilde{w})^{-1/2} s_1, (w\tilde{w})^{\nu - 1/2} s_{2\nu+1}) \bigg|_{w=1} \\ \nonumber
& = & - \frac{\partial^2}{\partial s_1 \partial (w\tilde{w}) } (w\tilde{w})^2 e_0( (w\tilde{w})^{-1/2} s_1, (w\tilde{w})^{\nu - 1/2} s_{2\nu+1}) \bigg|_{w\tilde{w} = \tilde{w}} \\ \label{h0term}
& = & - \frac{\partial^2}{\partial s_1 \partial w } w^2 e_0( w^{-1/2} s_1, w^{\nu - 1/2} s_{2\nu+1}) \bigg|_{w = \tilde{w}},
\end{eqnarray}
at order 1 in $h$, where in the second line we have made a change of variables while in the third line we have relabeled $w\tilde{w}$ as $w$; and, 
\begin{eqnarray}
f_0(s_1, s_{2\nu+1},\tilde{w}) & = & \tilde{w}^{2}  \frac{\partial^2}{\partial s_1^2}e_0( \tilde{w}^{-1/2} s_1, \tilde{w}^{\nu - 1/2} s_{2\nu+1} )\nonumber \\
h_1(s_1, s_{2\nu+1}, \tilde{w}) &=& - \frac{1}{2} \frac{\partial^3 }{\partial s_1 \partial w^2} w^2 e_0( (w\tilde{w})^{-1/2} s_1, (w\tilde{w})^{\nu-1/2} ) \bigg|_{w=1}\nonumber \\
&=& - \frac{1}{2} \frac{\partial^3 }{\partial s_1 \partial (w\tilde{w})^2} (w\tilde{w})^2 e_0( (w\tilde{w})^{-1/2} s_1, (w\tilde{w})^{\nu-1/2} ) \bigg|_{w\tilde{w} = \tilde{w}} \nonumber \\
&=& -\frac{1}{2} \frac{\partial^3 }{\partial s_1 \partial {w}^2} {w}^2 e_0( w^{-1/2} s_1, w^{\nu-1/2})\bigg|_{w = \tilde{w}}, \label{h1term}
\end{eqnarray}
at order 1 in $f$ and order $1/n$ in $h$ respectively.

In particular, the terms up through order $1/n$ of $a_{n, N}$ and $b_{n, N}$ are, respectively,
\begin{align} \label{anN-expansion}
x^{1/2} h_0(s_1, s_{2\nu+1}, 1) &= \nonumber
x^{1/2} u_0(s_1,  s_{2\nu+1}) \\
&=  -  x^{1/2} \frac{\partial^2}{\partial s_1 \partial w} w^2 e_0(w^{-1/2} s_1, w^{\nu - 1/2} s_{2\nu+1})|_{w = 1}\\
\nonumber   x^{1/2} h_1(s_1, s_{2\nu+1}, 1) &= 
x^{1/2} u_1(s_1,  s_{2\nu+1}) 
\\&=  - \frac12  x^{1/2} \frac{\partial^3}{\partial s_1 \partial w^2} w^2 e_0(w^{-1/2} s_1, w^{\nu - 1/2} s_{2\nu+1})|_{w = 1}\\
x f_0(s_1, s_{2\nu+1}, 1) &= x  z_0(s_1,  s_{2\nu+1}) \nonumber \\
&= x \frac{\partial^2}{\partial s_1^2}e_0( s_1,  s_{2\nu+1} ).
\end{align}
In Subsection \ref{locsde} we will show that the coefficients $u_0$ and $z_0$ defined here are indeed the same as the functions introduced in Section \ref{sec:1} to describe how the endpoints of the support of the equilibrium measure depend on the parameters in the exponential weight.
\bigskip

We also introduce a shorthand notation to denote the expansion of the coefficients of $h(s_1, s_{2\nu+1},\tilde{w})$ and $f(s_1, s_{2\nu+1},\tilde{w})$ around $w = 1$. This is analogous to what was done in interpreting (\ref{I.004}) via (\ref{Delest}), the main difference being that the order of summation is interchanged. This is again justified by the asymptotic interpretation (\ref{Delest}) where both summations are finite.
\begin{defn} For $\tilde{w} = 1 + k/n$ with  $|k| \leq 2\nu$ and $\frac{2\nu}{n} < \epsilon$,
\begin{eqnarray}
\label{h1k} h(s_1, s_{2\nu+1},\tilde{w}) &=& \sum_{m=0}^\infty \frac{h_{w^{(m)}}|_{w=1}}{m!} \left(\frac{k}{n}\right)^m\\
\label{f1k} f(s_1, s_{2\nu+1},\tilde{w}) &=& \sum_{m=0}^\infty \frac{f_{w^{(m)}}|_{w=1}}{m!} \left(\frac{k}{n}\right)^m
\end{eqnarray}
where the subscript $w^{(m)}$ denotes the formal operation of taking the $m^{th}$ derivative with respect to $w$ of each coefficient of $h$ (respectively $f$):
\begin{eqnarray*}
h_{w^{(m)}} &=& \sum_{g \geq 0} \frac{\partial^{m}}{\partial w^m}  h_g(s_1, s_{2\nu+1},w) \frac{1}{n^g}\\
f_{w^{(m)}}   &=& \sum_{g \geq 0} \frac{\partial^{m}}{\partial w^m}  f_g(s_1, s_{2\nu+1},w) \frac{1}{n^{2g}}
\end{eqnarray*}
As valid asymptotic expansions these representations denote the asymptotic series whose successive terms are gotten by collecting all terms with a common power of $1/n$ in (\ref{h1k}) (respectively (\ref{f1k})). 
\end{defn}

In what follows, in the rest of section 3 and in section 4, we will frequently abuse notation and drop the evaluation at $w=1$. In particular, with $x=1$, we will write 
\begin{eqnarray}
\label{h1kform} a_{n+k, N} &=& \sum_{m=0}^\infty \frac{h_{w^{(m)}}}{m!} \left(\frac{k}{n}\right)^m = \sum_{m=0}^\infty \frac{1}{m!} \sum_{g \geq 0} \frac{\partial^{m}}{\partial w^m}  h_g(s_1, s_{2\nu+1},w) \frac{1}{n^g} \left(\frac{k}{n}\right)^m\\
\label{f1kform} b^2_{n+k, N} &=& \sum_{m=0}^\infty \frac{f_{w^{(m)}}}{m!} \left(\frac{k}{n}\right)^m = \sum_{m=0}^\infty \frac{1}{m!} \sum_{g \geq 0} \frac{\partial^{m}}{\partial w^m}  f_g(s_1, s_{2\nu+1},w) \frac{1}{n^{2g}}\left(\frac{k}{n}\right)^m
\end{eqnarray}
In doing this these series must now be regarded as formal but whose orders are still defined by collecting all terms with a common power of $1/n$.  They will be substituted into the difference string and the Toda equations to derive the respective continuum equations. At any point in this process, if one evaluates these expressions at $w=1$ one may recover valid asymptotic expansions in which the $a_{n+k, N}$ and $b^2_{n+k, N}$ have their original significance as valid asymptotic expansions of the recursion coefficients. In particular, in Section 5, the results of the formal derivations will be evaluated at $w=1$ and we will recover explicit expressions for the $e_g$ appearing in the asymptotic expansion of the partition function. 

\subsection{Continuum Limits of the Difference String Equations} \label{ssec:dse}
We are now in a position to substitute our asymptotic expansions for $a_{n+k,N}$ and $b_{n+k,N}$ into the difference string equations, (\ref{string-star1}) and (\ref{string-star2}). Collecting terms in these equations order by order in powers of $1/n$ we will have a hierarchy of equations that, in principle, allows one to recursively determine the coefficients of (\ref{h1}) and  (\ref{f1}). We will refer to this hierarchy as the {\it Continuum Difference String Equations}. (Note that one has such a hierarchy for each value of $\nu$.) Of course this is a standard procedure in perturbation theory. The equations we will derive are odes in which $w$, now regarded as a continuous variable, is the independent variable. The variables $s_1$ and $s_{2\nu+1}$ here are parameters on which the ode depends analytically. One must still determine, at each level of the hierarchy, which solution of the ode is the one that corresponds to the expressions given for $h_g$ and $f_g$ in Theorem \ref{thm:hf}. This amounts to a kind of solvability condition which will be imposed through a small number of initial Taylor coefficients of $e_g$ to insure that the solution coincides with its enumerative interpretation in terms of counting maps. This solvability analysis will be illustrated in detail in Section \ref{sec:55}. In this subsection the main emphasis will be to derive the form of the continuum string difference equations and their general solutions.
\medskip

From this point on in this section (and in fact for the remainder of the paper) we will set $x = 1$; i.e., $n=N$.  This has the effect of centering the matrix size $n$ at the same scale as that of the potential, $N$. We will also set $s_1 = 0$ from now on since its role in determining the structure of the asymptotic expansions of $a_{n+k}$ and $b_{n+k}$ is now completed.  When $x=1$, $s_{2\nu+1} = t_{2\nu+1}$; however, we will continue to present statements in terms of the $s$-variables. If one wants to subsequently ``detune" to a value $x \lesssim 1$ one can do this by replacing $s_{2\nu+1}$ with its expressions in (\ref{nuscaling2}) and comparing to (\ref{h1}) and (\ref{f1}).  Finally, when the context is clear, we will for simplicity just use $s$ to denote $s_{2\nu+1}$.
\medskip

We begin by substituting the expansions (\ref{h1kform}) and  (\ref{f1kform}) into the difference equations, (\ref{string-star1}) and (\ref{string-star2}), satisfied by these coefficients (as represented through $\mathcal{L}$). We arrive at the following formal asymptotic equations.
For equation (\ref{string-star1}) one has:
\begin{align} \label{string-expansion-1}
0 &=  \left[ \sum_{m=1}^\infty \frac{h_{w^{(m)}}}{m!} \left( \frac{1}{n} \right)^m \right] \left\{\left[ \sum_{m=0}^\infty \frac{f_{w^{(m)}}}{m!} \left( \frac{1}{n} \right)^m \right] \right.
\nonumber \\&\phantom{=}
 + (2\nu+1) s_{2\nu+1} \left[ \sum_{P \in \mathcal{P}^{2\nu}(1, 0) }
\right. \\ &\phantom{=} \left.\left.
\left[ \prod_{p_a = 1}^{2\mu(P)+1} \sum_{m=0}^\infty \frac{h_{w^{(m)}}}{m!} \left( \frac{\ell_{p_a} }{n} \right)^m \right] 
\left[ \prod_{p_b = 1}^{\nu - \mu(P)} \sum_{m=0}^\infty \frac{f_{w^{(m)}}}{m!} \left( \frac{\ell_{p_b} }{n} \right)^m \right] \right] \right\}
\nonumber \\&\phantom{=}
+ (2\nu+1) s_{2\nu+1} \left[ \sum_{P \in \mathcal{P}^{2\nu}(2,0) }
\nonumber \right. \\&\phantom{=}
\left[ \prod_{p_a = 1}^{2\mu(P)} \sum_{m=0}^\infty \frac{h_{w^{(m)}}}{m!} \left( \frac{\ell_{p_a}}{n} \right)^m \right] 
\left[ \prod_{p_b = 1}^{\nu - \mu(P) +1} \sum_{m=0}^\infty \frac{f_{w^{(m)}}}{m!} \left( \frac{\ell_{p_b} }{n} \right)^m \right] 
\nonumber  \\ &\phantom{=} \left.
- \left[ \prod_{p_a = 1}^{2\mu(P)} \sum_{m=0}^\infty \frac{h_{w^{(m)}}}{m!} \left( \frac{\ell_{p_a} -1 }{n} \right)^m \right]
 \left[ \prod_{p_b = 1}^{\nu - \mu(P) + 1} \sum_{m=0}^\infty \frac{f_{w^{(m)}}}{m!} \left( \frac{\ell_{p_b} -1 }{n} \right)^m \right] \right]
\end{align}
where $\mu(P) = \lfloor \sigma/2 \rfloor$ for $\sigma$ equal to the total number of horizontal steps in a given path $P$ and 
$\ell_{p_a}$ (respectively $\ell_{p_b}$) denotes the lattice location of the path at the $p_a^{th}$ horizontal step (respectively before the $p_b^{th}$ downstep). Note also that we have taken advantage of the {\it discrete space homogeneity} of these walks in order to shift the initial/final 
points of these paths to $(1,0), (2,0)$ and $(2,1)$ in the respective cases. Note further that each term of \eqref{string-expansion-1} is divisible by 
\begin{equation*}
b_{n+1}^2 = \sum_{m=0}^\infty \frac{f_{w^{(m)}}}{m!} \left( \frac{1}{n} \right)^m \,.
\end{equation*}

Likewise for (\ref{string-star2}) we find
\begin{align} \label{string-expansion-2}
\frac{1}{n} &= \left[ \sum_{m=1}^\infty \frac{f_{w^{(m)}}}{m!}  \left( \frac{ 1}{n} \right)^m \right] 
\nonumber \\ &\phantom{=} 
+ (2\nu+1) s_{2\nu+1} \left[ \sum_{P \in \mathcal{P}^{2\nu}(1,0) } \right. \nonumber
\\ &\phantom{=}
 \left[ \prod_{p_a = 1}^{2\mu(P)+1} \sum_{m=0}^\infty \frac{h_{w^{(m)}}}{m!} \left( \frac{\ell_{p_a}}{n} \right)^m \right] \left[ \prod_{p_b = 1}^{\nu - \mu(P)} \sum_{m=0}^\infty \frac{f_{w^{(m)}}}{m!} \left( \frac{\ell_{p_b} }{n} \right)^m \right] 
\nonumber  \\ &\phantom{=} \left. 
- \left[ \prod_{p_a=1}^{2\mu(P)+1} \sum_{m=0}^\infty \frac{h_{w^{(m)}}}{m!} \left( \frac{\ell_{p_a} -1 }{n} \right)^m \right] 
\left[ \prod_{p_b=1}^{\nu - \mu(P)} \sum_{m=0}^\infty \frac{f_{w^{(m)}}}{m!} \left( \frac{\ell_{p_b} -1}{n} \right)^m \right] \right]\, .
\end{align}
\medskip

To begin with, the equations at leading order are
\begin{eqnarray*} \label{leadstring}
0 &=& \partial_w h_0 \left( 1 + (2\nu + 1)s_{2\nu+1} \sum_{\mu = 0}^{\nu-1} {2\nu \choose 2\mu + 1, \nu-\mu-1, \nu-\mu} h_0^{2\mu+1} f_0^{\nu - \mu-1}\right)\\
&+& (2\nu + 1)s_{2\nu+1} \sum_{\mu = 0}^{\nu-1}{2\nu \choose 2\mu , \nu-\mu-1, \nu-\mu +1} \partial_w \left(h_0^{2\mu} f_0^{\nu - \mu+1}\right) / f_0\,,\\
1 &=& \partial_w f_0 +  (2\nu + 1)s_{2\nu+1} \sum_{\mu = 0}^{\nu-1}{2\nu \choose 2\mu + 1 , \nu-\mu-1, \nu-\mu } \partial_w \left(h_0^{2\mu+1} f_0^{\nu - \mu}\right)\,.
\end{eqnarray*}

This may be written in a vector form as 
\begin{prop}
\begin{equation} \label{sdgnu}
 {\bf A} \begin{pmatrix} h_{0, w} \\ f_{0, w} \end{pmatrix}  = \begin{pmatrix} 0 \\ 1 \end{pmatrix}
\end{equation}
 where
\begin{align}
{\bf A}_{11} &= 1 + (2\nu+1) s_{2\nu+1} \sum_{\mu=0}^{\nu-1} \binom{2\nu}{2\mu+1, \nu-\mu-1, \nu-\mu} h_0^{2\mu+1} f_0^{\nu-\mu-1} \\
&\phantom{= f_0 } \nonumber
+ (2\nu+1) s_{2\nu+1} \sum_{\mu=0}^{\nu-1} \binom{2\nu}{2\mu, \nu-\mu-1, \nu-\mu+1} (2\mu) h_0^{2\mu-1} f_0^{\nu-\mu}\,,  \\
{\bf A}_{12} &= (2\nu+1) s_{2\nu+1} \sum_{\mu=0}^{\nu-1} \binom{2\nu}{2\mu, \nu-\mu-1, \nu-\mu+1} (\nu-\mu+1) h_0^{2\mu} f_0^{\nu-\mu-1} \,, \\
{\bf A}_{21} &= f_0 {\bf A}_{12} \,, \label{a21}\\
{\bf A}_{22} &= 1 + (2\nu+1) s_{2\nu+1} \sum_{\mu=0}^{\nu-1} \binom{2\nu}{2\mu+1, \nu-\mu-1, \nu-\mu} (\nu-\mu) h_{0}^{2\mu+1} f_0^{\nu-\mu-1} = {\bf A}_{11} \,,
\end{align}
in terms of tri-nomial coefficients.
\end{prop}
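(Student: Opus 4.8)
The plan is to derive the matrix form (\ref{sdgnu}) directly from the two scalar leading-order equations displayed immediately above the proposition, by carrying out the indicated $w$-derivatives via the product and chain rules, collecting the coefficients of $h_{0,w}$ and $f_{0,w}$, and finally verifying the two structural identities $\mathbf{A}_{21} = f_0 \mathbf{A}_{12}$ and $\mathbf{A}_{22} = \mathbf{A}_{11}$ that are asserted in (\ref{a21}) and in the last displayed line. I take the two leading-order equations as given, since they are obtained earlier by extracting the $O(1/n)$ part of (\ref{string-expansion-1}) and (\ref{string-expansion-2}).

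First I would expand the single derivative in the diagonal (second) leading-order equation, $\partial_w(h_0^{2\mu+1}f_0^{\nu-\mu}) = (2\mu+1)h_0^{2\mu}f_0^{\nu-\mu}h_{0,w} + (\nu-\mu)h_0^{2\mu+1}f_0^{\nu-\mu-1}f_{0,w}$; its $f_{0,w}$-coefficient reproduces $\mathbf{A}_{22}$ verbatim, and its $h_{0,w}$-coefficient furnishes the candidate for $\mathbf{A}_{21}$. Next I would expand the derivative in the subdiagonal (first) equation, $\partial_w(h_0^{2\mu}f_0^{\nu-\mu+1})/f_0 = 2\mu\, h_0^{2\mu-1}f_0^{\nu-\mu}h_{0,w} + (\nu-\mu+1)h_0^{2\mu}f_0^{\nu-\mu-1}f_{0,w}$, the division by $f_0$ being legitimate since $f_0 = z_0$ is nonvanishing near $s_{2\nu+1}=0$, where $z_0 = 1$. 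Here the $f_{0,w}$-coefficient is exactly $\mathbf{A}_{12}$, while the $h_{0,w}$-coefficient, added to the prefactor $1 + (2\nu+1)s_{2\nu+1}\sum_{\mu}\binom{2\nu}{2\mu+1,\nu-\mu-1,\nu-\mu}h_0^{2\mu+1}f_0^{\nu-\mu-1}$ already multiplying $h_{0,w}$ in that equation, assembles into $\mathbf{A}_{11}$. At this point both equations have been cast as $\mathbf{A}_{11}h_{0,w}+\mathbf{A}_{12}f_{0,w}=0$ and $(\text{candidate }\mathbf{A}_{21})\,h_{0,w}+\mathbf{A}_{22}f_{0,w}=1$.

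It then remains to verify the two trinomial identities. The relation $\mathbf{A}_{21}=f_0\mathbf{A}_{12}$ follows term by term: multiplying $\mathbf{A}_{12}$ by $f_0$ raises the power of $f_0$ from $\nu-\mu-1$ to $\nu-\mu$, and the elementary identity $\binom{2\nu}{2\mu+1,\nu-\mu-1,\nu-\mu}(2\mu+1) = \binom{2\nu}{2\mu,\nu-\mu-1,\nu-\mu+1}(\nu-\mu+1)$, both sides equal to $(2\nu)!/[(2\mu)!(\nu-\mu-1)!(\nu-\mu)!]$, matches it to the candidate $\mathbf{A}_{21}$. The identity $\mathbf{A}_{22}=\mathbf{A}_{11}$ is the more delicate point and I expect it to be the main obstacle, since the two expressions agree only after a reindexing rather than monomial by monomial. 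I would shift $\mu\mapsto\mu+1$ in the second sum defining $\mathbf{A}_{11}$ (the one carrying the factor $2\mu$, whose $\mu=0$ term vanishes), so that both contributions to $\mathbf{A}_{11}$ are indexed by the common monomial $h_0^{2\mu+1}f_0^{\nu-\mu-1}$, and then check that the total coefficient equals $(\nu-\mu)\binom{2\nu}{2\mu+1,\nu-\mu-1,\nu-\mu}$, as in $\mathbf{A}_{22}$. This reduces to the factorial identity $\binom{2\nu}{2\mu+2,\nu-\mu-2,\nu-\mu}(2\mu+2) = \binom{2\nu}{2\mu+1,\nu-\mu-1,\nu-\mu}(\nu-\mu-1)$, with both sides equal to $(2\nu)!/[(2\mu+1)!(\nu-\mu-2)!(\nu-\mu)!]$. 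I would finally treat the boundary index $\mu=\nu-1$ separately, where the factor $\nu-\mu-1=0$ kills the extra contribution on both sides and the summation ranges are seen to match, completing the identification and hence the proof.
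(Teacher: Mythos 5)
Your proposal is correct and follows essentially the same route as the paper, which presents the proposition as a direct vector rewriting of the two leading-order scalar equations displayed just above it; you simply make explicit the product-rule expansion and the two trinomial identities (for $\mathbf{A}_{21}=f_0\mathbf{A}_{12}$ and, after the index shift $\mu\mapsto\mu+1$, for $\mathbf{A}_{22}=\mathbf{A}_{11}$) that the paper leaves to the reader. Both factorial identities check out as you state them, and your handling of the boundary index $\mu=\nu-1$ is correct.
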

\bigskip

We will see that the coefficient matrix of the $n^{-2g-1}$ terms is of the same form as the matrix in equation (\ref{sdgnu}).  Thus the following lemma will be useful:
\begin{lem} \label{lemma-inverse}
For the matrix ${\bf A}$ given in \eqref{sdgnu} 
\begin{equation*}
{\bf A}^{-1} = 
\begin{pmatrix} f_{0, w} & h_{0, w} \\ h_{0, w} f_0 & f_{0, w} \end{pmatrix}\,.
\end{equation*}
\end{lem}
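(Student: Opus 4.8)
The plan is to exploit the special block structure of $\mathbf{A}$ recorded in the preceding Proposition, namely the two identities $\mathbf{A}_{22} = \mathbf{A}_{11}$ and $\mathbf{A}_{21} = f_0 \mathbf{A}_{12}$ (the latter being \eqref{a21}). Together these say that $\mathbf{A}$ has the form $\begin{pmatrix} \mathbf{A}_{11} & \mathbf{A}_{12} \\ f_0 \mathbf{A}_{12} & \mathbf{A}_{11}\end{pmatrix}$, and one sees at once that the proposed inverse $\mathbf{B} = \begin{pmatrix} f_{0,w} & h_{0,w} \\ f_0 h_{0,w} & f_{0,w}\end{pmatrix}$ shares exactly the same shape. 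The key observation is that I should never need the explicit (complicated, trinomial-sum) expressions for $\mathbf{A}_{11}$ and $\mathbf{A}_{12}$: the only facts about them I will use are the two structural identities above, together with the two scalar equations already encoded in the leading-order system \eqref{sdgnu}, namely $\mathbf{A}_{11} h_{0,w} + \mathbf{A}_{12} f_{0,w} = 0$ (first row) and $\mathbf{A}_{21} h_{0,w} + \mathbf{A}_{11} f_{0,w} = 1$ (second row).

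Rather than invert $\mathbf{A}$ by Cramer's rule — which would force me to carry the determinant $\mathbf{A}_{11}^2 - f_0 \mathbf{A}_{12}^2$ and argue it is nonzero — I would verify the claim directly by forming the product $\mathbf{A}\mathbf{B}$ and computing its four entries. The $(1,2)$ entry is $\mathbf{A}_{11} h_{0,w} + \mathbf{A}_{12} f_{0,w}$, which vanishes by the first equation of \eqref{sdgnu}; the $(2,1)$ entry is $f_0$ times the same combination and hence also vanishes. The $(2,2)$ entry is $f_0 \mathbf{A}_{12} h_{0,w} + \mathbf{A}_{11} f_{0,w} = \mathbf{A}_{21} h_{0,w} + \mathbf{A}_{11} f_{0,w}$, which equals $1$ by the second equation. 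Finally the $(1,1)$ entry is $\mathbf{A}_{11} f_{0,w} + \mathbf{A}_{12} f_0 h_{0,w}$; rewriting $\mathbf{A}_{12} f_0 = \mathbf{A}_{21}$ turns this into the identical left-hand side of the second equation, so it too equals $1$. Hence $\mathbf{A}\mathbf{B} = I$.

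The only point requiring a word of care is the passage from a one-sided to a two-sided inverse. Since the entries are analytic functions of the parameters, $\mathbf{A}$ and $\mathbf{B}$ are $2 \times 2$ matrices over a commutative ring, and for such matrices $\mathbf{A}\mathbf{B} = I$ already forces $\det \mathbf{A}$ to be a unit and $\mathbf{B} = \mathbf{A}^{-1}$, so no separate check of $\mathbf{B}\mathbf{A} = I$ is needed. I do not anticipate a genuine obstacle: the entire content is the recognition that the two rows of the defining system \eqref{sdgnu} are precisely the relations that collapse the off-diagonal entries of $\mathbf{A}\mathbf{B}$ to $0$ and the diagonal entries to $1$. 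If one instead prefers the Cramer's-rule route, those same two equations identify $f_{0,w} = \mathbf{A}_{11}/\det\mathbf{A}$ and $h_{0,w} = -\mathbf{A}_{12}/\det\mathbf{A}$, and substituting these into the standard adjugate formula reproduces $\mathbf{B}$ entry by entry, with $\det\mathbf{A} \neq 0$ guaranteed near $s_{2\nu+1} = 0$, where $\mathbf{A} \to I$.
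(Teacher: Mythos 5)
Your proof is correct, and it rests on exactly the same two ingredients as the paper's: the structural identities $\mathbf{A}_{22}=\mathbf{A}_{11}$ and $\mathbf{A}_{21}=f_0\,\mathbf{A}_{12}$, together with the fact that \eqref{sdgnu} identifies $(h_{0,w},f_{0,w})^{T}$ as the second column of $\mathbf{A}^{-1}$. The paper takes precisely the Cramer's-rule route you sketch in your closing sentence: it writes the $2\times 2$ adjugate formula, reads off $f_{0,w}=\mathbf{A}_{11}/\det\mathbf{A}$ and $h_{0,w}=-\mathbf{A}_{12}/\det\mathbf{A}$ from the second column, and then the first column $(\mathbf{A}_{11},\,-f_0\mathbf{A}_{12})^{T}/\det\mathbf{A}=(f_{0,w},\,f_0h_{0,w})^{T}$ gives the result. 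Your main computation (verifying $\mathbf{A}\mathbf{B}=I$ entrywise) is an equivalent repackaging; the only thing it buys is that invertibility of $\mathbf{A}$ comes out as a conclusion rather than a tacit hypothesis, a minor point since $\mathbf{A}\to I$ as $s_{2\nu+1}\to 0$.
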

\begin{proof}
The second column of the inverse follows directly from (\ref{sdgnu}).  
To find the first column one notes that 
\begin{equation*}
{\bf A}^{-1} = \begin{pmatrix} {\bf A}_{11} & {\bf A}_{12} \\ {\bf A}_{12} f_0 & {\bf A}_{11} \end{pmatrix}^{-1} = 
\frac{1}{ {\bf A}_{11}^2 - {\bf A}_{12}^2 f_0} \begin{pmatrix} {\bf A}_{11} & -{\bf A}_{12} \\ -{\bf A}_{12} f_0 & {\bf A}_{11} \end{pmatrix} \,.
\end{equation*}
Thus we have that ${\bf A}_{11}/\mbox{det}({\bf A}) = f_{0, w}$ and $-{\bf A}_{12}/\mbox{det}({\bf A}) = h_{0, w}$, and the result follows.
\end{proof}

We will also need the following lemma:
\begin{lem} \label{lemma37}
For the matrix ${\bf A}$ given in \eqref{sdgnu}
\begin{align*}
\delta_{h_0} {\bf A}_{12} - \delta_{f_0} {\bf A}_{11} &= 0 \\
\delta_{h_0} {\bf A}_{22} - \delta_{f_0} {\bf A}_{21} &= 0 \,,
\end{align*}
where $\delta_{h_0}$ (resp. $\delta_{f_0}$) denotes the functional derrivative with respect to $h_0$ (resp. $f_0$).  
\end{lem}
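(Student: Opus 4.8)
The plan is to read the two displayed identities as the statement that each row of $\mathbf{A}$ is a closed (hence exact) differential in the variables $(h_0,f_0)$, interpreting $\delta_{h_0},\delta_{f_0}$ as the ordinary partials $\partial/\partial h_0,\partial/\partial f_0$. Since $\mathbf{A}_{22}=\mathbf{A}_{11}$ and $\mathbf{A}_{21}=f_0\mathbf{A}_{12}$, the first identity asserts that $\mathbf{A}_{11}\,dh_0+\mathbf{A}_{12}\,df_0$ is closed and the second that $\mathbf{A}_{11}\,dh_0+f_0\mathbf{A}_{12}\,df_0$ is closed. Conceptually this is forced by the origin of $\mathbf{A}$: the leading-order difference string equations $\mathbf{A}_{11}h_{0,w}+\mathbf{A}_{12}f_{0,w}=0$ and $\mathbf{A}_{21}h_{0,w}+\mathbf{A}_{22}f_{0,w}=1$ are conservation laws $\partial_w\Phi=0$, $\partial_w\Psi=1$ for scalar fluxes $\Phi(h_0,f_0)$, $\Psi(h_0,f_0)$ (the continuum limits of entries such as $(\mathcal{L}+(2\nu+1)s_{2\nu+1}\,\mathcal{L}^{2\nu})_{n+1,n}$), so that the two rows of $\mathbf{A}$ are the gradients $\nabla\Phi$ and $\nabla\Psi$ and the identities become equality of mixed second partials. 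I will nonetheless prove both identities directly from the explicit trinomial formulas, which is the most self-contained route and at the same time exhibits the potentials.

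For the first identity I would differentiate the given sums term by term and collect the coefficient of a fixed monomial $h_0^{2j-1}f_0^{\nu-j-1}$ (with $1\le j\le\nu-1$). On the $\partial_{h_0}\mathbf{A}_{12}$ side only the $\mu=j$ term of $\mathbf{A}_{12}$ contributes; on the $\partial_{f_0}\mathbf{A}_{11}$ side the $\mu=j-1$ term of the first sum and the $\mu=j$ term of the second sum contribute. After cancelling the common factor $(2\nu+1)s_{2\nu+1}$, matching coefficients reduces the whole identity to the single elementary relation $2j\binom{2\nu}{2j,\nu-j-1,\nu-j+1}=(\nu-j)\binom{2\nu}{2j-1,\nu-j,\nu-j+1}$, both sides of which equal $\tfrac{(2\nu)!}{(2j-1)!(\nu-j-1)!(\nu-j+1)!}$ once the trinomials are written as ratios of factorials. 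One must also check that the extreme terms, those killed when a derivative lowers a zero exponent (for instance the $\mu=\nu-1$ term of the first sum), drop out of the bookkeeping, so that every monomial has a genuine matching partner.

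For the second identity I would first use $\mathbf{A}_{22}=\mathbf{A}_{11}$ and $\mathbf{A}_{21}=f_0\mathbf{A}_{12}$ to rewrite it as $\partial_{h_0}\mathbf{A}_{11}=\mathbf{A}_{12}+f_0\,\partial_{f_0}\mathbf{A}_{12}$. Collecting the coefficient of $h_0^{2\mu}f_0^{\nu-\mu-1}$ on both sides (on the left, $\partial_{h_0}$ of the first sum of $\mathbf{A}_{11}$ at index $\mu$ together with $\partial_{h_0}$ of its second sum at index $\mu+1$; on the right, the single $\mu$-term coming from $\mathbf{A}_{12}$) reduces matters to the Pascal-type factorial relation $\tfrac{1}{(\nu-\mu-1)!}+\tfrac{1}{(\nu-\mu-2)!}=\tfrac{\nu-\mu}{(\nu-\mu-1)!}$, i.e. $\tfrac1{p!}+\tfrac1{(p-1)!}=\tfrac{p+1}{p!}$ with $p=\nu-\mu-1$. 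This shows both sides equal $\tfrac{(2\nu)!}{(2\mu)!(\nu-\mu-1)!(\nu-\mu-1)!}$, completing the verification.

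The only real obstacle is the combinatorial bookkeeping: because the two sums defining $\mathbf{A}_{11}$ carry shifted exponents of $h_0$ and $f_0$, one has to reindex carefully so that contributions to a common monomial are collected, and watch the boundary indices where differentiation annihilates a term. Once the monomials are aligned, each identity collapses to one of the two elementary factorial relations above, so no deeper input is needed. As a consistency check one can integrate $(\mathbf{A}_{11},\mathbf{A}_{12})$ and $(\mathbf{A}_{11},f_0\mathbf{A}_{12})$ explicitly to recover the fluxes $\Phi$ and $\Psi$, confirming the exactness interpretation sketched at the outset.
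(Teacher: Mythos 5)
Your proof is correct and takes essentially the same route as the paper: both identities are verified by differentiating the explicit trinomial sums, reindexing so that contributions to a common monomial in $h_0,f_0$ align, and cancelling coefficients (the paper shifts the middle sum by $\mu\mapsto\mu-1$ for the first identity and uses ${\bf A}_{21}=f_0{\bf A}_{12}$ to reduce the second, exactly as you do). Your opening interpretation of the identities as exactness of the rows of ${\bf A}$, i.e.\ as equality of mixed partials of the flux functions underlying the leading-order conservation laws, is a pleasant conceptual gloss not present in the paper, but the computational core is identical.
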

\begin{proof}
The left-hand side of the first equation, written out, is 
\begin{align} \nonumber
(2\nu +1) s_{2\nu+1} & \left[ \sum_{\mu=1}^{\nu-1} \binom{2\nu}{2\mu, \nu-\mu-1, \nu-\mu+1} (\nu-\mu+1) (2\mu) h_0^{2\mu-1} f_0^{\nu-\mu-1} 
\right. \\ \label{comp-1A} & \qquad
- \sum_{\mu=0}^{\nu-1} \binom{2\nu}{2\mu+1, \nu-\mu-1, \nu-\mu} (\nu-\mu-1) h_0^{2\mu+1} f_0^{\nu-\mu-2} 
\\  \nonumber & \qquad \left.
- \sum_{\mu=1}^{\nu-1} \binom{2\nu}{2\mu, \nu-\mu-1, \nu-\mu+1} (2\mu) (\nu - \mu) h_0^{2\mu-1} f_0^{\nu-\mu-1} \right] \,.
\end{align}
Shifting the index of the middle sum by $\mu \mapsto \mu-1$ one sees that the coefficient of each monomial in $h_0, f_0$ cancels and the result follows.
To prove the second formula one first applies the identity \eqref{a21} to find 
\begin{equation} \label{3.28}
\delta_{h_0} {\bf A}_{22} - \delta_{f_0} {\bf A}_{21} = \delta_{h_0} {\bf A}_{22} - f_0 \delta_{f_0} {\bf A}_{12} - {\bf A}_{12} \,.
\end{equation}
Written out, the right-hand side of \eqref{3.28} is
\begin{align*}
(2\nu+1) s_{2\nu+1} \sum_{\mu=1}^{\nu-1} & \left[ 
\binom{2\nu}{2\mu+1, \nu-\mu-1, \nu-\mu} (\nu-\mu) (2\mu+1) \right. \\  
&- \binom{2\nu}{2\mu, \nu-\mu-1, \nu-\mu+1} (\nu-\mu+1) (\nu-\mu-1)  \\
&\left. - \binom{2\nu}{2\mu, \nu-\mu-1, \nu-\mu+1} (\nu-\mu+1) \right] h_0^{2\mu} f_0^{\nu-\mu-1} \, ,
\end{align*}
whose coefficients manifestly vanish.
\end{proof}

The homogenous terms of the equations at level $n^{-2g-1}$ can be computed directly.  They are linear in $h_{2g}$ and $f_g$ with coefficients 
depending only on $h_0$, $f_0$ and their $w$ derivatives. The inhomogeneous (forcing) terms depend on $h_j$ for $j<2g$, and $f_j$ for $j<g$.  As usual in perturbation theory, the homogeneous part of the equation can be derived by replacing $(h_0, f_0)$ in the leading order equations with 
\begin{equation} 
( h_0 + \epsilon h_{2g}, f_0 + \epsilon f_g )\,,
\end{equation} 
and retaining just the first order in $\epsilon$ terms.  We find that the homogeneous terms are
\begin{equation} \label{blue-1}
{\bf A} \begin{pmatrix} h_{2g, w} \\ f_{g, w} \end{pmatrix} + 
h_{2g} \delta_{h_0} {\bf A} \begin{pmatrix} h_{0, w} \\ f_{0, w} \end{pmatrix} + 
f_g \delta_{f_0} {\bf A} \begin{pmatrix} h_{0, w} \\ f_{0, w} \end{pmatrix}  \,.
\end{equation} \pagebreak[3]

\noindent We also have the identity 
\begin{equation} \label{blue-2}
\partial_w \left\{ {\bf A} \begin{pmatrix} h_{2g} \\ f_g \end{pmatrix} \right\} = 
{\bf A} \begin{pmatrix} h_{2g, w} \\ f_{g, w} \end{pmatrix} + 
 \delta_{h_0} {\bf A} \begin{pmatrix} h_{0, w} h_{2g} \\ h_{0, w} f_g \end{pmatrix} +
\delta_{f_0} {\bf A} \begin{pmatrix} f_{0, w} h_{2g} \\ f_{0, w} f_g \end{pmatrix} \,.
\end{equation}
Lemma \ref{lemma37} implies that \eqref{blue-1} is in fact equal to the right-hand side of \eqref{blue-2}.  
Thus the equation at order $n^{-2g-1}$ has the form
\begin{equation}\label{n2gm1-A}
\partial_w \left\{ {\bf A} \begin{pmatrix} h_{2g} \\ f_g \end{pmatrix} \right\} = \begin{pmatrix} F_{2g}^{(1)} \\ F_{2g}^{(2)} \end{pmatrix}  \,,
\end{equation}
where $F_{2g}^{(1)}$ and $F_{2g}^{(2)}$ are expressions invovling the lower order terms in the expansions of \eqref{string-expansion-1} and \eqref{string-expansion-2}.
We also find, at order $n^{-2g}$, that
\begin{equation} \label{n2g-B}
\partial_w \left\{ {\bf A}_{11} h_{2g-1} \right\} = F_{2g-1}^{(1)} \,.
\end{equation}
In fact there is a second equation involving $h_{2g-1}$; however, it must be equivalent to the first and so we do not record it.  
Equations \eqref{n2gm1-A} and \eqref{n2g-B}, together with Lemma \ref{lemma-inverse} yield the following 
\begin{prop} \label{lemma-solution}
The functions $h_{2g}, h_{2g-1}, f_g$ may be recursively found by the formulas
\begin{align} \label{IntFact}
\left(
\begin{array}{c}
 h_{2g} \\
  f_{g}
\end{array}
\right) &= 
\begin{pmatrix} f_{0, w} & h_{0, w} \\ h_{0, w} f_0 & f_{0, w} \end{pmatrix} 
\int \left(
\begin{array}{c}
F^{(1)}_{2g} \\
F^{(2)} _{2g} 
\end{array}
\right) dw \,,\\
\label{IntFact-odd}  h_{2g-1} &= \frac{1}{{\bf A}_{11} } \int  F^{(1)}_{2g-1} dw.
\end{align}
\end{prop}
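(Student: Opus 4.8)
The plan is to read off both formulas directly from the total-derivative forms \eqref{n2gm1-A} and \eqref{n2g-B}, since the substantive work has already been done: recasting the homogeneous part of each level of the hierarchy as a single $w$-derivative of ${\bf A}$ applied to $(h_{2g}, f_g)^T$ was accomplished above via Lemma \ref{lemma37} and the identity \eqref{blue-2}. What remains is only integration in $w$ followed by inversion of the coefficient matrix, so the proof is essentially bookkeeping built on the earlier structural lemmas.

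First I would treat \eqref{IntFact}. By the structure of the expansion the forcing terms $F^{(1)}_{2g}, F^{(2)}_{2g}$ on the right of \eqref{n2gm1-A} are built entirely out of the lower-order coefficients $h_j$ with $j<2g$ and $f_j$ with $j<g$, all of which are known by the time one reaches order $n^{-2g-1}$; hence they may be regarded as explicit functions of $w$, with $s_{2\nu+1}$ entering as a parameter. Integrating \eqref{n2gm1-A} in $w$ therefore gives
\begin{equation*}
{\bf A} \begin{pmatrix} h_{2g} \\ f_g \end{pmatrix} = \int \begin{pmatrix} F^{(1)}_{2g} \\ F^{(2)}_{2g} \end{pmatrix} \, dw ,
\end{equation*}
where the indefinite integral carries an undetermined constant vector of integration. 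Left-multiplying by ${\bf A}^{-1}$ and substituting the explicit inverse supplied by Lemma \ref{lemma-inverse} yields \eqref{IntFact} verbatim. The companion formula \eqref{IntFact-odd} follows the same template applied to the scalar equation \eqref{n2g-B}: its forcing term $F^{(1)}_{2g-1}$ again depends only on lower-order coefficients, so one integration gives ${\bf A}_{11} h_{2g-1} = \int F^{(1)}_{2g-1} \, dw$, and dividing by ${\bf A}_{11}$ produces \eqref{IntFact-odd}. Here I would note that ${\bf A}_{11}$ is a unit plus an $s_{2\nu+1}$-dependent polynomial in $h_0, f_0$, hence invertible for $s_{2\nu+1}$ small, and that the second equation at order $n^{-2g}$ involving $h_{2g-1}$ may be ignored since it was already asserted to be equivalent to \eqref{n2g-B}.

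I do not expect any hard computation; the only genuinely delicate point is the status of the constants of integration rather than the algebra. The indefinite integrals above determine $h_{2g}, f_g, h_{2g-1}$ only up to the homogeneous solutions of the linearized equations, and one must still pin these down so that the result coincides with the coefficients produced by Theorem \ref{thm:hf}. This is precisely the solvability condition flagged earlier: it is fixed by matching a small number of the initial Taylor coefficients of $e_g$ in $s_{2\nu+1}$, and it is carried out explicitly for $\nu=1$ in Section \ref{sec:55}. Accordingly I would phrase the proposition as giving the general recursive form of the solution, with this normalization understood to be imposed separately, so that the integrals become well defined once the constants are chosen.
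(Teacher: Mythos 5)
Your proposal is correct and follows essentially the same route as the paper, which obtains the proposition directly by integrating the total-derivative forms \eqref{n2gm1-A} and \eqref{n2g-B} in $w$ and applying the explicit inverse from Lemma \ref{lemma-inverse} (respectively dividing by ${\bf A}_{11}$). Your additional remarks on the constants of integration match the paper's treatment, which defers that normalization to the solvability analysis in Section \ref{sec:55}.
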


\subsection{The Continuum Limit of the Toda equations}
Analogous to what was done in the previous subsection for the difference string equations, one can study the system
(\ref{an}) and (\ref{bn}) expanded on the formal asymptotic series  (\ref{h1kform}) and (\ref{f1kform}):
\begin{align}
\nonumber - \frac{1}{n} \frac{d}{ds}h(s,w) &= \sum_{P \in \{\mathcal{P}^{2\nu + 1}(1,0)\}} 
\\ &\phantom{=} \nonumber
\left[ \prod_{p_a=1}^{2\mu(P)}
\sum_{m=0}^\infty \frac{h_{w^{(m)}}}{m! }\left(\frac{\ell_{p_a} }{n}\right)^m \right] \left[\prod_{p_b=1}^{\nu - \mu(P)+1}
 \sum_{m=0}^\infty \frac{f_{w^{(m)}}}{m! }\left(\frac{\ell_{p_b}}{n}\right)^m \right] \\
\label{h} &\phantom{=} - \left[ \prod_{p_a=1}^{2\mu(P)}
\sum_{m=0}^\infty \frac{h_{w^{(m)}}}{m! }\left(\frac{\ell_{p_a} -1}{n}\right)^m \right] \left[\prod_{p_b=1}^{\nu - \mu(P)+1}
 \sum_{m=0}^\infty \frac{f_{w^{(m)}}}{m! }\left(\frac{\ell_{p_b} -1 }{n}\right)^m \right] 
\end{align}
\begin{align}
\nonumber - \frac{1}{n} \frac{d}{ds}f(s,w)&=
\sum_{P \in \{\mathcal{P}^{2\nu + 1}(2,0)\}} 
\\ &\phantom{=} \nonumber
\left[ \prod_{p_a=1}^{2\mu(P)+1}
\sum_{m=0}^\infty \frac{h_{w^{(m)}}}{m! }\left(\frac{\ell_{p_a} -1 }{n}\right)^m \right] \left[\prod_{p_b=1}^{\nu - \mu(P) +1}
 \sum_{m=0}^\infty \frac{f_{w^{(m)}}}{m! }\left(\frac{\ell_{p_b} -1 }{n}\right)^m \right] \\
\label{f} &\phantom{=} - \left[ \prod_{p_a=1}^{2\mu(P)+1}
\sum_{m=0}^\infty \frac{h_{w^{(m)}}}{m! }\left(\frac{\ell_{p_a} -2}{n}\right)^m \right] \left[\prod_{p_b=1}^{\nu - \mu(P)+1}
 \sum_{m=0}^\infty \frac{f_{w^{(m)}}}{m! }\left(\frac{\ell_{p_b} - 2 }{n}\right)^m \right] \\
\nonumber &\phantom{=} 
 -  \left( \sum_{m=1}^\infty \frac{h_{w^{(m)}}}{m! } \left(\frac{-1}{n}\right)^m  \right) 
 \sum_{P \in \{\mathcal{P}^{2\nu+1}(1,0)\}} \left[ \prod_{p_a=1}^{2\mu(P)}
\sum_{m=0}^\infty \frac{h_{w^{(m)}}}{m! }\left(\frac{\ell_{p_a} -1}{n}\right)^m\right] \\
\nonumber &\phantom{=\sum_{P \in \{\mathcal{P}^{2\nu + 1}(2,0)\}}} \hspace{0.5in}
\times \left[\prod_{p_b=1}^{\nu - \mu(P)+1}
 \sum_{m=0}^\infty \frac{f_{w^{(m)}}}{m! }\left(\frac{\ell_{p_b} -1}{n}\right)^m \right] 
 \end{align}
where, again, $\mu(P) = \lfloor \sigma/2 \rfloor$ for $\sigma$ equal to the total number of horizontal steps in a given path $P$  and 
$\ell_{p_a}$ (respectively $\ell_{p_b}$) denotes the lattice location of the path at the $p_a^{th}$ horizontal step (respectively before
 the $p_b^{th}$ downstep). 
 As before, the expansion order by order produces a hierarchy of equations that we will call the {\it Continuum Toda equations}.  At leading order in the hierarchy one has, for general $\nu$,
\begin{eqnarray} \label{eqh0}
- \frac{d}{ds}h_0(s,w) &=& \partial_w \sum^\nu_{\mu = 0} {2\nu + 1 \choose 2\mu, \nu - \mu, \nu - \mu + 1} h_0^{2\mu} f_0^{\nu - \mu + 1}\\ \label{eqf0}
 - \frac{d}{ds}f_0(s,w) &=& \partial_w \sum^{\nu-1}_{\mu = 0} {2\nu + 1 \choose 2\mu +1, \nu - \mu - 1, \nu - \mu + 1} h_0^{2\mu+1} f_0^{\nu - \mu +1} +
\\ && \nonumber
+  \partial_w h_0 \sum^\nu_{\mu = 0} {2\nu + 1 \choose 2\mu , \nu - \mu, \nu - \mu+1} h_0^{2\mu} f_0^{\nu - \mu + 1}\, .
\end{eqnarray}
\medskip

At orders $n^{-2g}$ the equations are equivalent to a hierarchy of $2 \times 2$ quasi-linear 
systems of PDE:
\begin{prop}
\begin{equation} \label{conslaw}
-\frac{d}{ds} \begin{pmatrix} h_{2g} \\ f_g \end{pmatrix} =
\partial_w \left[ {\bf B} \begin{pmatrix} h_{2g} \\ f_g \end{pmatrix} \right]
+ \begin{pmatrix} 0 \\ - r_1 f_{0, w} h_{2g} + r_2 h_{0, w} f_g \end{pmatrix}
+ \left(
\begin{array}{c}
  \mbox{Forcing}_g^{(1)}  \\
  \mbox{Forcing}_g^{(2)}
\end{array}
\right)
\,,
\end{equation}
where
\begin{align}
{\bf B}_{11} = {\bf B}_{22} &= \sum_{\mu=1}^{\nu} \binom{2\nu+1}{2\mu, \nu-\mu, \nu-\mu+1} 2\mu h_{0}^{2\mu-1} f_0^{\nu-\mu+1}\,, \\
{\bf B}_{12} &= \sum_{\mu=0}^{\nu} \binom{2\nu+1}{2\mu, \nu-\mu, \nu-\mu+1} (\nu-\mu+1) h_{0}^{2\mu} f_0^{\nu-\mu} \,, 
\end{align}
\begin{align}
{\bf B}_{21} &= f_0 {\bf B}_{12} \,,\\
r_1 &=(2\nu+1) h_0^{2\nu} +  \sum_{\mu=0}^{\nu-1} \binom{2\nu+1}{2\mu+1, \nu-\mu-1, \nu-\mu+1} (2\mu+1) (\nu-\mu+1) h_0^{2\mu} f_0^{\nu-\mu} \,, \\
r_2 &= \sum_{\mu=0}^{\nu} \binom{2\nu+1}{2\mu, \nu-\mu, \nu-\mu+1} (\nu-\mu+1) h_0^{2\mu} f_0^{\nu-\mu} \, .
\end{align} 
\end{prop}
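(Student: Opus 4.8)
The plan is to follow the template of the difference--string analysis of the previous subsection, substituting the formal expansions (\ref{h1kform}) and (\ref{f1kform}) into the path--expanded Toda equations (\ref{h}) and (\ref{f}) and sorting by powers of $1/n$. The parity bookkeeping is what pairs $h_{2g}$ with $f_g$: since $h=\sum_g h_g n^{-g}$ carries all powers while $f=\sum_g f_g n^{-2g}$ carries only even ones, clearing the prefactor $1/n$ so that the left sides read $-\frac{d}{ds}h$ and $-\frac{d}{ds}f$, and extracting the coefficient of $n^{-2g}$, yields a left side $-\frac{d}{ds}(h_{2g},f_g)^{T}$ as in (\ref{conslaw}). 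I would first record the leading order, reproducing (\ref{eqh0})--(\ref{eqf0}), and rewrite it as $-\partial_s h_0=\partial_w\Phi_1$ and $-\partial_s f_0=\partial_w\Phi_2+h_{0,w}\Phi_1$, where $\Phi_1$ and $\Phi_2$ are the leading path--sum fluxes carried by $\mathcal{P}^{2\nu+1}(1,0)$ and $\mathcal{P}^{2\nu+1}(2,0)$, their trinomial coefficients being the Motzkin counts from (\ref{weights}). The first is a genuine conservation law; the second is only a balance law, the obstruction being the term $h_{0,w}\Phi_1$ arising from the $(a_n-a_{n-1})$ prefactor in (\ref{bn}).

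Next I would isolate the principal (linear) part of the $n^{-2g}$ equation by the perturbation $(h_0,f_0)\mapsto(h_0+\epsilon h_{2g},f_0+\epsilon f_g)$ used in (\ref{blue-1}), relegating every product of strictly lower coefficients ($h_j$ with $j<2g$, $f_j$ with $j<g$, together with the higher $w$--derivatives they carry) to $\mathrm{Forcing}_g^{(1)}$ and $\mathrm{Forcing}_g^{(2)}$. For the $h$--equation the linearization of $\partial_w\Phi_1$ is a pure divergence $\partial_w[\partial_{h_0}\Phi_1\,h_{2g}+\partial_{f_0}\Phi_1\,f_g]$, from which I would read off ${\bf B}_{11}=\partial_{h_0}\Phi_1$ and ${\bf B}_{12}=\partial_{f_0}\Phi_1$ and match them against the stated trinomial formulas. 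For the $f$--equation I would regroup, through the product rule, the divergence $\partial_w\Phi_2$ and the linearization of $h_{0,w}\Phi_1$ into a divergence $\partial_w[{\bf B}_{21}h_{2g}+{\bf B}_{22}f_g]$ plus a non--divergence remainder $-r_1 f_{0,w}h_{2g}+r_2 h_{0,w}f_g$. The algebra rests on two trinomial identities, $\partial_{h_0}\Phi_2+\Phi_1=f_0\,\partial_{f_0}\Phi_1$ and $\partial_{f_0}\Phi_2=\partial_{h_0}\Phi_1$, each proved (in the style of Lemma \ref{lemma37}) by an index shift $\mu\mapsto\mu-1$ and cancellation of factorials; these force ${\bf B}_{21}=f_0{\bf B}_{12}$ and ${\bf B}_{22}={\bf B}_{11}$, in parallel with (\ref{a21}), and they immediately give $r_2=\partial_{f_0}\Phi_1={\bf B}_{12}$.

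The step I expect to be the crux is pinning down the non--conservation coefficient $r_1$. The clean product--rule split, applied to the balance law as written in (\ref{eqf0}), accounts for the contribution $\partial_{f_0}\Phi_1$, which already reproduces the stated $r_1$ in the trivalent case $\nu=1$ that is the focus of the paper; verifying the stated general form, with its weighting $(2\mu+1)(\nu-\mu+1)$ of the trinomials $\binom{2\nu+1}{2\mu+1,\nu-\mu-1,\nu-\mu+1}$, is where the real combinatorial work lies. I would attack it by expanding the last term of (\ref{f}) directly rather than through the leading equation, retaining the lattice--location weights $\ell_{p_a},\ell_{p_b}$ inside the \emph{un--differenced} operator $(\mathcal{L}^{2\nu+1})_{n,n-1}$ alongside the half--difference $(a_n-a_{n-1})$; the $\ell$--dependent pieces that survive in this product, but cancel in the clean differences governing the $h$--equation and the $\Phi_2$--term, are the candidates for the extra non--divergence contribution, and isolating them from the forcing is the most delicate bookkeeping. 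The heart of the proof is then the trinomial identity recasting these into $(2\nu+1)h_0^{2\nu}+\sum_{\mu}\binom{2\nu+1}{2\mu+1,\nu-\mu-1,\nu-\mu+1}(2\mu+1)(\nu-\mu+1)h_0^{2\mu}f_0^{\nu-\mu}$; once $r_1$ and $r_2$ are established and ${\bf B}$ is identified as above, confirming that all remaining terms assemble into $\mathrm{Forcing}_g^{(1)}$ and $\mathrm{Forcing}_g^{(2)}$ is a routine matter of collecting common powers of $1/n$.
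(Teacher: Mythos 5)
Your overall strategy is the paper's own: the authors' proof of this proposition is a one-line appeal to ``the same approach and methods used to establish Proposition \ref{lemma-solution},'' i.e.\ the $\epsilon$-linearization of the leading-order equations together with trinomial identities in the style of Lemma \ref{lemma37}. Your identification of ${\bf B}_{11}=\partial_{h_0}\Phi_1$ and ${\bf B}_{12}=\partial_{f_0}\Phi_1$, the two identities $\partial_{f_0}\Phi_2=\partial_{h_0}\Phi_1$ and $\partial_{h_0}\Phi_2+\Phi_1=f_0\,\partial_{f_0}\Phi_1$ (both of which do check out by the index shift you describe), and the resulting relations ${\bf B}_{22}={\bf B}_{11}$, ${\bf B}_{21}=f_0{\bf B}_{12}$ and $r_2={\bf B}_{12}$, are all correct and are exactly what the authors intend.

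The gap is in your treatment of $r_1$, which you yourself flag as the crux but do not carry out. The mechanism you propose for producing the ``extra'' terms --- retaining the lattice-location weights $\ell_{p_a},\ell_{p_b}$ inside the undifferenced factor $(\mathcal{L}^{2\nu+1})_{n,n-1}$ in the last term of (\ref{f}) --- cannot work. Any factor carrying a weight $(\ell/n)^m$ with $m\geq 1$ costs an extra power of $n^{-1}$, so in a term of total order $n^{-2g-1}$ that also contains $h_{2g}$ or $f_g$ (which already account for $n^{-2g}$) there is no room for such a weight beyond the single $m=1$ derivative that builds the divergence; every genuinely $\ell$-dependent contribution at that order multiplies only lower-index coefficients and therefore belongs to $\mathrm{Forcing}_g^{(2)}$, not to the homogeneous part. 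Consequently the complete homogeneous part is the linearization $\partial_w[\delta\Phi_2]+h_{2g,w}\Phi_1+h_{0,w}\delta\Phi_1$ you already computed, and with the stated ${\bf B}$ its non-divergence remainder is ${\bf B}_{12}\left(h_{0,w}f_g-f_{0,w}h_{2g}\right)$, i.e.\ $r_1=r_2={\bf B}_{12}$. This is exactly what the trivalent specialization (\ref{Toda-triang}) displays, where the prefactor is $3(2f_0+h_0^2)={\bf B}_{12}$. For $\nu\geq 2$ the displayed general formula for $r_1$ does not coincide with ${\bf B}_{12}$ (for $\nu=2$ it gives $5h_0^4+60h_0^2f_0+60f_0^2$ versus ${\bf B}_{12}=5h_0^4+60h_0^2f_0+30f_0^2$), so the combinatorial hunt you defer to is chasing a target that the homogeneous part does not contain. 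The correct completion of your argument is to conclude $r_1={\bf B}_{12}$ directly from the linearization and to note the discrepancy with the printed coefficient, rather than to leave the heart of the proof resting on a search that would come up empty.
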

\begin{proof}
The proposition follows from the same approach and methods used to establish Proposition \ref{lemma-solution}.  The various relationships between the entries of ${\bf B}$ are consequences of simple trinomial identities as was the case for the relations between entries of ${\bf A}$.
\end{proof}

\begin{rem}\label{rem:3.5} We observe that the homogeneous terms in (\ref{conslaw}) are almost in pure conservation law form, with the first terms on the right being in the form of a {\it spatial} derivative of a flux pair associated to the density pair $(h_{2g}, f_g)$. The form of the non-conservative homogeneous terms (the second group of terms) suggests that one might be able to use the difference string equations to rewrite these in terms of lower genus expressions and thereby pass them into the forcing so that the equations would then have the structure of a hierarchy of forced conservation laws. Indeed, in Section \ref{sec:55} we show that this is what happens for the genus 1 equations.

We also note that the numerical coefficients appearing in the homogeneous terms of (\ref{conslaw})  depend only on the total number of Motzkin paths of each class $\mathcal{P}^j(m_1, m_2)$ that appear in the Toda equations. On the other hand the specific terms in the forcing expressions depend on more detailed combinatorial characteristics of these Motzkin paths.  See \cite{EMP08, Er09} for details about how the forcing terms can be determined explicitly from the structure of the lattice paths in the case of even valence. 
\end{rem}

\section{Specialization to the Trivalent Case} \label{sec:55}

We now illustrate the results of section \ref{sec:44}, in the trivalent case (when $\nu=1$) to demonstrate their form and utility.  We will then pick up from lemma \ref{lemma-solution} in this case and find explicit expressions for $h_1, h_2$ and $f_{1}$ in terms of $f_0$, $h_0$ and their $w$-derivatives.  As mentioned at the start of subsection \ref{ssec:dse}, this will require comparison with the enumerative significance of the coefficients in the asymptotic expansions in order to determine a unique solution. Note in particular that we will need to go beyond the results of section \ref{sec:44} in order to explicitly determine the forcing terms $F_1^{(1)}$, $F_2^{(1)}$ and $F_2^{(2)}$. In this section we will use $s$ to denote $s_3$.

\subsection{String difference Equations for $j  = 3$} 
In the {\it trivalent case} one has
\begin{align} \label{eq-string-1}
0 &= b_{n+1}^2 \left[ (a_{n+1} - a_n) ( 1 + 3 s (a_n + a_{n+1})) + 3 s ( b_{n+2}^2 - b_n^2) \right] \,,\\ \label{eq-string-2}
\frac{1}{n} &= ( b_{n+1}^2 + 3 s b_{n+1}^2 ( a_n + a_{n+1}) ) - ( b_n^2 + 3 s b_n^2 (a_n + a_{n-1})) 
\\
&= ( b_{n+1}^2 - b_n^2) + 3 s b_{n+1}^2 (a_n + a_{n+1}) - 3 s b_n^2 (a_n + a_{n-1}) 
\,.
\end{align}

Using (\ref{a}) and (\ref{b}), and Taylor expansions around $w=1$ of the continuum limits we have:
\begin{itemize}

\item Dividing (\ref{eq-string-1}) by $b_{n+1}^2$:
\begin{align} \nonumber
0 &= ( h(s, 1+\frac{1}{n}) - h(s, 1) ) ( 1 + 3 s ( h(s, 1) + h(s, 1+\frac{1}{n}) ) 
+ 3 s ( f(s, 1+\frac{2}{n}) - f(s, 1) ) 
\\
&= \left( \sum_{m=1}^\infty \frac{h_{w^{(m)}}}{m!} \frac{1}{n^m} \right) \left[ 
1 + 3 s \left( 2 h + \sum_{m=1}^\infty \frac{h_{w^{(m)}}}{m!} \frac{1}{n^m} \right) \right] 
+ 3 s \left( \sum_{m=1}^\infty \frac{f_{w^{(m)}}}{m!} \frac{2^m}{n^m} \right) \,,
\label{eq-string-3}
\end{align}
where the second line is evaluated at $w=1$.

\item The equation (\ref{eq-string-2}) becomes:
\begin{align} \nonumber
\frac{1}{n} &=  ( f(s, 1+\frac{1}{n}) - f(s, 1) ) + 3 s f(s, 1+\frac{1}{n}) ( h(s, 1) + h(s, 1+\frac{1}{n}) )  
\\ &\phantom{=} \hspace{1cm} \nonumber
- 3 s f(s, 1) ( h(s, 1) + h(s, 1-\frac{1}{n})) 
\\
&= \left( \sum_{m=1}^\infty \frac{f_{w^{(m)}}}{m!} \frac{1}{n^m} \right) + 
3 s \left( f + \sum_{m=1}^\infty \frac{f_{w^{(m)}}}{m!} \frac{1}{n^m} \right) \left( 2 h + \sum_{m=1}^\infty \frac{h_{w^{(m)}}}{m!} \frac{1}{n^m} \right) 
\nonumber \\ &\phantom{=} \hspace{1cm} 
- 3 s f \left( 2 h + \sum_{m=1}^\infty \frac{h_{w^{(m)}}}{m!} \frac{(-1)^m}{m!} \right) \,,
\label{eq-string-4}
\end{align}
where the second line is evaluated at $w=1$.
\end{itemize}

\subsubsection{Leading Order \label{locsde}}

The leading order ($\mathcal{O}(n^{-1})$) of the system (\ref{eq-string-3}-\ref{eq-string-4}) is as given in \eqref{sdgnu} with $\nu=1$:
\begin{equation} \label{leading-order-system}
\begin{pmatrix} 0 \\ 1 \end{pmatrix} = 
\begin{pmatrix} 1 + 6 s h_0 & 6 s \\ 6s f_0 & 1 + 6s h_0 \end{pmatrix}
\begin{pmatrix} h_{0, w} \\ f_{0, w} \end{pmatrix}\,.
\end{equation} 
 Expanding we have 
\begin{align*}
0 &= (1+ 6 s h_0 ) h_{0, w} + 6 s f_{0, w} \\
1 &= 6 s f_0 h_{0, w} + (1+ 6s h_0) f_{0, w} \,.
\end{align*}
These can be anti-differentiated with respect to $w$:
\begin{align*}
C_1(s) &=  h_0 + 3 s h_0^2 + 6 s f_0  \\
w + C_2(s) &=  6 s f_0 h_0 + f_0\,. 
\end{align*}
The $C_1(s)$ and $C_2(s)$ are constants of integration which must be determined by the combinatorial interpretation or some other constraints.  For example converting to the self-similar variables, $\tilde{s} = w^{1/2} s$, dividing the first equation by $w^{1/2}$ and the second by $w$, we find
\begin{align*}
w^{-1/2} C_1(s) &= u_0(\tilde{s} ) + 3 \tilde{s} u_0(\tilde{s})^2 + 6 \tilde{s} z_0(\tilde{s}) \\
1 + w^{-1} C_2(s) &= 6 \tilde{s} z_0( \tilde{s}) u_0(\tilde{s}) + z_0(\tilde{s}) \,.
\end{align*}
In order for the left hand side of these equations to give functions of the self-similar variable $\tilde{s}$ we must have that $C_1(s) = c_1 s^{-1}$ and $C_2(s) = c_2 s^{-2}$.  However the functions on the right hand side are analytic in a neighborhood of $s = s_3 = 0 $ and so we conclude that $c_1=c_2 = 0$.
Comparing these with \eqref{ideal1} and \eqref{ideal2}, we have shown here that the leading order functions of the asymptotic expansion of $a_{n, N}$ and $b_{n, N}$ agree with the functions $u_0$ and $z_0$ describing the equilibrium measure.

\subsubsection{$n^{-2g}$ terms} \label{n(-2g)}

The odd terms of the expansion for $h(s, w)$ are governed by either of the $n^{-2g}$ terms of equations (\ref{eq-string-3}-\ref{eq-string-4}); i.e.,   by either of the equations
\begin{align*}
0 &= h_{2g-1, w} \left[ 1 + 6 s h_0 \right] + h_{0, w} \left[  6 s h_{2g-1} \right] - F^{(1)}_{2g-1} \\
0 &= 6 s f_0 h_{2g-1, w} + 6 s f_{0, w} h_{2g-1} - F^{(2)}_{2g-1} \,,
\end{align*}
where the $F^{(j)}_{2g-1}$ are the forcing terms coming from the $n^{-2g}$ terms of (\ref{eq-string-3}-\ref{eq-string-4}) which do not contain an $h_{2g-1}$ or its derrivatives.  The first equation is equivalent to the specialization of \eqref{n2g-B} to the trivalent case:
\begin{eqnarray*}
\partial_w \left\{\left[ 1 + 6s h_0 \right] h_{2g-1} \right\} &=& F^{(1)}_{2g-1}\,.
\end{eqnarray*}

\subsubsection{$n^{-2g-1}$ terms}

The even terms of the expansion for $h(s, w)$ and the terms of the expansion for $f(s, w)$ are governed by the $n^{-2g-1}$ terms of equations (\ref{eq-string-3}-\ref{eq-string-4}).  We find the system (given by taking $\nu=1$ in \eqref{n2gm1-A})
\begin{equation}
\partial_w \left\{ \begin{pmatrix} 1 + 6 s h_0 & 6 s \\ 6 s f_0 & 1 + 6 s h_0 \end{pmatrix} \begin{pmatrix} h_{2g} \\ f_g \end{pmatrix} \right\} = \begin{pmatrix} F^{(1)}_{2g} \\ F^{(2)}_{2g} \end{pmatrix} 
\end{equation}
where $F^{(j)}_{2g}$ are the forcing terms coming from the $n^{-2g-1}$ terms of (\ref{eq-string-3}-\ref{eq-string-4}) which do not contain an $h_{2g}$ or $f_g$ or their derrivatives.

Applying Lemma \ref{lemma-inverse} and Proposition \ref{lemma-solution} we have
\begin{eqnarray} 
\left(
\begin{array}{c}
 h_{2g} \\
  f_{g}
\end{array}
\right) &=& 
\begin{pmatrix} f_{0, w} & h_{0, w} \\ h_{0, w} f_0 & f_{0, w} \end{pmatrix} 
\int \left(
\begin{array}{c}
F^{(1)}_{2g} \\
F^{(2)} _{2g} 
\end{array}
\right) dw.\\
 h_{2g-1} &=& \frac{1}{1 + 6s h_0 } \int  F^{(1)}_{2g-1} dw.
\end{eqnarray}

\begin{lem}
$h_1(s,w) = \frac12 h_{0,w}(s,w)$ 
\end{lem}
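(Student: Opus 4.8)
The plan is to read the identity directly off the explicit Hirota-type representations of $h_0$ and $h_1$ furnished in the Example following Theorem \ref{thm:hf}, rather than routing through the string-equation recursion of Proposition \ref{lemma-solution}; the latter would force us first to compute the forcing term $F_1^{(1)}$, which we can avoid entirely.

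First I would introduce the shorthand
$$ R(s_1,s,w) \;=\; w^2\, e_0\bigl(w^{-1/2}s_1,\; w^{\nu-1/2}s\bigr), $$
so that the third lines of (\ref{h0term}) and (\ref{h1term}), in which the rescaling variable has already been relabelled so that $h_0$ and $h_1$ are genuine functions of the single continuous variable $w$, become
$$ h_0(s_1,s,w) = -\,\partial_{s_1}\partial_w R(s_1,s,w), \qquad h_1(s_1,s,w) = -\tfrac12\,\partial_{s_1}\partial_w^2 R(s_1,s,w). $$
The essential point is that $h_0$ and $h_1$ are both assembled from the same function $R$ by applying $\partial_{s_1}$ once together with, respectively, one and two $w$-derivatives.

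Next I would simply differentiate the expression for $h_0$ in $w$. Because $e_0$ is analytic in its arguments (Theorem \ref{priors}) and $w$ ranges in a region bounded away from $0$, the function $R$ is analytic and its mixed partials commute, so
$$ h_{0,w} = \partial_w\bigl(-\partial_{s_1}\partial_w R\bigr) = -\,\partial_{s_1}\partial_w^2 R = 2\,h_1, $$
which is exactly the assertion. I would then note that this is an identity in $(s_1,s,w)$ and therefore survives the specialization $s_1=0$ in force throughout this subsection; and, although the statement is phrased in the trivalent case $s=s_3$, the computation uses only the general-$\nu$ formulas (\ref{h0term})--(\ref{h1term}), so the relation in fact holds for every $\nu$.

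I do not anticipate a genuine obstacle. The one place to be careful is the bookkeeping: one must check that the ``$w$-derivative'' $h_{0,w}$ is the derivative of $h_0$ viewed as a function of the single continuous variable $w$, and hence coincides with $-\partial_{s_1}\partial_w^2 R$. This is precisely what the relabelling step $w\tilde w\mapsto w$ in the Example accomplishes, after which the claim is immediate from the equality of mixed partials.
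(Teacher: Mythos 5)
Your proof is correct, but it takes a genuinely different route from the paper's. You read the identity directly off the explicit representations (\ref{h0term}) and (\ref{h1term}): both $h_0$ and $h_1$ are built from the single function $R(s_1,s,w)=w^2 e_0(w^{-1/2}s_1,w^{\nu-1/2}s)$ as $-\partial_{s_1}\partial_w R$ and $-\tfrac12\,\partial_{s_1}\partial_w^2 R$ respectively, so $h_{0,w}=2h_1$ is immediate from equality of mixed partials; the relabelling $w\tilde w\mapsto w$ that you invoke is indeed legitimate, since the prefactor $\tilde w^{1-g}$ in Theorem \ref{thm:hf} and the ratio between $w^2$ and $(w\tilde w)^2$ cancel exactly under the change of variable. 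The paper instead integrates the $n^{-2}$ coefficient of the difference string equation (\ref{eq-string-3}) via (\ref{IntFact-odd}), obtaining $h_1=\tfrac12 h_{0,w}+C(s)/(1+6sh_0)$ after simplifying with the leading-order system (\ref{leading-order-system}), and only then appeals to (\ref{h0term})--(\ref{h1term}) to conclude $C(s)\equiv 0$. So both arguments ultimately rest on the same two formulas, but yours uses them as the entire proof while the paper uses them only to fix an integration constant. What your approach buys is brevity and generality: it avoids computing the forcing term $F^{(1)}_1$ altogether and, as you note, establishes the relation for every $\nu$, not just the trivalent case. What the paper's approach buys is a nontrivial consistency check that the continuum string hierarchy really does reproduce the coefficients of the known asymptotic expansion, together with a worked template for the genuinely recursive computations of $h_2$ and $f_1$ in Proposition \ref{prop.h2.f1}, where no such closed-form shortcut exists.
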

\begin{proof}
From (\ref{IntFact-odd}) and the $n^{-2}$ coefficients in the first difference string equation \eqref{eq-string-3} one has
\begin{eqnarray*}
\left[ 1 + 6s h_0 \right] h_{1}  &=& - \int \left(3s h_{0,w}^2 + \frac12 (1 + 6s h_0) h_{0,ww} + 6s f_{0,ww}\right) dw\\
&=& - \left(\frac12 (1 + 6s h_0) h_{0,w} + 6s f_{0,w}\right) + C(s)\\
&=&  \frac12 \left( (1 + 6s h_0) h_{0,w}\right) + C(s) \,\, \mbox{by (\ref{leading-order-system})} \\
h_1 &=&  \frac12 h_{0,w} + \frac{C(s)}{1 + 6s h_0}\,.
\end{eqnarray*}
We see that this agrees with the first two terms of the asymptotic expansion of $a_{n+k, N}$, given in \eqref{h0term} and \eqref{h1term},
\begin{align}
h_0(s,w) = -\frac{\partial^2}{\partial s_1 \partial w} w^2 e_0\left( w^{-1/2} s_1, w^{1/2} s_3\right) \bigg|_{s_1 = 0} \\
h_1(s,w) = - \frac{1}{2} \frac{\partial^3}{\partial s_1 \partial w^2} w^2 e_0\left( w^{-1/2} s_1, w^{1/2} s_3\right) \bigg|_{s_1=0} \,,
\end{align}
from which we can also conclude that $C(s) \equiv 0$. 
\end{proof}

\begin{prop} \label{prop.h2.f1}
\begin{eqnarray*}
\left(
\begin{array}{c}
 h_{2} \\
  f_{1}
\end{array}
\right) &=&
- \begin{pmatrix} f_{0,w} & h_{0, w} \\ h_{0,w} f_0 & f_{0,w} \end{pmatrix}
\left(
\begin{array}{c}
\frac{13}4 s(h_{0,w})^2 + \frac52 s h_0 h_{0,ww} + 4s f_{0,ww} + \frac5{12} h_{0,ww} \\
 0
\end{array}
\right) 
\end{eqnarray*}
\end{prop}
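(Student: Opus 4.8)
The plan is to recognize the statement as the $\nu=1$, $g=1$ specialization of Proposition \ref{lemma-solution}, so that the only genuine work is the explicit evaluation of the two forcing integrals. By the trivalent case of \eqref{IntFact},
\begin{equation*}
\begin{pmatrix} h_2 \\ f_1 \end{pmatrix} = \begin{pmatrix} f_{0,w} & h_{0,w} \\ h_{0,w}f_0 & f_{0,w}\end{pmatrix}\int \begin{pmatrix} F^{(1)}_2 \\ F^{(2)}_2\end{pmatrix}\,dw ,
\end{equation*}
so it suffices to show that $\int F^{(1)}_2\,dw = -\left(\frac{13}{4}s(h_{0,w})^2 + \frac52 s\, h_0 h_{0,ww} + 4s f_{0,ww} + \frac5{12}h_{0,ww}\right)$ and that $\int F^{(2)}_2\,dw = 0$.

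First I would isolate the forcing terms by substituting the formal series \eqref{h1kform}, \eqref{f1kform} into the trivalent string equations \eqref{eq-string-3} and \eqref{eq-string-4} and collecting the coefficient of $n^{-3}$. Tracking which products of the shift operators land at order $n^{-3}$, I separate off the homogeneous part, which by the $\nu=1$ form of \eqref{n2gm1-A} is exactly $\partial_w\{(1+6sh_0)h_2 + 6s f_1\}$ for the first equation and $\partial_w\{6sf_0 h_2 + (1+6sh_0)f_1\}$ for the second; the remaining terms, built from $h_0,h_1,f_0$ and their $w$-derivatives, are precisely $-F^{(1)}_2$ and $-F^{(2)}_2$. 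I would then eliminate $h_1$ via the preceding lemma, $h_1=\frac12 h_{0,w}$ (so $h_{1,w}=\frac12 h_{0,ww}$, $h_{1,ww}=\frac12 h_{0,www}$). This collapses the first forcing to $F^{(1)}_2 = -\big(9s\,h_{0,w}h_{0,ww} + \frac5{12}(1+6sh_0)h_{0,www} + 4s f_{0,www}\big)$, and integrating term by term — one integration by parts on the $(1+6sh_0)h_{0,www}$ term, which generates $-3s(h_{0,w})^2$ — produces the claimed first component, the coefficient $\frac92 s-\frac54 s=\frac{13}{4}s$ of $(h_{0,w})^2$ emerging from that step.

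The main obstacle is the vanishing of the second component. After the $h_1$-substitution the second forcing reduces to $F^{(2)}_2 = -\big(\frac16 f_{0,www} + s f_0 h_{0,www} + s h_0 f_{0,www} + 3s\, h_{0,w}f_{0,ww} + 3s\, f_{0,w}h_{0,ww}\big)$, whose antiderivative is not elementary in isolation. The key move is to integrate the two $f_{0,www}$ terms by parts and to invoke the product rule $\int(f_{0,w}h_{0,ww}+h_{0,w}f_{0,ww})\,dw = f_{0,w}h_{0,w}$, which reduces $\int F^{(2)}_2\,dw$ to $-\frac16\big[(1+6sh_0)f_{0,ww} + 6s f_0 h_{0,ww} + 12s f_{0,w}h_{0,w}\big]$. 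This bracket is exactly the $w$-derivative of the second leading-order equation in \eqref{leading-order-system} (equivalently, the second $w$-derivative of the integrated relation $w=f_0(1+6sh_0)$), hence identically zero. I expect verifying this identity to be the crux, since it is what makes the $f_1$-row of the forcing a pure, vanishing total derivative rather than merely a closed form.

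Finally I would address the constants of integration. Each $\int F^{(j)}_2\,dw$ is determined only up to an additive function of $s$, and these are fixed exactly as in the leading-order analysis of \S\ref{locsde} and in the lemma for $h_1$: demanding that $h_2$ and $f_1$ carry the self-similar form $w^{1/2-2}u_2(sw^{1/2})$ and $w^{-1}z_1(sw^{1/2})$ of Theorem \ref{thm:hf} with $u_2,z_1$ analytic at the origin forces the added functions of $s$ to have incompatible negative powers of $s$ unless they vanish. With the constants set to zero, the two integrals take the stated values and the formula for $(h_2,f_1)$ follows.
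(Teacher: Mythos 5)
Your computational core is correct and follows essentially the paper's own route: you specialize \eqref{IntFact} to $\nu=1$, $g=1$, read off the $n^{-3}$ forcing from \eqref{eq-string-3}--\eqref{eq-string-4}, substitute $h_1=\tfrac12 h_{0,w}$, and integrate. Your expressions for $F^{(1)}_2$ and $F^{(2)}_2$ and their antiderivatives agree with the paper's, and your observation that the second antiderivative equals $-\tfrac16\,\partial_w^2\!\left[f_0(1+6sh_0)\right]$, hence vanishes because $\partial_w\!\left[f_0(1+6sh_0)\right]=1$, is a clean restatement of the paper's appeal to \eqref{leading-order-system}.

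The gap is in your treatment of the constants of integration. At this order the self-similar scalings $h_2=w^{-3/2}u_2(sw^{1/2})$ and $f_1=w^{-1}z_1(sw^{1/2})$ do \emph{not} force negative powers of $s$: tracking the $w$-weights of $f_{0,w}$, $h_{0,w}$ and $f_0h_{0,w}$ in the matrix acting on $(C_1(s),C_2(s))^T$, one finds the admissible forms are $C_1(s)=c_3s^3$ and $C_2(s)=c_2s^2$, which are perfectly analytic at $s=0$. (This contrasts with the leading order, where the analogous argument yields $s^{-1}$ and $s^{-2}$ and analyticity alone does kill the constants.) Parity does not help either, since the corresponding contributions to $z_1$ are even in $\tilde s$. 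The paper therefore needs an extra input: it expands $z_1(s)=-c_2s^2+(-72c_2+6c_3+810)s^4+\dots$ and compares with the enumerative meaning of these Taylor coefficients as counts of genus-one maps with two univalent and $j$ trivalent vertices, namely $0$ for $j=2$ and $19440=810\cdot 4!$ for $j=4$, which forces $c_2=c_3=0$. Without some such normalization your argument establishes the proposition only up to this two-parameter family of additive terms, so the final step needs to be replaced by the combinatorial (or an equivalent) determination of the constants.
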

\noindent {\bf Proof.} 
From (\ref{IntFact}) and the $n^{-3}$ coefficients in the differenced string equations (\ref{eq-string-3}-\ref{eq-string-4}) one has
\begin{eqnarray*}
&&\left(
\begin{array}{c}
 h_{2} \\
  f_{1}
\end{array}
\right) = 
-  \begin{pmatrix} f_{0,w} & h_{0, w} \\ h_{0,w} f_0 & f_{0,w} \end{pmatrix} 
\times \\ &&
\int \bigg(
\begin{array}{c}
 6s (h_1 + h_{0,w}) \left(h_{1,w} + \frac12 h_{0,ww}\right) + (1 + 6s h_0)  \left(\frac12 h_{1,ww} + \frac16 h_{0,www}\right) + \\
 \frac16 (1 + 6s h_0) f_{0, www} + \frac32 s f_{0,w} (2h_{1,w} + h_{0,ww}) + \frac32 s f_{0,ww} (2h_{1} + h_{0, w}) + 
\end{array}
  \\
&& \hspace{5cm} 
\begin{array}{c}
+ 4s f_{0, www}\\ + s f_0 h_{0, www} 
\end{array} \bigg) dw.
\end{eqnarray*}
Substituting for the $h_1$ on the right-hand side, from the previous lemma, one finds that the integrand is an exact derivative so that the right-hand side becomes:
 \begin{equation} \label{RHS}
- \begin{pmatrix} f_{0,w} & h_{0, w} \\ h_{0,w} f_0 & f_{0,w} \end{pmatrix}\begin{pmatrix} \frac{13}4 s(h_{0,w})^2 + \frac52 s h_0 h_{0,ww} + 4s f_{0,ww} + \frac5{12} h_{0,ww} + C_1(s) \\
2s h_{0,w} f_{0,w} + s(h_0  f_{0,ww} + h_{0,ww} f_0 ) +  \frac16 f_{0,ww} +
 C_2(s) 
\end{pmatrix}\,.
\end{equation}
One finds further, using the leading order equations (\ref{leading-order-system}), that the second entry of the right vector in (\ref{RHS}) is identically zero, except possibly for the constant of integration $C_2(s)$.

Converting to equations in the self-similar variable $\tilde{s} = w^{1/2} s$, and focusing on the terms involving $C_1$ and $C_2$  in the expression for $f_1$ (the second component in (\ref{RHS})),  we have
\begin{eqnarray*}
f_1(\tilde{s}) &=& w^{-1} z_1(\tilde{s})\\
&=& \left[\mbox{terms not involving}\,\, C_1, C_2\right] - \left\{( f_0 h_{0,w}) C_1(s) + (f_{0,w}) C_2(s)\right\} \\
z_1(\tilde{s}) &=&  \left[\mbox{terms not involving}\,\, C_1, C_2\right] - \frac12 z_0(\tilde{s}) (u_0(\tilde{s}) + \tilde{s} u_0'(\tilde{s}) ) w^{3/2} C_1(s) 
\\ && \hspace{4cm}
- (z_0(\tilde{s}) + \frac12 \tilde{s} z_0'(\tilde{s}) ) w C_2(s) \,.
\end{eqnarray*}
For this to give an equation for $z_1$ as a function of the self-similar variable $\tilde{s}$ we must have that $C_1(s) = c_3 s^3$ and $C_2(s) = c_2 s^2$.  To pin down $c_1$ and $c_2$ we expand the first 3 terms of the Taylor series for $z_1$, as given just above, and find that:
\begin{equation}
z_1(s) = -c_2 s^2 + \left(-72 c_2 + 6 c_3 + 810 \right) s^4 + \dots\,. 
\end{equation}
On the other hand from the asymptotic expansion (\ref{bnN-expansion}) we have 
\begin{equation}
z_1(s) = f_1( s, 1) = \frac{\partial^2}{\partial s_1^2} e_1\left(s_1, s\right) \bigg|_{s_1=0} \,,
\end{equation}
and thus the combinatorial meaning of the $j$th coefficient in the Taylor expansion of $z_1$ is the number of genus 1 maps, with 2 vertices of valence 1, and $j$ vertices of valence $3$. 
The set of maps with a fixed valence structure on their vertices is bijectively equivalent to the set of pairs of permutations $(\omega, \sigma)$, where $\sigma$ is a fixed permutation whose cycle structure matches the valence structure of the vertices and $\omega$ is a fixed point free product of disjoint transpositions, satisfying a further condition equivalent to connectedness of the corresponding maps.  It is straightforward to partition these pairs by the genus.  More details on this equivalence can be found in  \cite{BI} and \cite{zvonkin-how}.  This allows one to efficiently count the number  of maps corresponding to the first few Taylor coefficients of $z_1$. 
The numbers of genus 1 trivalent maps for $j=2$ and $4$ are found to be $0$ and $810 \cdot 4! = 19440$ respectively.  Therefore we conclude that $c_2 = c_3 = 0$.
$\Box$

\subsection{Toda Equations for $j = 3$}

In the trivalent case we find that the leading order equations become
\begin{eqnarray}
- \frac{d}{ds}h_0(s,w) &=& 3 \, \partial_w \left[f_0^2 + h_0^2 f_0\right]\\
- \frac{d}{ds}f_0(s,w) &=& 3 \,  \partial_w (h_0 f_0^2)  + 3 \, (\partial_w h_0) \left[f_0^2 + h_0^2 f_0\right]
\end{eqnarray}
 One could integrate these equations and, after determining the constants of integration, show that they are equivalent when $w=1$ to \eqref{ideal1} and \eqref{ideal2}.  As this has already been done for the leading order of the continuum difference string equations in section \ref{locsde} we omit the analogous computation here.

The higher order equations (for $g > 0$) are

\begin{eqnarray} 
\nonumber - \frac{d}{ds_3} 
\left(
\begin{array}{c}
  h_{2g}  \\
  f_g
\end{array}
\right) &=&
\label{Toda-triang} 3 \partial_w \left\{ \[
\begin{array}{cc}
2 h_0 f_0   &  (2 f_0 + h_0^2)   \\
 f_0 (2f_0 + h_0^2) &   2h_0 f_0
\end{array}
\] \left(
\begin{array}{c}
  h_{2g}  \\
  f_g
\end{array}
\right)
 \right\} + \\ \nonumber &&
 + 3 (2f_0 + h_0^2) \left(
\begin{array}{c}
  0  \\
  h_{0 w} f_g - f_{0 w} h_{2g}
\end{array}
\right) +
\left(
\begin{array}{c}
  \mbox{Forcing}_g^{(1)}  \\
  \mbox{Forcing}_g^{(2)}
\end{array}
\right)\,.
\end{eqnarray}
\medskip

\begin{rem} We now observe that, for the case of $g = 1$, by using Proposition \ref{prop.h2.f1} to re-express the terms in the second summand above in terms of $h_0, f_0$ and their $w$-derivatives, these terms may be absorbed into the forcing and those homogeneous terms that remain are now in pure conservation law form as was asserted in remark 
\ref{rem:3.5}. 
\end{rem}

\subsubsection{Odd Terms}

The odd terms of the expansion of $h(s, w)$ also generate a hierarchy of (scalar) quasi-linear pde, which are recurisevly decoupled from the even terms.  The odd terms do appear in the forcing terms for the non-homogeneous equations determining $h_{2g}$ and $f_g$ described in the previous subsection.

The $n^{-2g+1}$ term of the expansion of (\ref{h}) is
\begin{equation}
- \frac{dh_{2g-1}}{ds} = 3 \partial_w \left( 2 h_0 h_{2g-1} f_0 \right) + \mbox{Forcing}_{2g-1}\,.
\end{equation}

\section{Determining $e_g$} \label{sec:4}

Recalling the basic identity (\ref{Hirota})
\begin{equation}
b_n^2 = \frac{\tau^2_{n+1} \tau^2_{n-1}}{\tau^4_n} b_n^2(0) \,,
\end{equation}
we have, by taking logarithms, 
\begin{equation} \label{tauk-2nddiff} 
\log \tau^2_{n+1} - 2 \log \tau^2_n + \log \tau^2_{n-1} = \log(b^2_n) - \log(b^2_n)(0)\,,
\end{equation}
where the initial value $b^2_n(0) = n$ is given by the recursion relations  of the Hermite polynomials.  
As in \cite{EMP08}, we can use formula \eqref{tauk-2nddiff} to recursively determine $e_g$ in terms of solutions to the continuum equations.  We use the asymptotic expansion of $b^2_n$ which has the form (\ref{f1}):
\begin{align} \nonumber
\frac{1}{n} b_n^2 &= \sum_{g=0}^\infty  f_g(s,1) n^{-2g} \\
&= \sum_{g=0}^\infty z_g(s) n^{-2g}\,,
\end{align}
where we have used the self-similar scaling:
\begin{equation}
f_g( s, w) = w^{1-2g} z_g(s w^{1/2}) \,, \quad \mbox{giving} \quad 
f_g(s, 1) = z_g(s) \,.
\end{equation}
In this section, unless otherwise stated we will use $s$ to denote $s_3$.
It should also be noted that the left hand side of equation (\ref{tauk-2nddiff}) has the form of a centered second difference, $\Delta_1 \tau^2_{n,n} - \Delta_{-1}\tau^2_{n.n}$.

We introduce here the classes of \emph{iterated integrals of rational functions} (or iir for short).  These classes are defined inductively in terms of the variable $z = z_0$ regarded as an independent variable.  To begin with, the class contains rational functions of $z$.  One then adds integrals of these rational functions with respect to $dz$.  Next one considers the vector space of polynomials in products of these integrals over the field of rational functions of $z$ and augments the space by integrals, with respect to $dz$, of these functions.  One continues this iterative process up to any given finite stage.  (In the classical literature these classes are sometimes referred to as {\it abelian functions}.)

It follows recursively from \eqref{IntFact} and \eqref{IntFact-odd} that $u_{2g}, u_{2g-1}$ and $z_g$ are possibly in a larger class of functions given by \emph{iterated integrals of rational functions as well as square roots of rational functions} of $z_0$.

For general $g$ we have the theorem:
\begin{thm} \label{thm5.2}
The function $e_g(s_3)$ satisfies
\begin{align}\label{thm2.2-eq}
e_g(s_3) &= \frac{4}{\gamma_2+1} \left[ s_3^{-\gamma_1 -1} \int_0^{s_3} s^{\gamma_1} H_g(s) ds - 
s_3^{-\gamma_1-\gamma_2-2} \int_0^{s_3} s^{\gamma_1+\gamma_2+1} H_g(s) ds \right] +
\\ &\phantom{=}\nonumber + C_1 s_3^{-\gamma_1-1} + C_2 s_3^{-\gamma_1-\gamma_2-2}\,, 
\end{align}
where $H_g(s)$ is a collection of recursively defined drivers for $e_g$ involving terms depending on
 $z_j$ for $j\leq g$ and $e_j$ for $j<g$, 
with 
\begin{equation} \label{gamma1-gamma2}
(\gamma_1, \gamma_2) = (1-4g, 1) \quad \mbox{or} \quad (3-4g, -3)\,,
\end{equation}
and where $C_1$ and $C_2$ are constants of integration determined either by the analyticity of $e_g(s_3)$ or the initial Taylor coefficients of $e_g(s_3)$ determined by some other method (for instance direct counting of maps with few vertices).  Either choice for the pair $(\gamma_1, \gamma_2)$ in \eqref{gamma1-gamma2} produce the same expression.  Moreover, assuming that $z_j$ for $j \leq g$ are class iir, $e_g(s_3)$
is also in the class of iterated integrals of rational functions.
\end{thm}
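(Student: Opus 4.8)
The plan is to convert the identity (\ref{tauk-2nddiff}) into a recursively solvable second--order Euler ODE for each $e_g$, and then to read off (\ref{thm2.2-eq}) as its variation--of--parameters solution. First I would rewrite the left--hand side of (\ref{tauk-2nddiff}). With $x=1$ the Hirota identity (\ref{Hirota}) holds at fixed $N=n$, so the three tau functions there carry indices $(n{+}1,n),(n,n),(n{-}1,n)$; by Proposition \ref{tauscale} these equal $\tau^2_{n+k,n+k}$ evaluated at the rescaled arguments, whence, in the notation (\ref{Delta}), the left side is the symmetric second difference $\Delta_1\log\tau^2_{n,n}+\Delta_{-1}\log\tau^2_{n,n}$. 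Feeding in the expansion (\ref{I.004}) and adding the $k=1$ and $k=-1$ series cancels the odd powers of $1/n$ and doubles the even ones, so that, setting $s_1=0$, $\nu=1$ and writing $F_g(w)=w^{2-2g}e_g(w^{1/2}s)$, the left side becomes $\sum_{g\ge 0}n^{2-2g}\sum_{j\ge 2\,\mathrm{even}}\tfrac{2}{j!}\,\partial_w^{\,j}F_g|_{w=1}\,n^{-j}$. The right side $\log(b_n^2/n)=\log\big(\sum_g z_g n^{-2g}\big)$, by (\ref{f1}) and the self--similar reduction $f_g(s,1)=z_g(s)$, is already a series in $n^{-2}$.

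Next I would match the coefficient of $n^{-2g}$. On the left it is $\partial_w^2F_g|_{w=1}$ (the term $j=2$) plus strictly lower--order contributions $\partial_w^{2(g-g'+1)}F_{g'}|_{w=1}$ with $g'<g$; by the self--similar structure of Theorem \ref{thm:hf} each such $w$--derivative at $w=1$ reduces to a differential expression in $e_{g'}$ in the Euler variable $\vartheta=s\,\frac{d}{ds}$. On the right it is the degree--$g$ coefficient of $\log\sum_g z_g n^{-2g}$, a universal polynomial in $z_0,\dots,z_g$. Collecting the lower--order pieces into a single driver $H_g(s)$ depending only on the $z_j$ for $j\le g$ and the $e_{g'}$ for $g'<g$, the matched relation reads $\partial_w^2F_g|_{w=1}=H_g(s)$. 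A direct computation with $F_g(w)=w^{2-2g}e_g(w^{1/2}s)$ turns the left side into the Euler operator $\big[(2{-}2g)+\tfrac12\vartheta\big]\big[(1{-}2g)+\tfrac12\vartheta\big]e_g$, i.e. the inhomogeneous equation $\tfrac14\vartheta^2 e_g+\tfrac{3-4g}{2}\vartheta e_g+(2{-}2g)(1{-}2g)e_g=H_g$. Its indicial roots are $\gamma=4g{-}2$ and $\gamma=4g{-}4$, which are exactly $-\gamma_1{-}1$ and $-\gamma_1{-}\gamma_2{-}2$ for each choice in (\ref{gamma1-gamma2}); since the two choices only interchange $s^{4g-2}\leftrightarrow s^{4g-4}$ in the solution basis, they yield identical $e_g$.

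I would then solve this Euler equation by variation of parameters against the basis $s^{4g-2},s^{4g-4}$, whose Wronskian is $-2s^{8g-7}$. Carrying this out reproduces precisely the two integrands $s^{1-4g}H_g$ and $s^{3-4g}H_g$ of (\ref{thm2.2-eq}), together with the overall factor $\tfrac{4}{\gamma_2+1}$ (the $4$ coming from the leading coefficient $\tfrac14$ of $\vartheta^2$, the Wronskian supplying the remaining normalisation). The homogeneous part is the $C_1 s_3^{-\gamma_1-1}+C_2 s_3^{-\gamma_1-\gamma_2-2}$ terms, and $C_1,C_2$ (with the lower limit $0$) are pinned down by requiring $e_g$ to be analytic and even at $s_3=0$, equivalently by matching the first map counts $\kappa^{(3)}_g(m)$, exactly as in the genus--one computation of Proposition \ref{prop.h2.f1}.

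Finally, for the iir assertion I would argue by induction on $g$, taking $z_0$ as independent variable via (\ref{g_eqn}), which at $x=1$ reads $z_0^2-1=72\,s^2z_0^3$. Differentiating gives $s\,\frac{dz_0}{ds}=\frac{2z_0(z_0^2-1)}{3-z_0^2}$, rational in $z_0$, so $\vartheta$ acts as (rational in $z_0$)$\,\cdot\frac{d}{dz_0}$ and preserves the iir class; hence $H_g$, built from the $z_j$ (assumed iir) and the $e_{g'}$ (iir by induction), is itself iir, and $\frac{ds}{dz_0}=\frac{3-z_0^2}{144\,s\,z_0^4}$ is (rational)$/s$. Because the exponents $1-4g$ and $3-4g$ are odd while $H_g$ and $e_g$ are \emph{even} in $s$ (the $\mathbb{Z}_2$ parity $s\mapsto-s$ of the remark after Theorem \ref{priors}, which likewise forces $z_g$ even), the products $s^{1-4g}\frac{ds}{dz_0}$ and $s^{3-4g}\frac{ds}{dz_0}$ equal $(s^2)^{\bullet}\cdot(\text{rational})$, hence are rational in $z_0$ with \emph{no} surviving square root. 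Thus each integrand of (\ref{thm2.2-eq}) is iir in $z_0$, each integral is iir, and multiplication by the rational prefactors $s^{4g-2},s^{4g-4}$ keeps $e_g$ in the iir class. I expect the main obstacle to be precisely this cancellation: individually the $z_g$ and the integrating factor carry $\sqrt{\text{rational}}$ terms, and only the parity of $H_g$ against the odd exponents $1-4g,3-4g$ removes them; establishing that parity rigorously, and checking convergence of $\int_0^{s_3}$ at the origin against the genus--$g$ vanishing order of $e_g$ (which is what fixes $C_1,C_2$), is where the care is needed.
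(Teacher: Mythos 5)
Your proposal is correct and takes essentially the same route as the paper: both reduce \eqref{tauk-2nddiff} order by order in $n^{-2}$ to the Euler equation $(2-2g)(1-2g)e_g + \tfrac14(7-8g)s_3e_g' + \tfrac14 s_3^2 e_g'' = H_g$, and the paper's two successive integrating factors $s^{\gamma_1}$, $s^{\gamma_2}$ followed by an interchange of the order of integration is computationally equivalent to your variation-of-parameters solution against the basis $s^{4g-2}, s^{4g-4}$. Your closing observation that the odd exponents $1-4g$, $3-4g$ pair with the evenness of $H_g$ in $s$ to keep the $z_0$-integrands rational is a worthwhile detail that the paper's one-line iir argument leaves implicit.
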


Here $s$ is a variable of integration, although it plays, in the integrand, the role of $s_3$.

\begin{proof}
We start from the expression (\ref{tauk-2nddiff}) and inductively assume that all necessary $z_j$ have been determined (we will need them for $j\leq g$, and will also need that, by induction, $e_j$ has been determined for $j<g$).
The left hand side of (\ref{tauk-2nddiff}) is a second order centered difference and so has an expansion for large $n$ involving only even derivatives of the spatial variable $w$.  We have 
\begin{align}
\sum_{g \geq 0} \frac{1}{n^{2g}} &\left[ \frac{\partial^2}{\partial w^2} w^{2-2g} e_g + 
\frac{1}{12} \frac{\partial^4}{\partial w^4} w^{4-2g} e_{g-1} + \dots + 
\frac{2}{(2g)!} \frac{\partial^{2g}}{\partial w^{2g}} w^2 e_0 \right]_{w=1} = \log( z_0 ) +
\nonumber \\ & \qquad 
+ \sum_{j=1}^\infty \frac{1}{n^{2j}} \left[ \frac{z_j}{z_0} - \frac{z_{j-1} z_1}{z_0^2} + \dots 
+ \frac{(-1)^{j+1}}{j} \frac{z_1^j}{z_0^j} \right] 
\end{align}
where on the left hand side $e_h = e_h(w^{1/2} s_3)$.

In the coefficient of $n^{-2g}$ on the left hand side one expands the term containing $e_g$ as a second order linear differential operator applied to $e_g$.  The remaining contributions in the equation are terms that have been recursively determined. More precisely, expanding the left hand side of
\begin{equation}
\frac{\partial^2}{\partial w^2} w^{2-2g} e_g( w^{1/2} s_3) \bigg|_{w=1} = H_g(s_3) 
\end{equation}
we find
\begin{equation}
(2-2g) (1-2g) e_g + \frac{1}{4} (7 - 8 g) s_3 e_g' + \frac{1}{4} s_3^2 e_g'' = H_g(s_3) \,.
\end{equation}
We then multiply by $s_3^{\gamma_1}$ where 
\begin{equation}
\gamma_1 = 1 - 4 g\,, 3 - 4g  
\end{equation}
and integrate once to find 
\begin{equation}
\frac{(2-2g) (1-2g) }{\gamma_1 + 1} s_3^{\gamma_1+2} e_g + \frac{1}{4} s_3^{\gamma_1+2} e_g' = 
\int_0^{s_3} s^{\gamma_1} H_g(s) ds + C_1' \,.
\end{equation}
Next multiply both sides by $s_3^{\gamma_2}$ with 
\begin{equation}
\gamma_2 = 1\,, -3\,,
\end{equation}
respectively for each choice of $\gamma_1$,
and integrate once to find 
\begin{equation}
\frac{1}{4} s_3^{\gamma_1+\gamma_2 + 2} e_g = \int_0^{s_3} {s'}^{\gamma_2} \int_0^{s'} s^{\gamma_1} H_g(s) ds ds' + C_1 s_3^{\gamma_2+1} + C_2 \,.
\end{equation}
We conclude by switching the order of integration in the double integral and computing the integral with respect to $s'$.  

Finally we note that the $H_g(s_3)$ are functions of $z_j$ for $j\leq g$, and $e_j$ for $j<g$, and so provided that the $z_j$ are in the class of iir functions, we have recursively that $H_g$ is an iir function.  Therefore a consequence of formula \eqref{thm2.2-eq} and this assumption is that $e_g$ will be an iir function.

\end{proof}

\begin{cor}
If the driver terms have Taylor expansion
\begin{equation} H_g(s_3) = \sum_{k=1}^\infty \eta_g(2k) s_3^{2k} \end{equation}
then the Taylor coefficients of the $e_g(s_3)$ take the form
\begin{equation} \kappa^{(3)}_g(2k) = \eta_g(2k) \frac{(2k)!}{(1-2g+k) (2-2g+k)} \end{equation}
for $k \neq 2g -1, 2g-2$\,.
\end{cor}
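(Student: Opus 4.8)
The plan is to bypass the closed-form integral representation (\ref{thm2.2-eq}) and work directly with the second-order Euler-type ODE that underlies it. Recall from the proof of Theorem \ref{thm5.2} that $e_g(s_3)$ satisfies
\begin{equation*}
(2-2g)(1-2g)\, e_g + \tfrac14 (7-8g)\, s_3 e_g' + \tfrac14 s_3^2 e_g'' = H_g(s_3).
\end{equation*}
First I would record that $e_g$ is an \emph{even} function of $s_3$ (see the remark following Theorem \ref{priors}), and that, by the generating-function normalization (\ref{genusexpA}) with $x=1$ so that $s_3 = t_3$, its Taylor expansion is $e_g(s_3) = \sum_{k \ge 1} \frac{\kappa^{(3)}_g(2k)}{(2k)!}\, s_3^{2k}$, since the odd coefficients vanish and $(-1)^{2k}=1$.

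Next I would substitute this series into the ODE above. Because the left-hand operator is of Euler type, $\tfrac14 s_3^2\partial^2 + \cdots$, it acts diagonally on monomials: writing $a_{2k} = \kappa^{(3)}_g(2k)/(2k)!$, the monomial $s_3^{2k}$ is mapped to $L(2k)\, s_3^{2k}$ with
\begin{equation*}
L(2k) = (2-2g)(1-2g) + \tfrac14 (7-8g)(2k) + \tfrac14 (2k)(2k-1).
\end{equation*}
Matching the coefficient of $s_3^{2k}$ against $H_g(s_3) = \sum_k \eta_g(2k) s_3^{2k}$ gives $a_{2k}\, L(2k) = \eta_g(2k)$ for every $k$. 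The one genuine computation is to verify that the indicial quadratic factors as $L(2k) = (1-2g+k)(2-2g+k)$; a short expansion of both sides as polynomials in $g$ and $k$ confirms this. Solving for $a_{2k}$ and multiplying through by $(2k)!$ then yields exactly $\kappa^{(3)}_g(2k) = \eta_g(2k)\,(2k)!/\big((1-2g+k)(2-2g+k)\big)$.

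The only point requiring care --- and the reason for the stated exclusion --- is that this inversion is legitimate precisely when $L(2k)\neq 0$. The two factors vanish at $k = 2g-1$ and $k = 2g-2$ respectively, which are exactly the excluded values in the statement; these are the resonances of the Euler operator and correspond to the two homogeneous power solutions $s_3^{-\gamma_1-1}$ and $s_3^{-\gamma_1-\gamma_2-2}$ (equal, for the choice $(\gamma_1,\gamma_2)=(1-4g,1)$, to $s_3^{2(2g-1)}$ and $s_3^{2(2g-2)}$) carrying the integration constants $C_1, C_2$ in (\ref{thm2.2-eq}). I expect the main (very mild) obstacle to be purely bookkeeping: keeping the factorial normalization of $\kappa^{(3)}_g$ consistent with the plain power-series coefficients $a_{2k}$, and confirming that the Euler operator's eigenvalue is exactly the product appearing in the denominator so that no spurious factors survive. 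At the resonant indices the corollary rightly makes no claim, since there the coefficient of $e_g$ is fixed by the constants of integration rather than by $\eta_g$.
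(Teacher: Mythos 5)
Your proof is correct, and it takes a recognizably different route from the paper's. The paper obtains the corollary by applying the power rule term by term inside the explicit double-integral representation \eqref{thm2.2-eq}: with $(\gamma_1,\gamma_2)=(1-4g,1)$ one computes $\frac{4}{\gamma_2+1}\bigl[\frac{1}{2-4g+2k}-\frac{1}{4-4g+2k}\bigr]=\frac{1}{(1-2g+k)(2-2g+k)}$, and the excluded indices are exactly where an integrand has exponent $-1$ so the power rule fails. You instead return to the Euler-type ODE from the proof of Theorem \ref{thm5.2} and diagonalize it on monomials; your indicial computation $L(2k)=(2-2g)(1-2g)+(3-4g)k+k^2=(1-2g+k)(2-2g+k)$ checks out, and matching coefficients gives the same formula. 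What your route buys: the factorization appears once as the eigenvalue of the operator rather than emerging from a partial-fraction recombination, the resonances $k=2g-1,\,2g-2$ are manifestly the roots of the indicial polynomial (and you correctly identify them with the homogeneous powers $s_3^{2(2g-1)}$, $s_3^{2(2g-2)}$ carrying $C_1,C_2$), and the argument is visibly independent of which admissible pair $(\gamma_1,\gamma_2)$ one chooses. What the paper's route buys is that it is a direct corollary of the already-stated solution formula, with no need to re-derive or re-invoke the ODE. The only point you might add for completeness is a one-line remark that termwise substitution of the series into the ODE is justified by the analyticity of $e_g$ near $s_3=0$ (Theorem \ref{priors}(a)); otherwise the bookkeeping between $a_{2k}$ and $\kappa^{(3)}_g(2k)=(2k)!\,a_{2k}$ is handled correctly.
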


The proof follows from integration using the power rule. 
The exceptions, or possible resonances, at $k = 2g-1$ and $2g-2$ occur at precisely the powers of $s$ associated with the constants of integration $C_1$ and $C_2$. To demonstrate the usefulness of Theorem \ref{thm5.2}, we will now compute explicity expressions for $e_0$ and $e_1$ as functions of the fundamental auxiliary variable $z_0$.  These are analogous to the expressions determined for the $e_g$ in \cite{EMP08}.

\subsection{Example: $g=0$}

In the case of $g=0$, we have that $H_0(s) = \log(z_0)$, and Theorem \ref{thm5.2} gives 
\begin{equation}
e_0(s_3) = 2 \left[ s_3^{-2} \int_0^{s_3} s \log(z_0) ds - s_3^{-4} \int_0^{s_3} s^3 \log(z_0) ds \right] 
+ C_1 s_3^{-2} + C_2 s_3^{-4} \,.
\end{equation}
We see immediately that the constants must be zero to preserve analyticity near $s_3 = 0$.

Following \cite{EMP08} we write both integrals in terms of $z_0$, determined as a function of $s$ by equation (\ref{g_eqn}):
\begin{equation}\label{resultant}
1 = z_0^2 - 72 s^2 z_0^3\,.
\end{equation}
One can then solve this equation for $s$ as a function of $z_0$:
\begin{equation}\label{ssub}
s = \sqrt{ \frac{z_0^2 -1}{72 z_0^3} } \,,
\quad
s ds = -\frac{1}{144} \frac{(z_0^2-3)}{z_0^4} dz_0\,.
\end{equation}
Thus we find:
\begin{align} \nonumber
e_0(s_3) &= - \frac{z_0^3}{(z_0^2-1)} \int_1^{z_0} \frac{(z^2-3)}{z^4} \log(z) dz + 
\frac{z_0^6}{(z_0^2-1)^2} \int_1^{z_0} \frac{(z^2-1) (z^2-3)}{z^7} \log(z) dz
\\ \nonumber  &= - \frac{z_0^3}{(z_0^2-1)} \left[ -\frac{(z_0^2-1)}{z_0^3} \log(z_0) + \frac{1}{3} \frac{(z_0-1)^2 (2 z_0+1)}{z_0^3} \right] +
\\ \nonumber &\phantom{=}\hspace{1cm}
+ \frac{z_0^6}{(z_0^2-1)^2} \left[ -\frac{1}{2} \frac{(z_0^2-1)^2}{z_0^6} \log(z_0)
+ \frac{1}{12} \frac{(z_0^2-1)^3}{z_0^6} \right] 
\\ &= \frac{1}{2} \log(z_0) + \frac{1}{12} \frac{(z_0-1)(z_0^2 - 6 z_0 - 3)}{(z_0+1)} \,. \label{e0}
\end{align}

\subsection{Example: $g=1$}

In the case of $g=1$, we have that 
\begin{equation}
H_1(s) = \frac{z_1}{z_0} - \frac{1}{12} \frac{\partial^4}{\partial w^4} w^2 e_0(s w^{1/2}) \bigg|_{w=1} \,,
\end{equation}
where $z_1(s)$ is given by Proposition \ref{prop.h2.f1} as follows:  from that proposition, after a bit of manipulation using the first component of equation (\ref{leading-order-system}), we have 
\begin{equation}
f_1( s, w) = w^{-1} z_1(s w^{1/2}) = - \frac{3}{2} s f_0 h_{0, w} f_{0, ww} - \frac{3}{4} s f_0 h_{0, w}^3 \,,
\end{equation}
into which we subsitute the self-similar scalings $f_0 = w z_0(s w^{1/2})$ and $h_0 = w^{1/2} u_0(s w^{1/2})$ ( see Theorem \ref{thm:hf} ), expand the $w$-derivatives and then set $w=1$ to find
\begin{equation}
z_1(s) = - \frac{3}{2} s z_0 \left( \frac{1}{2} u_0 + \frac{1}{2} s u_0' \right) \left( \frac{3}{4} s z_0' + \frac{1}{4} s^2 z_0'' \right) - \frac{3}{4} s z_0 \left( \frac{1}{2} u_0 + \frac{1}{2} s u_0' \right)^3 \,;
\end{equation}
finally we use the algebraic relations (\ref{alg.1}) and (\ref{alg.2}), and the expression (\ref{ssub}) to eliminate all but $z_0$ from the formula for $z_1$, and we have
\begin{equation} \label{z1}
z_1(s) = \frac{1}{4} \frac{(z_0^2 - 1)^2 (z_0^2 + 9) z_0 }{(z_0^2-3)^4} \,.
\end{equation}
Theorem \ref{thm5.2} gives 
\begin{equation}
e_1(s_3) = 2 \left[ s_3^2 \int_0^{s_3} s^{-3} H_1(z_0) ds - \int_0^{s_3} s^{-1} H_1(z_0) ds \right] 
+ C_1 s_3^2 + C_2\,. 
\end{equation}
We make the change of variables, as before, to integrals with respect to $z_0$ using (\ref{ssub}).  We then have: 
\begin{eqnarray} \label{pre-e1}
e_1(s_3) &=& 
- \frac{(z_0^2-1)}{z_0^3} \int_1^{z_0} \frac{z^2 (z^2-3)}{(z^2-1)^2} H_1(z) dz 
+ \int_1^{z_0} \frac{(z^2-3)}{z (z^2-1)} H_1(z) dz +
\\ && \hspace{4cm}
+ C_1 s_3^2 + C_2. \nonumber
\end{eqnarray}
A direct calculation of $H_1(z_0)$ using (\ref{resultant}), (\ref{e0}) and (\ref{z1}) shows that the integrands of both integrals in (\ref{pre-e1}) are regular at $z_0 = 1$ (which corresponds to $s_3 = 0$). 
Hence, the vanishing of $e_1(0)$, which follows from (\ref{I.002}), implies that $C_2 = 0$. To determine $C_1$ we will need to appeal to the Taylor expansion of $e_1$ and its combinatorial interpretation.

Using the explicit form of $H_1$ as a  function of $z_0(s_3)$ one finds from (\ref{pre-e1}) that
\begin{eqnarray}
e_1(s_3) &=& - \frac{1}{24} \log\left( \frac{3}{2} - \frac{z_0^2}{2} \right) + \left( \frac{C_1}{72} - \frac{1}{48} \right) \frac{(z_0^2-1)}{z_0^3} 
\\ &=& - \frac{1}{24} \log\left( \frac{3}{2} - \frac{z_0^2}{2} \right) + \left( C_1 - \frac{3}{2} \right) s_3^2
\\ \label{e1} &=& - \frac{1}{24} \log\left( \frac{3}{2} - \frac{z_0^2}{2} \right)\,, 
\end{eqnarray}
where we have choosen $C_1 = 3/2$ so that the coefficient of $s_3^2$ in $e_1(s_3)$ will be $3/2!$, as can be checked by directly calculating the second derivative of (\ref{e1}) with respect to $s_3$ using the differential relation in (\ref{ssub}) and then evaluating at $s_3 = 0$ (equivalently $z_0 =1$). This value of the second order coefficient is required by the fact that there are three genus one maps with two vertices of valence 3.

The procedure can be continued for as far as one wishes; the only real constraint is the ability to find explicit expressions for the $z_g$ needed.  We state without the computations the formula derived for $e_2(s_3)$:
\begin{equation}
e_2(s_3) = \frac{1}{960} \frac{(z_0^2-1)^3 (4 z_0^4 - 93 z_0^2 - 261)}{(z_0^2-3)^5} \,.
\end{equation}
The formulas we have derived for $e_0, e_1,$ and $e_2$ as functions of $z_0$ have much in common with those found in the case of even times in \cite{EMP08}.  This suggests an extension of the global result, proven for the case of even times in \cite{Er09}); namely, we expect that 
for $g > 1$, $e_g(s_3)$ is a rational function of $z_0^2$ with singularities occurring only  at $z_0^2 = 3$, where $z_0$ is related to $s_3$ by 
\begin{equation*}
1 = z_0^2 - 72 s_3^2 z_0^3 \,.
\end{equation*}

\subsection{Taylor Coefficients of $e_0(s_3)$ and $e_1(s_3)$}

From formulas for $e_g$ in terms of $z_0$ it is a straightforward procedure to derive expressions for the Taylor coefficients using contour integrals.  The trick as in \cite{EMP08} is to again change variables to integrals with respect to $z_0$.  

One represents the Taylor coefficients of $e_0(s_3)$  as contour integrals and substitutes the expression (\ref{e0}) in terms of $z_0$.  The $2j$th Taylor coefficient is given by:
\begin{align}
\frac{K_{2j}^{(0)}}{(2j)!} &= \frac{1}{2\pi i} \oint_{s_3\sim 0} \frac{e_0}{s_3^{2j+1}} ds_3 \\ \label{3.36}
&=\frac{1}{j} \frac{1}{2\pi i} \oint_{z \sim 1} \frac{de_0}{dz} s_3^{-2j} dz \\
&=\frac{1}{j} \frac{1}{2\pi i} \oint_{z \sim 1} \frac{(z^2-3)(z^2-2z-1)}{6 z(z+1)^2} \left[ 72 \frac{z^{3}}{(z^2-1)} \right]^{j} dz \\
&=\frac{3^{2j-1} 2^{3j-2}}{j} \frac{1}{2\pi i} \oint_{z \sim 1} z^{3j-1} (z-1)^2 (z^2-3) (z^2-2z-1) \frac{dz}{(z^2-1)^{j+2}} \,,
\end{align}
where we have used an integration by parts to find line (\ref{3.36}), and 
the notation $\displaystyle \oint_{z\sim 1} $ indicates that the integral is over a small circle in the complex plane, oriented counter clockwise, and containing $z = 1$.
We then note that 
\begin{equation} (z-1)^2 (z^2-3) (z^2-2z-1) = z^6 - 4 z^5 + z^4 + 12 z^3 - 13 z + 3 \,.\end{equation}
We insert this into the integral together with the change of variables $\zeta = z^2$:
\begin{align} \nonumber
\frac{K_{2j}^{(0)}}{(2j)!} &= \frac{3^{2j-1} 2^{3j-2} }{j} \frac{1}{2\pi i} \oint_{\zeta \sim 1} 
\zeta^{(3j-2)/2} \left( \zeta^3 - 4 \zeta^{5/2} + \zeta^2 + 12 \zeta^{3/2} - 13 \zeta + 3\right) \frac{d\zeta}{(\zeta-1)^{j+2}} \\ \nonumber
&= \frac{3^{2j-1} 2^{3j-2} }{j} \left( \binom{\frac{3j}{2} +2}{j+1} - 4 \binom{\frac{3j}{2} + \frac{3}{2} }{j+1} + \binom{\frac{3j}{2} +1}{j+1} + 12 \binom{\frac{3j}{2} +\frac{1}{2}}{j+1} \right.
\\&\phantom{=} \left. \qquad\qquad\qquad\qquad \nonumber
- 13 \binom{\frac{3j}{2}}{j+1} + 3 \binom{\frac{3j}{2}-1}{j+1} \right) \\
&= \frac{3^{2j} 2^{3j}}{j} \frac{\Gamma\left(\frac{3j}{2}\right)}{\Gamma\left(\frac{j}{2}\right) \Gamma\left( 3 +j\right)} \,.
\end{align}

Likewise we can express the Taylor coefficients of the genus one expansion as contour integrals:
\begin{equation}
\frac{K_{2j}^{(1)}}{(2j)!} = \frac{3^{2j-1} 2^{3j-2}}{j 2\pi i} 
\oint_{\zeta \sim 1} -\frac{\zeta^{(3j+1)/2}}{(\zeta-3)} \frac{d\zeta}{(\zeta-1)^j}\,.
\end{equation}

\medskip

{\bf Acknowledgement.}  We thank the referees and the editor for their very careful reading of the manuscript.
\medskip

\end{document}